\numberwithin{equation}{section}
\newtheorem{theorem}{Theorem}[section]
\newtheorem{lemma}[theorem]{Lemma}
\newtheorem{corollary}[theorem]{Corollary}
\newtheorem{remark}[theorem]{Remark}
\newtheorem{proposition}[theorem]{Proposition}
\newtheorem{prob}{RH problem}[section]
\newtheorem{prob2}{$\bar{\partial}$-RH problem}[section]
\newtheorem{prob3}{$\bar{\partial}$-problem}[section]
\numberwithin{equation}{section}
\DeclareMathOperator*{\im}{Im}
\DeclareMathOperator*{\re}{Re}
\begin{document}



\begin{titlepage}
\title{\bf{The Painlev\'{e}-type  asymptotics of the complex mKdV equation with   finite density initial data
\footnote{Corresponding author, E-mail addresses: llwen@sspu.edu.cn (Lili Wen).\protect\\
\hspace*{3ex}$^\dag$E-mail addresses: faneg@fudan.edu.cn (Engui Fan).}
}}

\author{Lili Wen$^{a,*}$, Engui Fan$^{b,\dag}$\\
\small \textit{$^{a}$Department of Mathematics, Shanghai Polytechnic University, Shanghai, 201209, P.R. China.}\\
\small \textit{$^{b}$School of Mathematical Sciences and Key Laboratory of Mathematics for Nonlinear Science,}\\
\small \textit{Fudan University, Shanghai 200433, P.R. China.}
\\
\date{}}
\thispagestyle{empty}

\end{titlepage}
\maketitle

\vspace{0.cm}
\begin{center}

\parbox{15cm}{\small
{\bf Abstract:}
 	 We  consider the Cauchy problem for the defocusing
   complex mKdV equation with finite density initial data
   \begin{align*}
&q_t+\frac{1}{2}q_{xxx}-3|q|^2q_{x}=0,\\
&q(x,0)=q_{0}(x) \sim \pm 1, \ x\to \pm\infty,
\end{align*}
 which  can be formulated into a Riemann-Hilbert (RH) problem.   With
  $\bar\partial$-generation   of the nonlinear steepest descent approach and a double scaling limit technique,
  in  the transition region $$\mathcal{D}:=\left\{(x,t)\in\mathbb{R}\times\mathbb{R}^+\big|-C< \left(x/(2t)+3/2\right) t^{2/3}<0, C\in\mathbb{R}^+\right\},$$
     we find that the long-time asymptotics   of the solution  $q(x,t)$   to the Cauchy problem
        is associated  with  the Painlev\'{e}-II transcendents.
 }

\parbox{15cm}{\small{

\vspace{0.3cm} {$\mathbf{Key words:}$}
  defocusing complex mKdV equation;   Painlev\'{e}-II transcendents; long-time asymptotics; transition region

\vspace{0.3cm} {$\mathbf{Mathematics~ Subject~ Classification:}$}
  35Q53; 35P25; 35Q15; 35C20; 35G25; 33E17
} }
 \end{center}

\tableofcontents

    \section{Introduction}
    \hspace*{\parindent}
    \begin{align}
  u(x, 0) =
  \begin{cases}
  u_+, & x \to +\infty, \\
  u_-, & x \to -\infty,
  \end{cases}
\end{align}
In this work, we focus on the Painlev\'{e} asymptotics of the solution  to the  Cauchy problem for the defocusing complex mKdV equation \cite{kcff19781,kcff1978}
\begin{align}
&q_t+\frac{1}{2}q_{xxx}-3|q|^2q_{x}=0,\ \ (x,t)\in\mathbb{R}\times\mathbb{R}^+,\label{cmkdv}\\
&q(x,0)=q_{0}(x) \sim \pm 1, \ x\to \pm\infty,\label{init1}
\end{align}
where $q$ is a complex value function  depending on $x$ and $t$. It is well-known that   the complex mKdV equation (\ref{cmkdv})
is more general  complex  representation than
the real-valued mKdV equation
 \begin{equation}
q_t+q_{xxx}-6q^2q_x=0.\label{cmkdv3}
\end{equation}

 The complex mKdV equation (\ref{cmkdv})
is   the model of lower hybrid waves in plasma \cite{kcff1978,kcff78}, transverse waves in a molecular chain \cite{gob1983}, nonlinear transverse waves in a generalized elastic solid \cite{es1989,eha1998}
and circularly polarized few-optical-cycle pulses in Kerr media \cite{lh2011,lh2012}.
The dynamics of the smooth positions and the traveling wave solutions of the   complex  mKdV equation (\ref{cmkdv})
were given in  \cite{lw2017,wam2004}.
For a Schwartz initial value $q_0 \in\mathcal{S}(\mathbb{R})$    without  presence of solitons,
 the long-time asymptotics  of the complex mKdV equation (\ref{cmkdv})  in the physical interested region $|x/(2t)|<M$   was   analyzed   by  Deift-Zhou steepest descent method in \cite{zhy2022}.

In the study on the Painlev\'e-type equation,  Ablowitz and Segur characterized a non-elementary one-parameter family of solutions of the Painlev\'{e}-II equation
via the Gel'fand-Levitan-Marchenko integral equation of the Fredholm type \cite{mja1981}. Flaschka and
Newell solved the initial value problem of Painlev\'{e}-II equation by solving an inverse problem of the corresponding ordinary differential equation \cite{hf1980}. Fokas and Ablowitz found that the Painlev\'{e}-II equation can be solved by an RH problem \cite{af1983}. Furthermore, the asymptotics of the Painlev\'{e}-II equation was studied
in a series of literatures \cite{sph1980,ari1987,hs1981,bis1987,pad1995,asf2006}. The  long-time  asymptotics for the KdV equation   in a transition  region  was first  shown in terms of Painlev\'{e} transcendents by
Segur and Ablowitz  \cite{hs1981}.
Deift and Zhou found  the connection between   the mKdV equation (\ref{cmkdv3})  and Painlev\'e equation \cite{pd1993}.
Boutet de Monvel found the Painlev\'{e}-type asymptotics of the Camassa-Holm
equation via the nonlinear steepest descent approach \cite{ab2010}.  Charlier and Lenells investigated the Airy and the Painlev\'{e} asymptotics for the mKdV equation under the zero boundary conditions \cite{cc2020}.
 Huang and Zhang further obtained the Painlev\'{e} asymptotics for the mKdV hierarchy \cite{lh2022}.
Recently, Wang and Fan found  the Painlev\'{e}-type asymptotics for the defocusing NLS equation with non-zero boundary conditions in two transition regions \cite{wzy2023}.

In our work,  we are interested in whether there is a certain connection between the complex mKdV equation (\ref{cmkdv})
and the Painlev\'e equation.  For this purpose, we consider
 finite density initial data (\ref{init1}) in weighted  Sobolev space   $q_0(x)-\tanh x\in H^{4,4}(\mathbb{R})$.
 According to the number of stationary points
appearing on the jump contour $\mathbb{R}$, we divide the $(x,t)\in\mathbb{R}\times\mathbb{R}^+$ half-plane into four categories of  space-time regions  (Figure \ref{spacetime}),
\begin{align*}
&\mathcal{D}_1:=\{(x,t)|    -\infty <\xi< -3/2\},\ \
\mathcal{D}_2:=\{(x,t)|    -3/2 <\xi\leq -1/2\} ,\\
&\mathcal{D}_3:=\{(x,t)|    -1/2 <\xi< +\infty\} , \ \  \mathcal{D}_4:=\{(x,t)|     \xi\approx -3/2\},
\end{align*}
where $\xi=x/(2t)$.
 The soliton resolution and  long-time asymptotic behavior in regions   $\mathcal{D}_j$, $j=1,2,3$
 can be obtained via the
 $\bar\partial$-steepest descent method and  a parabolic cylinder    model \cite{wf}.
The remaining problem is how to depict the asymptotics in the   region  $\mathcal{D}_4$,
in which  either   the critical lines $\re [{2i\theta(z)}]=0$ shrink to a single point
or  two   second-order stationary points   emerge   (Figure \ref{proptheta}).
This characteristic of the critical lines leads to  a new phenomenon that will arise since    $\left(1-|r(z)|^2\right)^{-1}$  blow up as $z=\pm1$.
To show the asymptotics  in the region  $\mathcal{D}_4$,  we consider a
   transition region
    \begin{align}
\mathcal{D}:=\left\{(x,t)\in\mathbb{R}\times\mathbb{R}^+\big|-C< \left(\xi+3/2\right) t^{2/3}<0, C\in\mathbb{R}^+\right\}.\label{D}
\end{align}
We noticed that the local RH problems  in  $\mathcal{D}$ can be
 matched with a solvable RH model associated with the Painlev\'{e}-II function.
The key to dealing with singularity caused by $|r(\pm1)|=1$ is inspired by the method used by Cuccagna and Jenkins \cite{cs2016}.

Before starting our main work,  some notations need to be briefly introduced as follows:
\begin{itemize}
\item[$\diamond$] Three kinds of the  Pauli matrixes
\begin{equation}
\sigma_{1}=\left(
\begin{array}{cc}
0&1\\
1&0
\end{array}
\right),\ \
 \sigma_{2}=\left(
\begin{array}{cc}
0&-i\\
i&0
\end{array}
\right),\ \ \sigma_{3}=\left(
\begin{array}{cc}
1&0\\
0&-1
\end{array}
\right).\nonumber
\end{equation}
\item[$\diamond$] $a\lesssim b$ denotes $a\leq cb$ for a constant $c>0$.
\item[$\diamond$] A weighted space $L^{p,s}(\mathbb{R})$ defined by
\begin{equation}
L^{p,s}(\mathbb{R})=\{q\in L^{p}(\mathbb{R})\mid\langle \cdot \rangle^{s}q \in L^{p}(\mathbb{R})\},\nonumber
\end{equation}
with
the norm  $\|q\|_{L^{p,s}(\mathbb{R})}=\|\langle \cdot \rangle^{s}q \|_{L^{p}(\mathbb{R})}$ where  $\langle x\rangle =\left(1+x^{2}\right)^{-1/2}$.
\item[$\diamond$] $H^{m}(\mathbb{R})$ defined by
\begin{align*}
\|q\|_{H^{m}(\mathbb{R})}=\|\langle \cdot \rangle^{m}\widehat{q}\|_{L^{2}(\mathbb{R})},
\end{align*}
where $\widehat{q}$ denotes the Fourier transform for $q$.
\item[$\diamond$] A Sobolev space $W^{m,p}(\mathbb{R})$ defined by
\begin{equation}
W^{m,p}(\mathbb{R})=\{q\in L^{p}(\mathbb{R})\mid\partial^{j}q \in L^{p}(\mathbb{R})\},\ \ j=0,1,2,\cdots,m,\nonumber
\end{equation}
where the norm defined by $\|q\|_{W^{m,p}(\mathbb{R})}=\sum\limits_{j=0}^{m}\|\partial^{j}q\|_{L^{p}(\mathbb{R})}$.
\item[$\diamond$] A weighted Sobolev space defined by
\begin{equation}
H^{m,s}(\mathbb{R})=L^{2,s}(\mathbb{R})\cap H^{m}(\mathbb{R}).\nonumber
\end{equation}
\end{itemize}

Our  main result is now  stated   as follows.
\begin{theorem} \label{th}
For the initial value  $q_0-\tanh x \in H^{4,4}(\mathbb{R})$,  the associated
	reflection coefficient  and  the discrete spectrums  are    $\{r(z), \nu_n\}_{n=1}^{N} $.
	The long-time asymptotics    of the   solution  $q(x,t)$  to the Cauchy problem (\ref{cmkdv})-(\ref{init1}) for the
	defocusing  complex mKdV  equation in
	a transition region   (\ref{D}) is recovered by
\begin{align}
q(x,t)
 =e^{\alpha(\infty)}\left[-1-i\tau^{-1/3}\left(\beta_1+\beta_2\right)\right]+ \mathcal{O}(t^{-1/3-\varsigma}),\label{q1}
\end{align}
     where
\begin{align*}
&\beta_1=\frac{1}{2}\left[\hat{u}(s)e^{2i\Theta(\arg\tilde{r}(1))}-\int_{s}^{\infty}\hat{u}^2(\zeta)\mathrm{d}\zeta\right],\ \
\beta_2=\frac{1}{2}\left[\check{u}(s)e^{2i\Theta(\arg\tilde{r}(-1))}+\int_{s}^{\infty}\check{u}^2(\zeta)\mathrm{d}\zeta\right],\\
	&\alpha(\infty) = 2 \sum_{n=1}^{N} \log\bar{ \nu}_n +\frac{1}{2i\pi} \int_\Gamma\frac{v(\zeta)}{\zeta} \, \mathrm{d}\zeta,\ \  s  = 2 \left(9/4\right)^{ \frac{1}{3}}  \left(\xi +3/2\right) t^{\frac{2}{3}},
\end{align*}
$v(\zeta)$ and $\Gamma$  defined by (\ref{nu}) and (\ref{gamma}), respectively.
In addition, $\Theta(\cdot)=\frac{i(\cdot+\pi/2)}{2}$
and $\tau$ defined by (\ref{hatk}).
The real function $u(s ) $ is a solution of the Painlev\'{e}-II equation (\ref{p2})
with the asymptotics
\begin{align*}
	&\hat{u}(s) =-|\tilde{r}(1)| \mathrm{Ai}(s) +\mathcal{O}\left(e^{-(4/3)s^{3/2}}s^{-1/4}\right),\quad s\to +\infty,\\
&\check{u}(s) =-|\tilde{r}(-1)| \mathrm{Ai}(s) +\mathcal{O}\left(e^{-(4/3)s^{3/2}}s^{-1/4}\right),\quad s\to +\infty,
\end{align*}
where $\mathrm{Ai}(s)$ denotes the Airy function.
\end{theorem}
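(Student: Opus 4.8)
The plan is to implement the $\bar\partial$-generalization of the Deift--Zhou nonlinear steepest descent method together with a double scaling limit, in the spirit of the analysis for the defocusing NLS equation with nonzero boundary conditions. I would start from the RH problem furnished by the inverse scattering transform (already set up in the earlier sections and in the companion work \cite{wf}), whose jump matrix carries the oscillatory phase $e^{\pm 2it\theta(z)}$ and whose data are the reflection coefficient $r(z)$ and the discrete spectrum $\{\nu_n\}$. The first task is a careful phase analysis: locate the stationary points of $\theta$ as functions of $\xi=x/(2t)$ and verify that, inside the region $\mathcal{D}$ of (\ref{D}), two real stationary points coalesce into a single second-order stationary point at each band edge $z=\pm1$ as $\xi\to-3/2$. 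This coalescence is exactly what forces the $t^{2/3}$ scaling and the Painlev\'e-II (rather than the generic $t^{-1/2}$ parabolic-cylinder) behaviour, and it fixes the natural scaling variable $s=2(9/4)^{1/3}(\xi+3/2)t^{2/3}$.

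Next I would perform the standard scalar conjugation. Introducing a $\delta$-type function solving a scalar RH problem with jump built from $1-|r|^2$, together with a Blaschke-type product over the $\nu_n$, removes the discrete spectrum and splits the real-axis jump into an upper/lower triangular factorization amenable to lens opening; evaluating this scalar conjugator at infinity produces precisely the constant $\alpha(\infty)=2\sum_{n}\log\bar\nu_n+(2i\pi)^{-1}\int_\Gamma v(\zeta)\zeta^{-1}\,\mathrm{d}\zeta$ that multiplies the leading term. After conjugation I extend $r(z)$ off the real axis via a smooth $\bar\partial$-continuation supported in neighbourhoods of the stationary points, open the lenses, and decompose the problem into a piecewise-analytic RH part plus a pure $\bar\partial$-problem. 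Using the exponential decay of $e^{\pm 2it\theta}$ along the steepest-descent contours together with the boundedness and compact support of $\bar\partial r$, the $\bar\partial$-contribution is estimated against the solid Cauchy operator and shown to be of order $\mathcal{O}(t^{-1/3-\varsigma})$, hence absorbed into the error.

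The core of the argument is the local analysis near the two band edges. Rescaling the spectral variable about $z=1$ and $z=-1$ and passing to the double scaling limit $t\to\infty$ with $s$ bounded, the local jump matrices converge to those of a model RH problem solvable in terms of the Ablowitz--Segur Painlev\'e-II transcendent; reading off the appropriate entry of its expansion yields $u(s)$ with the stated asymptotics $u(s)\sim-|\tilde r(\pm1)|\,\mathrm{Ai}(s)$ and the accompanying Hamiltonian term $\int_s^{\infty}u^2\,\mathrm{d}\zeta$. The genuine obstacle is the band-edge singularity $|r(\pm1)|=1$, which makes $(1-|r|^2)^{-1}$ blow up and obstructs the naive factorization used in the conjugation step. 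To circumvent it I would invoke the Cuccagna--Jenkins construction \cite{cs2016}, which supplies a factorization matching the asymptotics of the similar and self-similar regions and thereby prevents any new shock-type asymptotic form from emerging, so that the Painlev\'e-II parametrix remains valid up to and including $|r(\pm1)|=1$.

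Finally I would assemble the global parametrix from the outer model away from $\pm1$ together with the two local Painlev\'e-II parametrices, match them on small circles around $z=\pm1$ with matching error of order $t^{-1/3}$, and solve the resulting small-norm RH problem for the error matrix by a Neumann series. Extracting the coefficient of $1/z$ in the large-$z$ expansion of the fully reconstructed solution and collecting the contributions of the two edges—$\beta_1$ from $z=1$ with $\hat\varphi_0=\arg\tilde r(1)$ and $\beta_2$ from $z=-1$ with $\check\varphi_0=\arg\tilde r(-1)$—together with the background $-1$ and the overall factor $e^{\alpha(\infty)}$, then yields the claimed formula (\ref{q1}) with total error $\mathcal{O}(t^{-1/3-\varsigma})$. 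I expect the principal difficulty to be the uniform control of the Painlev\'e parametrix and of the error estimates in the presence of the singularity $|r(\pm1)|=1$; once that is in place, the remaining steps are a careful but essentially routine bookkeeping of the successive transformations.
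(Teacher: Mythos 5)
Your proposal follows essentially the same route as the paper: the scalar conjugation $T(z)$ (Blaschke factors over the $\nu_n$ plus the integral of $v(\zeta)=\log(1-|r|^2)$ over $\Gamma$, whose value at infinity produces $e^{\alpha(\infty)}$), conversion of the residues into exponentially small jumps on small circles, $\bar\partial$-lens opening with the decomposition into a pure RH problem plus a pure $\bar\partial$-problem, local Painlev\'e-II parametrices at $z=\pm1$ under the $\hat k=\tau^{1/3}(z-1)$, $\check k=\tau^{1/3}(z+1)$ double scaling, the Cuccagna--Jenkins device to handle $|r(\pm1)|=1$, and a small-norm RH problem for the error function, exactly as in the paper's Sections 4.1--4.5. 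The only step you leave implicit is the removal of the $z=0$ singularity (where $M(z)\sim \sigma_2/z$) via the transformation $M^{(2)}(z)=\bigl(I+z^{-1}\sigma_2 M^{(3)}(0)^{-1}\bigr)M^{(3)}(z)$, which the paper performs before lens opening; this is routine bookkeeping of the kind you acknowledge at the end.
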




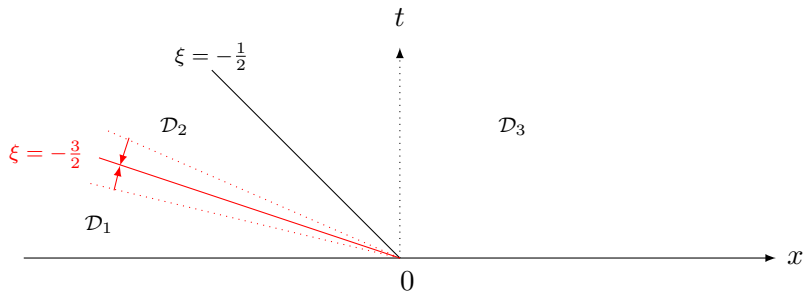
\begin{figure}[H]
	\begin{center}
		\begin{tikzpicture}

		\draw [-latex ](-5,0)--(5,0);
		\draw [dotted,-latex ](0,0)--(0,2.8);
		\node    at (0.1,-0.3)  {$0$};
		\node    at (5.26,0)  { $x$};
		\node    at (0,3.2)  { $t$};
		 \node  [below]  at (-2.5,3) {\scriptsize $\xi=-\frac{1}{2}$};
		 \node  [red,below]  at (-4.7,1.7) {\scriptsize $\xi=-\frac{3}{2}$};
\node  [below]  at (1.5,2) {\scriptsize $\mathcal{D}_3$};
\node  [below]  at (-3,2) {\scriptsize $\mathcal{D}_2$};
\node  [below]  at (-4,0.7) {\scriptsize $\mathcal{D}_1$};

  		\draw [red](0,0)--(-4.0,4/3);
		\draw [](0,0)--(-2.5,2.5);
\draw [dotted,red](0,0)--(-4.15,1.0);
\draw [dotted,red](0,0)--(-3.9,1.7);

\draw[-latex] [red] (-3.6,1.6)--(-3.72,1.23);
\draw[-latex] [red] (-3.8,0.9)--(-3.72,1.23);
	  		\end{tikzpicture}
	\end{center}
	\caption{\footnotesize The space-time regions of $x$ and $t$, the red solid line is $\xi=-3/2$. The two red dashed lines which belong to the regions $\mathcal{D}_1$ and $\mathcal{D}_2$, respectively, approach to the red solid line, these can be expressed as  $\xi\approx-3/2$. The region passed through by the dashed lines  is the transition region.  }
	\label{spacetime}
\end{figure}

The crucial technology to prove  Theorem \ref{th}  is the  application of the $\bar{\partial}$-steepest descent method to  the RH formulation
of the Cauchy problem (\ref{cmkdv})-(\ref{init1}).  This  method was  introduced by McLaughlin-Miller \cite{mkt2006,mkt2008} and Dieng-McLaughlin \cite{dm08}, has been extensively   applied to analyze long-time asymptotics and soliton resolution for  integrable systems \cite{bm2018,jr2018,ljq2018,yyl2022,yyl2023}.
The significant advantage of this method is that we only need  the lower  regularity initial data
  and   simplify  tedious  estimates of the standard steepest descent method.
   We outline  the steps  to  prove   Theorem \ref{th} as follows.



In  Section  \ref{sec1},  we focus on the direct  scattering transform of the Cauchy problem  (\ref{cmkdv})-(\ref{init1}).
The properties of Jost solutions and scattering data  are analyzed.

In Section  \ref{insrh},    we     construct an    RH problem \ref{RHP0}    associated with    the Cauchy problem  (\ref{cmkdv})-(\ref{init1})
via  the   inverse scattering.

  In Section \ref{sec3},  we solve the  RH problem \ref{RHP0} and show   Painlev\'e-type asymptotics in the transition region $\mathcal{D}$.
   With  a series of deformations and $\bar{\partial}$-steepest analysis,  the RH problem \ref{RHP0}
 is changed into a  hybrid  $\bar{\partial}$-RH problem \ref{ms3},
  which  can be  solved by a pure RH problem \ref{mrhp}  and a pure $\bar{\partial}$-problem \ref{trad}.
Moreover, we  show that  the pure RH problem  \ref{mrhp}   matched by a solvable Painlev\'{e}-II model
 near the critical points $z=\pm1$ in the transition region $\mathcal{D}$.
  The residual  error  comes from   the pure $\bar{\partial}$-problem \ref{trad}.
Finally, we complete the proof of  Theorem \ref{th}  by inverting  the previous transformations.

\section{Direct scattering transform} \label{sec1}
\hspace*{\parindent}
In this section, we state the direct  scattering transform of the defocusing complex mKdV equation (\ref{cmkdv}) with the initial data (\ref{init1}).
\subsection{Jost functions }
\hspace*{\parindent}
The defocusing complex mKdV equation (\ref{cmkdv}) admits the Lax pair representation \cite{ly1999}
\begin{equation}
  	  	\phi_x(k,x,t)=L \phi(k,x,t), \quad\quad\quad \phi_t(k,x,t)=N \phi(k,x,t), \label{lax}
  \end{equation}
where the $2\times2$ matrix functions $L:=L(k,x,t)$ and $N:=N(k,x,t)$ are given by
    \begin{align}
  L=-ik\sigma_{3}+Q,\ \ \ \ N=-2ik^3\sigma_{3}+2k^2Q-ikQ^2\sigma_{3}+ik\sigma_{3}Q_{x}+N_0, \label{laxt}
  \end{align}
with
$$
N_0=\frac{1}{2}[Q,Q_{x}]-\frac{1}{2}Q_{xx}+Q^{3},
$$
the bracket denotes a commutation relation
and $k\in\mathbb{C}$ is the spectrum parameter.  The $2\times2$ matrix $Q:=Q(x,t)$ is given by
	\begin{align*}
		Q=\left(\begin{array}{cc}
			0 & q(x,t)\\
			\overline{q}(x,t) & 0
		\end{array}\right),
	\end{align*}
the overbar denotes the Schwartz conjugate ($\overline{q}(x, t) =\overline{ q(\overline{x}, t)}$).
Under the initial data (\ref{init1}), matrix functions $L$ and $T$ admit the limits
\begin{equation}
L_{\pm}:=\lim_{x\to\pm\infty}L,\ \ \ \ N_{\pm}:=\lim_{x\to\pm\infty}N=(2k^2+1)L_\pm,\label{lt}
\end{equation}
where the matrix $Q$ is replaced by
$$Q_\pm:=\lim\limits_{x\to\pm\infty}Q=\left(\begin{array}{cc}
0&\pm1\\
\pm1&0
\end{array}
\right).$$
$L_{\pm}$ has the eigenvalues $\pm i\lambda$, where $\lambda=\sqrt{k^{2}-1}$. The second equation in (\ref{lt}) implies that $N_{\pm}$ has eigenvalues $\pm (2k^2+1)i\lambda$.
To eliminate the multi-valued effect of $\lambda$, we introduce an affine parameter $z:=k+\lambda$ and  derive two single-valued functions
\begin{equation}
k:=k(z)=\frac{1}{2}\left(z+\frac{1}{z}\right),\ \ \ \  \lambda:=\lambda(z)=\frac{1}{2}\left(z-\frac{1}{z}\right).\label{klamz}
\end{equation}
The commutation relation $[L_\pm,N_\pm]=0$ indicates that $L_\pm$ and $N_\pm$ enjoy a common eigenvector
\begin{align}
    &Y_\pm(z)=
    I\pm\sigma_2z^{-1}.\label{ypm}
     \end{align}
where $I$ is the $2\times2$ identity matrix. Moreover, we notice that  $\text{det}Y_\pm(z)=1-z^{-2}$. Specifically, $\det{Y_{\pm}(z)}\mid_{z=\pm1}=0$ which imply that $Y_{\pm}(z)$ are non-invertible matrix as $z=\pm1$.

The  ordinary differential equations (\ref{lax}) admits Jost solutions $\phi^\pm(z):=\phi^\pm(z;x,t)$ which  have the  asymptotic behavior
	\begin{equation*}
		\phi^\pm(z) \sim Y_\pm(z) e^{-it\theta(z)\sigma_3},  \quad x \to \pm \infty,
	\end{equation*}
	where $\theta(z)=\lambda \big[x/t+(2k^2+1)\big].$ Making  a transformation
	\begin{equation*}
		\mu^\pm(z )=\phi^\pm(z )e^{it\theta(z)\sigma_3},
	\end{equation*}
and $\mu^{\pm}(z):=\mu^{\pm}(z,x,t)$ satisfy the Cauchy system
\begin{align*}
&\left(Y_{\pm}^{-1}\mu^{\pm}\right)_{x}=-i\lambda[\sigma_{3},Y_{\pm}^{-1}\mu]+Y_{\pm}^{-1}\Delta Q_{\pm}\mu^{\pm},\\
&\left(Y_{\pm}^{-1}\mu^{\pm}\right)_{t}=2ik\lambda[\sigma_{3},Y_{\pm}^{-1}\mu]+Y_{\pm}^{-1}\Delta \tilde{Q}_{\pm}\mu^{\pm},
\end{align*}
where $\Delta Q_\pm:=L-L_{\pm}$ and $\Delta \tilde{Q}_{\pm}:=N-N_{\pm}$.
The functions  $\mu^\pm(z ) :=\left(\mu_{1}^{\pm}(z), \mu_{2}^{\pm}(z)\right)$ satisfy  the Volterra integral equations
\begin{align}
\mu^\pm(z)=Y_\pm(z) + \begin{cases}
\int_{\pm \infty}^{x} \left( Y_\pm e^{-i \lambda(z) (x-y)\widehat\sigma_3}  Y^{-1}_\pm    \right) \left( \Delta Q_\pm(z,y) \mu^\pm(z;y)      \right)  \mathrm{d}y, \ z\not= \pm 1,\\
\int_{\pm \infty}^{x} \left( I + (x-y)L_\pm      \right) \Delta Q_\pm (z;y) \mu^\pm(z;y)  \mathrm{d}y, \ z = \pm 1,
\end{cases}\label{mupm}
\end{align}
where $\widehat{\sigma}_{3}$ map a matrix $A$ as $e^{\widehat{\sigma}_{3}}=e^{\sigma_{3}}Ae^{-\sigma_{3}}$.
The existence, analyticity  and differentiation  of $\mu^\pm(z )$ can be proven directly,     here we just sketch their  properties.

   \begin{lemma}\label{lemm21}
Let $q_0(x)-\tanh x \in L^{1,2}(\mathbb{R})$ and $q'(x)\in W^{1,1}(\mathbb{R})$,
    	$\mu^{\pm}(z)$ admit the following properties:

\begin{itemize}

 \item    $\mu_1^+(z )$ and $\mu_2^-(z )$  can be analytically extended to $z \in \mathbb{C}^-$ and continuously extended to $\bar{\mathbb{C}}^-$.  $\mu_1^-(z )$ and $\mu_2^+(z )$  can be analytically extended to $z \in \mathbb{C}^+$ and continuously extended to $\bar{\mathbb{C}}^+$, where $\bar{\mathbb{C}}^\pm:=\{\mathbb{C}^\pm\cup\mathbb{R}\}\backslash\{0,\pm1\}$ (see Figure \ref{cpm}).

    \item  $\mu^\pm(z )$ admit symmetries
  \begin{equation}
    \mu^{\pm}(z;x,t)=
    \sigma_{1}\overline{\mu^{\pm}(\overline{z};x,t)}\sigma_{1}, \quad
    \mu^{\pm}(z,x,t)=\pm z^{-1}\mu^{\pm}(z^{-1};x,t)\sigma_{2}.\label{musym}
    \end{equation}



  \item     $\mu^\pm_1(z )$ and $\mu^\pm_2(z )$ have  asymptotic properties as $z\to\infty$ and $z\to0$
    	\begin{align*}
    	&\mu^\pm_1(z ) = e_1 + \frac{ \mu_1^{\pm}[1]}{z}+\mathcal{O}(z^{-2}), \ \  \mu^\pm_2(z ) = e_2 +   \frac{\mu_2^{\pm}[1]}{z}+\mathcal{O} (z^{-2}), \ \ z \to \infty,\nonumber\\
    	&\mu^\pm_1(z ) = \mp \frac{i}{z}e_2 +\mathcal{O}(1), \ \
    \mu^\pm_2(z )  = \pm \frac{i}{z}e_1 +\mathcal{O}(1),  \  \ z\to 0,
    	\end{align*}
     where
     \begin{align*}
     \mu_1^{\pm}[1]:=\left(
     -i\int_{\pm\infty}^{x}(|q|^2+1)\mathrm{d}x,\ -i\overline{q}
         \right)^T,\ \  \mu_2^{\pm}[1]:=\left(
            iq,\ i\int_{\pm\infty}^{x}(|q|^2+1)\mathrm{d}x
     \right)^T,
     \end{align*}
     with $e_{1}:=(1,0)^{T}$ and $e_{2}:=(0,1)^{T}$. The superscript ``T" denotes the transpose of a matrix.
     \end{itemize}
    \end{lemma}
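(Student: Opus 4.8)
The plan is to read off all three properties from the Volterra equation (\ref{mupm}) for $z\neq\pm1$, regarded as a fixed-point problem solved by a Neumann series $\mu^\pm=\sum_{n\geq0}\mu^\pm_{(n)}$ with seed $\mu^\pm_{(0)}=Y_\pm$ and $\mu^\pm_{(n+1)}$ the image of $\mu^\pm_{(n)}$ under the integral operator. The decisive structural fact is the identity $\im\lambda(z)=\tfrac12\,\im(z)\,(1+|z|^{-2})$, so that $\operatorname{sign}\im\lambda(z)=\operatorname{sign}\im(z)$. Because the kernel carries the factor $e^{-i\lambda(z)(x-y)\widehat{\sigma}_3}$ while $y$ runs from $\pm\infty$ to $x$, precisely one of the two off-diagonal exponentials $e^{\mp2i\lambda(x-y)}$ stays bounded in each open half-plane; bookkeeping of which column of $Y_\pm$ multiplies the bounded exponential then pins down that $\mu_1^+,\mu_2^-$ continue analytically to $\mathbb{C}^-$ and $\mu_1^-,\mu_2^+$ to $\mathbb{C}^+$, as claimed.

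For analyticity and boundary continuity I would use the standard Volterra bound: on the relevant closed half-plane the surviving exponential has modulus at most $1$, whence $\|\mu^\pm_{(n)}(z;x)\|\lesssim\frac{1}{n!}\big(\int_x^{\pm\infty}\|\Delta Q_\pm(\,\cdot\,,y)\|\,\mathrm{d}y\big)^n$. This series is summable since the hypotheses $q_0\in\tanh x+L^{1,2}(\mathbb{R})$ and $q'\in W^{1,1}(\mathbb{R})$ force $\Delta Q_\pm(\,\cdot\,,y)\in L^1_y$. Each iterate is analytic in the open half-plane (an integral of analytic integrands, with uniform convergence handled by Morera and Weierstrass) and extends continuously to $\mathbb{R}\setminus\{0,\pm1\}$. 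The excised points are genuinely special: at $z=\pm1$ one has $\lambda=0$ and $\det Y_\pm=1-z^{-2}=0$, so the exponential kernel collapses and must be replaced by the polynomial kernel $I+(x-y)L_\pm$ of the second branch of (\ref{mupm}), while at $z=0$ the seed $Y_\pm=I\pm\sigma_2z^{-1}$ itself is singular; this is exactly why $\bar{\mathbb{C}}^\pm$ is defined with $\{0,\pm1\}$ deleted.

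The symmetries (\ref{musym}) I would obtain by uniqueness rather than by direct manipulation of the series. A one-line check shows the seed already satisfies them: using $\overline{\sigma_2}=-\sigma_2$ and $\sigma_1\sigma_2\sigma_1=-\sigma_2$ one gets $\sigma_1\overline{Y_\pm(\bar z)}\sigma_1=Y_\pm(z)$, and using $\sigma_2^2=I$ one gets $\pm z^{-1}Y_\pm(z^{-1})\sigma_2=Y_\pm(z)$. Since the potential obeys $\sigma_1\overline{Q}\sigma_1=Q$ and the spectral data transform as $\overline{\lambda(\bar z)}=\lambda(z)$, $\lambda(z^{-1})=-\lambda(z)$, $k(z^{-1})=k(z)$, the maps $\mu\mapsto\sigma_1\overline{\mu(\bar z)}\sigma_1$ and $\mu\mapsto\pm z^{-1}\mu(z^{-1})\sigma_2$ each carry a solution of (\ref{mupm}) to another solution of the \emph{same} equation. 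Uniqueness of the Neumann solution then yields (\ref{musym}).

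For the asymptotics I would insert $\lambda(z),k(z)\sim z/2$ into a single iteration. As $z\to\infty$ the seed gives $Y_\pm\to I$, and the $z^{-1}$ coefficient of the first correction—evaluated by integrating the leading part of $\Delta Q_\pm$—produces the vectors $\mu^\pm_j[1]$, the diagonal entry $-i\int_{\pm\infty}^x(|q|^2+1)\,\mathrm{d}y$ arising from the density contained in $\Delta Q_\pm$. The $z\to0$ expansion then comes for free from the second relation in (\ref{musym}): since $z^{-1}\to\infty$ and $\mu^\pm(z^{-1})=I+\mathcal{O}(z)$, one gets $\mu^\pm(z)=\pm z^{-1}\mu^\pm(z^{-1})\sigma_2=\pm z^{-1}\sigma_2+\mathcal{O}(1)$, an off-diagonal simple pole of residue proportional to $\sigma_2$ whose two columns reproduce the stated $e_2$- and $e_1$-directed leading terms. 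I expect the real obstacle to be none of these formal steps but the uniform control of the iteration as $z$ approaches the deleted points $\pm1$: there $\lambda\to0$ destroys the exponential decay and $Y_\pm^{-1}$ blows up, so proving continuity up to $\mathbb{R}\setminus\{0,\pm1\}$ forces one to match the two branches of (\ref{mupm}), and this is the single place where the finite-density normalization does work that has no counterpart in the zero-background theory.
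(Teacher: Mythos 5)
The paper itself contains no proof of this lemma to compare against: it states that the existence, analyticity and differentiation of $\mu^\pm(z)$ ``can be proven directly'' and merely lists the properties. So your proposal has to stand on its own, and structurally it is the standard, correct argument. The Neumann series for the Volterra equation (\ref{mupm}), combined with the sign identity $\im\lambda(z)=\tfrac12\,\im z\,\left(1+|z|^{-2}\right)$, does yield the claimed half-plane analyticity of the indicated columns, and continuity up to $\mathbb{R}\setminus\{0,\pm1\}$ follows because the factorial Volterra bound is locally uniform there (its constant involves $\|Y_\pm^{-1}\|$, which degenerates only at the excised points $z=\pm1$). Your seed identities $\sigma_1\overline{Y_\pm(\bar z)}\sigma_1=Y_\pm(z)$ and $\pm z^{-1}Y_\pm(z^{-1})\sigma_2=Y_\pm(z)$ are correct, and the symmetry-by-uniqueness argument, using $\overline{\lambda(\bar z)}=\lambda(z)$, $\lambda(z^{-1})=-\lambda(z)$, $k(z^{-1})=k(z)$, is exactly how (\ref{musym}) is proved in the literature.

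The genuine gap is in the asymptotics step, where you assert exact agreement with the printed coefficients; that agreement cannot hold, and your own method exposes it. First, under the finite-density boundary condition $|q|^2+1\to 2$, the printed diagonal entry $-i\int_{\pm\infty}^{x}(|q|^2+1)\,\mathrm{d}y$ is a divergent integral. Matching powers of $z$ in $\mu_x=-ik\sigma_3\mu+Q\mu+i\lambda\mu\sigma_3$ forces $\left(\mu^\pm[1]\right)_{12}=-iq$ and $\left(\mu^\pm[1]\right)_{21}=i\bar q$ (these are the only values consistent with the seed $Y_\pm=I\pm\sigma_2 z^{-1}$, whose $(1,2)$-entry $\mp i/z$ must be the $x\to\pm\infty$ limit of the coefficient), and then the diagonal obeys $a_x=-i+i|q|^2$, so the integrand is $|q|^2-1$, which is what makes the integral converge. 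Second, your small-$z$ derivation via the symmetry gives $\mu^\pm(z)=\pm z^{-1}\sigma_2+\mathcal{O}(1)$, whose columns are $\pm\frac{i}{z}e_2$ and $\mp\frac{i}{z}e_1$ --- the opposite signs of the statement; i.e.\ the printed small-$z$ formulas contradict the printed symmetry (\ref{musym}) that you invoke to derive them. These are evidently typographical errors in the paper, but a proof claiming to ``produce the vectors $\mu_j^\pm[1]$'' and to ``reproduce the stated leading terms'' must either derive the corrected formulas or flag the discrepancy; silently asserting agreement is the one step in your outline that would fail if executed. A last, minor point: since $0,\pm1$ are deleted from $\bar{\mathbb{C}}^\pm$, no matching of the two branches of (\ref{mupm}) is required for this lemma; the polynomial-kernel branch exists to define the Jost functions at $z=\pm1$ themselves, which is needed later (for $d_\pm$ and the generic/non-generic dichotomy), not here.
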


\begin{lemma}
Let $q_0(x)-\tanh x\in L^{1,2}(\mathbb{R})$ and $q'(x)\in W^{1,1}(\mathbb{R})$,
 we have
the map $q(x,t)\mapsto\frac{\partial^n}{\partial z^n}\mu_{\pm,j}(z) (j=1,2,~ and ~n>0)$ are Lipschitz continuous, particularly for $\forall x_0\in\mathbb{R}$, $\mu_{\pm,j}(z)$ are continuous differentiable mappings:
\begin{align*}
&\partial_z^n\mu_{+,1}:\bar{\mathbb{C}}^-\mapsto L_{loc}^{\infty}\{\bar{\mathbb{C}}^-,C^1([x_0,\infty),\mathbb{C}^2)\cap W^{1,\infty}([x_0,\infty),\mathbb{C}^2)\},\\
&\partial_z^n\mu_{-,2}:\bar{\mathbb{C}}^-\mapsto L_{loc}^{\infty}\{\bar{\mathbb{C}}^-,C^1((-\infty,x_0],\mathbb{C}^2)\cap W^{1,\infty}((-\infty,x_0],\mathbb{C}^2)\},\\
&\partial_z^n\mu_{+,2}:\bar{\mathbb{C}}^+\mapsto L_{loc}^{\infty}\{\bar{\mathbb{C}}^+,C^1([x_0,\infty),\mathbb{C}^2)\cap W^{1,\infty}([x_0,\infty),\mathbb{C}^2)\},\\
&\partial_z^n\mu_{-,1}:\bar{\mathbb{C}}^+\mapsto L_{loc}^{\infty}\{\bar{\mathbb{C}}^+,C^1((-\infty,x_0],\mathbb{C}^2)\cap W^{1,\infty}((-\infty,x_0],\mathbb{C}^2)\},
\end{align*}
 where $\bar{\mathbb{C}}^\pm:=\{\mathbb{C}^\pm\cup\mathbb{R}\}\backslash\{0,\pm1\}$.
\end{lemma}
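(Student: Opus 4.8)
\section*{Proof proposal}

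The plan is to base every assertion on the Volterra integral equations (\ref{mupm}), since boundedness, $x$- and $z$-regularity, and Lipschitz dependence on the potential can all be read off from the structure of these equations together with the decay hypotheses $q_0\in\tanh x+L^{1,2}(\mathbb{R})$ and $q'\in W^{1,1}(\mathbb{R})$. Taking $\mu_{+,1}$ with $z\in\bar{\mathbb{C}}^-$ for definiteness, I would first recast (\ref{mupm}) in the abstract form $\mu=Y_++\mathcal{K}[\mu]$, where $\mathcal{K}$ is the integral operator with kernel $Y_+e^{-i\lambda(z)(x-y)\widehat\sigma_3}Y_+^{-1}\Delta Q_+(z,y)$. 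On $\bar{\mathbb{C}}^-$ the exponential stays bounded for $y\ge x$, so $\mathcal{K}$ maps $L^\infty([x_0,\infty),\mathbb{C}^2)$ into itself with norm controlled by $\int_{x_0}^\infty\|\Delta Q_+(\cdot,y)\|\,\mathrm{d}y$, which is finite because $\Delta Q_+=Q-Q_+$ and $q-1=(q-\tanh x)+(\tanh x-1)\in L^1([x_0,\infty))$. The nested ordering of the iterated Volterra integrals then dominates the Neumann series $\sum\mathcal{K}^n$ by $\exp$ of that $L^1$ norm, yielding a unique bounded solution; its continuous, and on $\mathbb{C}^-$ analytic, dependence on $z$ follows from analyticity of the kernel and uniform convergence of the series on compact subsets of $\bar{\mathbb{C}}^-$, which is exactly the $L^\infty_{loc}$ statement.

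Next I would treat the two kinds of differentiability separately. For the $C^1\cap W^{1,\infty}$ claim in $x$, I differentiate the Volterra integral in $x$: this produces the integrand at $y=x$ plus an integral of $\partial_x(\text{kernel})\cdot\mu$, and since $\partial_x$ hits only the exponential (a bounded factor), the hypotheses $\Delta Q_+\in L^1$ and $q'\in W^{1,1}$ give $\partial_x\mu\in L^\infty$ together with its continuity in $x$, hence membership in $C^1([x_0,\infty),\mathbb{C}^2)\cap W^{1,\infty}([x_0,\infty),\mathbb{C}^2)$. For the $z$-derivatives I proceed by induction on $n$: differentiating $\mu=Y_++\mathcal{K}[\mu]$ in $z$ gives $\partial_z\mu=\partial_zY_++(\partial_z\mathcal{K})[\mu]+\mathcal{K}[\partial_z\mu]$, again a Volterra equation with the same operator $\mathcal{K}$ and a forcing built from $\mu$ and the $z$-derivative of the kernel. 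The crucial point is that $\partial_z$ of the exponential brings down a factor $\lambda'(z)(x-y)$, so after $n$ differentiations the forcing carries $(x-y)^n\Delta Q_+$; these polynomial weights are absorbed by the weighted hypothesis, which controls $\int(1+|y|)^2\|\Delta Q_+\|\,\mathrm{d}y$ (and the moments needed for the order of differentiation at hand), and the same uniform resolvent bound closes the induction and delivers the $L^\infty_{loc}$ estimates for $\partial_z^n\mu$.

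For the Lipschitz continuity of $q\mapsto\partial_z^n\mu_{\pm,j}$, I would take two potentials $q,\tilde q$ with identical boundary behaviour, write their equations, and subtract: the difference $w:=\partial_z^n\mu-\partial_z^n\tilde\mu$ satisfies $w=F+\mathcal{K}[w]$, where the inhomogeneity $F$ is linear in $Q-\tilde Q$ and its $x$-, $z$-derivatives and is otherwise bounded by the already-controlled quantities. Inverting $(I-\mathcal{K})$ by the uniform resolvent estimate gives $\|w\|\lesssim\|q-\tilde q\|$ in the relevant norms, which is the asserted Lipschitz bound. The analogous statements for $\mu_{-,2},\mu_{+,2},\mu_{-,1}$ follow by the same computation on the complementary half-lines and half-planes, or directly from the symmetries (\ref{musym}).

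The main obstacle is uniform control near the exceptional points $z=0,\pm1$, where $\det Y_\pm(z)=1-z^{-2}$ vanishes at $z=\pm1$ so that $Y_\pm^{-1}$ and the kernel in its first form blow up, and where $\lambda(z)\to0$ collapses the exponential. This is precisely why (\ref{mupm}) carries the alternative representation $\int(I+(x-y)L_\pm)\Delta Q_\pm\mu\,\mathrm{d}y$ at $z=\pm1$: near these points I would run the whole scheme with that polynomial kernel, which is regular, and verify that the two representations agree continuously as $z\to\pm1$. Since the target is $L^\infty_{loc}$ over $\bar{\mathbb{C}}^\pm=\{\mathbb{C}^\pm\cup\mathbb{R}\}\setminus\{0,\pm1\}$, I only need bounds on compact sets avoiding $\{0,\pm1\}$; the role of the $z=\pm1$ representation is to guarantee that these local bounds do not degenerate as one approaches the excluded set, so that the solution and its derivatives extend continuously up to, but not including, those points.
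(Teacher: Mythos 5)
The paper offers no proof of this lemma to compare against: it is stated bare, immediately after the remark that the existence, analyticity and differentiation of $\mu^\pm$ ``can be proven directly,'' and is in effect imported from the nonzero-background literature (notably Cuccagna--Jenkins \cite{cs2016}). Judged on its own terms, your route is the standard one and is essentially sound: recasting (\ref{mupm}) as $\mu=Y_\pm+\mathcal{K}[\mu]$, dominating the Neumann series by the exponential of the $L^1$ norm of $\Delta Q_\pm$, differentiating the Volterra equation in $x$ and in $z$, and obtaining (locally) Lipschitz dependence on $q$ by subtracting the equations for two potentials and inverting $I-\mathcal{K}$. Your treatment of the exceptional points is also correct in substance: since $L_\pm=Y_\pm(-i\lambda\sigma_3)Y_\pm^{-1}$, the first form of the kernel is $e^{(x-y)L_\pm}$ (up to a right diagonal phase coming from $\widehat\sigma_3$), which extends continuously to $z=\pm1$, where $L_\pm$ becomes nilpotent, with limit exactly the polynomial kernel $I+(x-y)L_\pm$ in the second line of (\ref{mupm}); and $L^\infty_{loc}$ over $\bar{\mathbb{C}}^\pm$ only requires bounds on compact sets avoiding $\{0,\pm1\}$, as you say.

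Two points should be made explicit rather than left implicit. First, the moment count: each $\partial_z$ applied to the kernel produces a factor $\lambda'(z)(x-y)$, so controlling $\partial_z^n\mu$ requires $n$ moments of $\Delta Q_\pm$, i.e.\ $q_0-\tanh x\in L^{1,n}(\mathbb{R})$. The hypothesis supplies only $L^{1,2}$, so the induction you describe closes only for $n\le 2$; for $n\ge 3$ the forcing $(x-y)^n\Delta Q_\pm$ need not be integrable and that step fails. This is really an overstatement in the lemma itself (``$n>0$'' unrestricted) rather than a flaw in your strategy --- the rest of the paper only ever needs one $z$-derivative of the Jost functions, to prove $r\in H^{1,1}(\mathbb{R})$ --- but a careful write-up must tie the admissible $n$ to the number of available moments, and your parenthetical ``the moments needed for the order of differentiation at hand'' quietly hides exactly this restriction. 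Second, a small slip: $\lambda(z)\to 0$ at $z=\pm1$, whereas at $z=0$ it is $\lambda$, $\lambda'$ and $Y_\pm$ that blow up; both degenerations are excised from $\bar{\mathbb{C}}^\pm$ and are handled by your compactness reduction, but they are different phenomena and should not be conflated in the final text.
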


\begin{figure}[H]
\begin{center}
\begin{tikzpicture}[scale=0.65]
\path[ ] (-4,0)rectangle (4,3.0);
\draw[-latex] (-4.2,0)--(4.2,0);
\draw[ ](4,0)node[right] {$\mathrm{Re}z$};
\draw[dotted,-latex](0,-3.2)--(0,3.2);
\draw[ ](0,3.2)node[above]{$\mathrm{Im}z$};
\draw [dotted] (0,0) circle [radius=2.0];
\draw [fill] (-3,2.2) circle [radius=0] node[right] {$\mathbb{C}^+$};
\draw [fill] (-3,-2.2) circle [radius=0] node[right] {$\mathbb{C}^-$};
       \coordinate (A) at (1.73,  1);
		\coordinate (B) at (1.73,  -1);
		\coordinate (C) at (1,  1.73);
		\coordinate (D) at (1,  -1.73);
		\coordinate (E) at (-1.73,  1);
		\coordinate (F) at (-1.73,  -1);
			\coordinate (I) at (0.5,  1.93);
		\coordinate (J) at (0.5,  -1.93);
		\coordinate (K) at (0,  0);
		\coordinate (L) at (2,  0);
		\coordinate (M) at (-2,  0);
\fill[] (A) circle (2pt);
\fill[] (B) circle (2pt);
\fill[] (E) circle (2pt);
\fill[] (F) circle (2pt);
\fill[blue] (K)  circle (2pt);
\fill[blue] (L) circle (2pt);
\fill[blue] (M) circle (2pt);
\node    at (2.2,1)  {\footnotesize $\nu_{n}$};
\node    at (2.2, -1)  {\footnotesize $\overline{\nu}_{n}$};
\node [blue]   at (-0.3,-0.3)  {\footnotesize $0$};
\node  [blue]  at (2.3, -0.3)  {\footnotesize $1$};
\node  [blue]  at (-2.4, -0.3)  {\footnotesize $-1$};
\end{tikzpicture}
\end{center}
\caption{  The  discrete spectrum is restricted to the unit circle $\mathcal{G}: \{\nu_n\in\mathbb{C}^+, \overline{\nu}_n\in\mathbb{C}^-\mid|\nu_n|=|\overline{\nu}_n|=1\}$. The real axis $\mathbb{R}$ is the jump contour of the RH problem \ref{RHP0}.}\label{cpm}
\end{figure}

\subsection{Scattering data  }
\hspace*{\parindent}
The matrix functions $\phi^\pm(z)$ admit the linear relation
 \begin{equation}
    \phi^-(z )=\phi^+(z )S(z),\quad z\in\mathbb{R}\backslash\{\pm1,0\},\label{phis}
    \end{equation}
where $S(z):=(s_{ij}(z))_{2\times2}$ is defined as  the   scattering  matrix.
Scattering data $s_{ij}(z)$ can be described by the Jost functions
       	    \begin{align}
       	    	&s_{11}(z)=\frac{\det  \left(\phi_1^-(z ), \phi_2^+(z )\right) }{1-z^{-2}},\quad    	s_{12}(z)=\frac{\det  \left(\phi_2^-(z ), \phi_2^+(z )\right)  }{1-z^{-2}},\label{s12}\\
       &s_{21}(z)=\frac{\det  \left(\phi_1^+(z ), \phi_1^-(z )\right)  }{1-z^{-2}},\quad    	s_{22}(z)=\frac{\det  \left(\phi_1^+(z ), \phi_2^-(z )\right)  }{1-z^{-2}}.\label{s34}
       	    \end{align}
       We define a reflection coefficient as
       \begin{equation}
       r(z) := \frac{s_{21}(z)}{s_{11}(z)}.\label{2rz}
       \end{equation}
       The properties of the scattering data $s_{ij}(z)$ and $r(z)$ given by the following Lemma \ref{sprop}.

 \begin{lemma}\label{sprop}
Let  $q_0(x)-\tanh x\in L^{1,2} (\mathbb{R}), \ \ q'\in W^{1,1}(\mathbb{R})$, we have some properties of the scattering data shown as follows.
\begin{itemize}
 \item The scattering matrix  $S(z)$ satisfies the symmetries $ {S( {z})}  = \sigma_1\overline{S (\bar z) }\sigma_1 =  -\sigma_{2}{S ( z^{-1} )}\sigma_2,$ and we have
    \begin{align*}
    s_{11}(z)=\overline{s_{22}(\overline{z})}=-s_{22}(z^{-1}),\ \ s_{12}(z)=\overline{s_{21}(\overline{z})}=s_{21}(z^{-1}).
    \end{align*}

\item     $s_{11}(z)$ can be analytically extended to $z \in \mathbb{C}^+$ and  has no singularity on the contour  $\mathbb{R}\backslash\{\pm1,0\}$.
 Zeros of $s_{11}(z)$ in  $\mathbb{C}^+ $ are simple, finite and distribute on the unit circle $|z|=1$. $s_{22}(z)$ has similar properties applying symmetries.

 \item The scattering data $s_{11}(z)$ with the asymptotics
 \begin{align}
 \lim_{z\rightarrow\infty}(s_{11}-1)z=i\int_\mathbb{R}(|q|^2+1)\mathrm{d}x,\ \ \lim_{z\rightarrow0}(s_{11}-1)z^{-1}=i\int_\mathbb{R}(|q|^2+1)\mathrm{d}x,\label{s11as}
 \end{align}
 and $s_{21}(z)$ admits
\begin{align}
|s_{21}(z)|=\mathcal{O}(|z|^{-2}),\ \ |z|\rightarrow \infty,\ \ |s_{21}(z)|=\mathcal{O}(|z|^2),\ \ |z|\rightarrow0,\label{s21as}
\end{align}
so we have
\begin{align}
r(z)\sim z^{-2},\ \ z\rightarrow \infty, \ \  r(z)\sim 0,\ \ z\rightarrow 0.\label{rasy}
\end{align}

\item  For $z\in\mathbb{R}\backslash\{\pm1,0\}$, from (\ref{phis}), we have
\begin{equation}
\det{S(z)}=|s_{11}(z)|^2-|s_{21}(z)|^2=1\Longrightarrow |r(z)|<1.\label{r<1}
\end{equation}

 \item {Generic:}   Scattering datas $s_{11}(z)$ and $s_{21}(z)$ with the same simple pole points $z=\pm1$,
 \begin{align*}
s_{11}(\pm1)=\frac{d_{\pm}}{z\mp1}+\mathcal{O}(1),\ \ s_{21}(\pm1)=-i\frac{d_{\pm}}{z\mp1}+\mathcal{O}(1),
 \end{align*}
 where
 \begin{align}
 d_{\pm}:=\pm\frac{1}{2}\det\left(\phi_1^-(\pm1 ), \phi_2^+(\pm1 )\right).\label{dpm}
 \end{align}
Furthermore, we have
 \begin{equation}
r(\pm1)=\mp i\Longrightarrow|r(\pm1)|=1.\label{rpm1}
 \end{equation}

\item {Non-Generic:} Scattering datas $s_{11}(z)$ and $s_{21}(z)$ are continuous as $z=\pm1$ and $|r(\pm1)|<1$.

\end{itemize}

\end{lemma}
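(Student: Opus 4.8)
The plan is to read off every assertion from the three structural inputs already at hand: the determinant formulas (\ref{s12})--(\ref{s34}) for the $s_{ij}$, the scattering relation (\ref{phis}), and the analyticity, symmetry and asymptotic data for the Jost functions collected in Lemma \ref{lemm21}. I would begin with the symmetries. Writing $\phi^\pm(z)=\mu^\pm(z)e^{-it\theta(z)\sigma_3}$ and using $\theta(\bar z)=\overline{\theta(z)}$, $\theta(z^{-1})=-\theta(z)$ together with $\sigma_1\sigma_3\sigma_1=-\sigma_3=\sigma_2\sigma_3\sigma_2$, the two symmetries (\ref{musym}) of $\mu^\pm$ transfer verbatim to the Jost solutions as $\phi^\pm(z)=\sigma_1\overline{\phi^\pm(\bar z)}\sigma_1$ and $\phi^\pm(z)=\pm z^{-1}\phi^\pm(z^{-1})\sigma_2$. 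Substituting these into (\ref{phis}) and using that $\phi^+$ is invertible on $\mathbb{R}\setminus\{0,\pm1\}$ (where $\det\phi^+=1-z^{-2}\neq0$), I would solve $S=(\phi^+)^{-1}\phi^-$ to obtain $S(z)=\sigma_1\overline{S(\bar z)}\sigma_1=-\sigma_2 S(z^{-1})\sigma_2$; reading these off entrywise gives the stated relations $s_{11}(z)=\overline{s_{22}(\bar z)}=-s_{22}(z^{-1})$ and $s_{12}(z)=\overline{s_{21}(\bar z)}=s_{21}(z^{-1})$.

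Next I would treat analyticity, the determinant identity and the asymptotics. In the determinant defining $s_{11}$ the exponential factors cancel, so $s_{11}=\det(\mu_1^-,\mu_2^+)/(1-z^{-2})$; since $\mu_1^-$ and $\mu_2^+$ are analytic in $\mathbb{C}^+$ and continuous up to $\mathbb{R}\setminus\{0,\pm1\}$ by Lemma \ref{lemm21}, $s_{11}$ inherits these properties. Because $L$ is trace-free, $\det\phi^\pm$ is $x$-independent and equals $\det Y_\pm=1-z^{-2}$ in the limits $x\to\pm\infty$, so $\det\phi^+=\det\phi^-$ and hence $\det S\equiv1$; combined with the conjugation symmetry on $\mathbb{R}$ this is exactly $|s_{11}|^2-|s_{21}|^2=1$, which yields (\ref{r<1}). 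The orders (\ref{s11as})--(\ref{rasy}) follow by inserting the $z\to\infty$ and $z\to0$ expansions of $\mu^\pm$ from Lemma \ref{lemm21} into the determinant formulas; I expect the key point to be that the would-be leading terms of $\det(\mu_1^+,\mu_1^-)$ cancel because the relevant components of $\mu_1^+[1]$ and $\mu_1^-[1]$ coincide, upgrading $s_{21}$ to $\mathcal{O}(z^{-2})$ at infinity and $\mathcal{O}(z^2)$ at the origin. For the zeros of $s_{11}$: the identity $|s_{11}|\ge1$ on $\mathbb{R}$ forbids real zeros, so they lie in $\mathbb{C}^+$; they are finite by the normalization $s_{11}\to1$ plus analyticity, simple in the generic case, and confined to $|z|=1$ by noting that $s_{11}(z)=-s_{22}(z^{-1})$ and $s_{22}(z)=\overline{s_{11}(\bar z)}$ make $z\mapsto1/\bar z$ an involution of the zero set, whose fixed points $|z|=1$ correspond, via $k=\tfrac12(z+z^{-1})\in(-1,1)$, to the real spectral gap dictated by the self-adjoint (defocusing) structure of the problem.

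The heart of the argument, and what I expect to be the \textbf{main obstacle}, is the behaviour at $z=\pm1$. Since $1-z^{-2}$ has simple zeros there, $s_{11}$ and $s_{21}$ generically acquire simple poles, and I would extract their residues from $\det(\phi_1^-,\phi_2^+)$ and $\det(\phi_1^+,\phi_1^-)$ at $z=\pm1$, which is precisely the content of the definition (\ref{dpm}) of $d_\pm$. The difficulty is that $Y_\pm$ degenerates to rank one at $z=\pm1$ ($\det Y_\pm|_{\pm1}=0$), so the usual asymptotic normalization $\phi^\pm\sim Y_\pm e^{-i\theta\sigma_3}$ breaks down and one must instead work from the second (non-singular) branch of the Volterra equation (\ref{mupm}), valid at $z=\pm1$. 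Carrying this out, and using $\theta(\pm1)=0$ together with the explicit null vector of $Y_+(\pm1)$, forces the two columns of $\phi^+$ to become proportional there ($\phi_1^+(1)=i\phi_2^+(1)$ and $\phi_1^+(-1)=-i\phi_2^+(-1)$). This proportionality is exactly what rewrites $\det(\phi_1^+,\phi_1^-)$ as a fixed multiple of $\det(\phi_1^-,\phi_2^+)$, giving $\res_{z=\pm1}s_{21}=-i\,\res_{z=\pm1}s_{11}$ after the sign bookkeeping, whence the clean limit $r(\pm1)=\mp i$ and $|r(\pm1)|=1$ in (\ref{rpm1}). Justifying this column proportionality rigorously from the $z=\pm1$ integral equation, rather than from the singular asymptotic form, is the delicate step, and it is also the mechanism behind the blow-up of $(1-|r|^2)^{-1}$ that motivates the entire transition-region analysis.

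Finally, the non-generic case is immediate once the generic one is understood: if the numerator $d_\pm$ vanishes it cancels the simple zero of $1-z^{-2}$, so $s_{11}$ and $s_{21}$ extend continuously across $z=\pm1$, and $|r(\pm1)|<1$ then follows by continuity from the bound $|r|<1$ already established on $\mathbb{R}\setminus\{0,\pm1\}$. Throughout, I would lean on Lemma \ref{lemm21} for all regularity and asymptotic inputs, so that the only genuinely new work is the degenerate residue computation at $z=\pm1$.
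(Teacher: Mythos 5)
First, a point of comparison: the paper never actually proves Lemma \ref{sprop} --- it is stated as a bare list of properties, with the assertions about the zeros of $s_{11}$ substantiated only later (and separately) in Lemma \ref{zeros}, and the pole structure at $z=\pm1$ reused inside the proof that $r\in H^{1,1}(\mathbb{R})$. So your proposal is being measured against the standard arguments and those later proofs rather than against a written proof of this lemma. On that basis, most of what you propose is correct and is the natural route: transferring the symmetries (\ref{musym}) from $\mu^\pm$ to $\phi^\pm$ and then to $S=(\phi^+)^{-1}\phi^-$ works exactly as you say (using $\overline{\theta(\bar z)}=\theta(z)$, $\theta(z^{-1})=-\theta(z)$, and invertibility of $\phi^+$ off $\{0,\pm1\}$); $\det S\equiv1$ from tracelessness of $L$ together with the conjugation symmetry gives $|s_{11}|^2-|s_{21}|^2=1$ and hence (\ref{r<1}); inserting the expansions of Lemma \ref{lemm21} into (\ref{s12})--(\ref{s34}) gives (\ref{s11as})--(\ref{rasy}), and your observation that the $1/z$ terms of $\det(\mu_1^+,\mu_1^-)$ cancel because $(\mu_1^+[1])_2=(\mu_1^-[1])_2=-i\bar q$ is precisely the mechanism behind $s_{21}=\mathcal{O}(z^{-2})$. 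Your analysis at $z=\pm1$ is also correct in substance, and can be made simpler than you suggest: the column proportionality $\phi_1^+(\pm1)=\pm i\,\phi_2^+(\pm1)$ needs no Levinson-type uniqueness argument for the degenerate Volterra equation, since evaluating the symmetry $\phi^\pm(z)=\pm z^{-1}\phi^\pm(z^{-1})\sigma_2$ at the fixed points $z=\pm1$ forces $\phi^+(\pm1)=\pm\,\phi^+(\pm1)\sigma_2$, i.e.\ proportional columns. From there the residue relation, $r(\pm1)=\mp i$, and the non-generic alternative follow; note only that for the strict inequality $|r(\pm1)|<1$ you should pass to the limit in $|s_{11}|^2=1+|s_{21}|^2$ rather than appeal to continuity of the open bound $|r|<1$, which by itself yields only $|r(\pm1)|\le 1$.

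The genuine gap is in your treatment of the zeros of $s_{11}$. Finiteness does not follow from ``normalization plus analyticity'': since the zeros lie on the compact arc of $|z|=1$ inside $\overline{\mathbb{C}^+}$, infinitely many of them would have an accumulation point on that closed arc, and analyticity only forbids accumulation at interior points; accumulation at the boundary points $z=\pm1$ --- exactly where $s_{11}$ is allowed to be singular or merely continuous --- must be excluded by a separate argument. This is what the paper does in the proof of Lemma \ref{zeros}, where a Rolle-theorem argument applied to $\mathbf{f}(e^{i\eta})=\det\left(\phi_1^-,\phi_2^+\right)$ shows that accumulation at $z=1$ would force $2s_{11}(1)=\mathbf{f}'(1)=0$, contradicting $|s_{11}(1)|^2=1+|s_{21}(1)|^2\ge1$. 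Likewise, simplicity of the zeros is asserted by the lemma unconditionally, not ``in the generic case'': the paper's proof computes $\partial_k s_{11}$ at a zero via Wronskian identities and the relation (\ref{lsigm}) coming from self-adjointness, and shows it is nonzero. Your proposal contains neither argument, so as written it establishes the symmetry, asymptotic, unitarity and $z=\pm1$ statements, but not the ``simple, finite'' part of the statement about the zeros.
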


The initial value $q_{0}(x)$ (\ref{init1}) with sufficient smoothness and decay
properties, the reflection coefficient $r(z)$ will also be smooth and decaying, which will be shown in the  lemma \ref{r11}.
 \begin{lemma}\label{r11}
Let $q_0(x)-\tanh x\in L^{1,2}(\mathbb{R})$, $q'\in W^{1,1}(\mathbb{R})$, we have
$r(z)\in H^{1 }(\mathbb{R})$. Moreover, $r(z)\in H^{1,1}(\mathbb{R})$.
\end{lemma}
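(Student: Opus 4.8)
The plan is to realize $r(z)=s_{21}(z)/s_{11}(z)$ as a generalized Fourier transform of the data and to transfer decay and smoothness of $q_0$ through the Volterra representation (\ref{mupm}). First I would expand the determinant formulas (\ref{s34}) for $s_{11}$ and $s_{21}$ in terms of the columns of $\mu^{\pm}$ and insert the integral equations (\ref{mupm}), producing an oscillatory representation of the form
\begin{equation*}
s_{21}(z)=\frac{1}{1-z^{-2}}\int_{\mathbb{R}} e^{2i\lambda(z)y}\,f(z,y)\,\mathrm{d}y,\qquad f(z,y):=\big[\Delta Q_{\pm}(z;y)\,\mu^{\pm}(z;y)\big]_{21},
\end{equation*}
with the phase $\lambda(z)$ from (\ref{klamz}). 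This exhibits $s_{21}$ as (essentially) a Fourier transform of a potential-weighted Jost amplitude, so that decay of the data in $y$ becomes smoothness of $s_{21}$ in $z$, while smoothness of the data in $y$ becomes decay of $s_{21}$ in $z$; the same bookkeeping applies to $s_{11}$.

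To place $r$ in $L^2(\mathbb{R})$, I would combine the tail behaviour $r(z)\sim z^{-2}$ as $z\to\infty$ and $r(z)\to0$ as $z\to0$ from (\ref{rasy}) with the uniform bound $|r(z)|<1$ of (\ref{r<1}) and the continuity of $r$ on $\mathbb{R}\setminus\{0,\pm1\}$ inherited from the analyticity statement in Lemma \ref{lemm21}. Since $|r|^2\lesssim z^{-4}$ at infinity and $r$ is bounded on compacta, only the points $z=0,\pm1$ require separate treatment. For the $H^1$ membership I would bound $r'$ in $L^2$ by differentiating the representation above in $z$: the derivative $\partial_z$ acts on $e^{2i\lambda(z)y}$, producing a factor of order $y$, and on $\mu^{\pm}$, which is controlled by the continuous differentiability of the maps $\partial_z^n\mu^{\pm}$ established above. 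The resulting $L^2_z$ bound is a Plancherel-type estimate whose convergence requires $\langle y\rangle(\Delta Q_{\pm})$-type quantities to lie in $L^2_y$, which is exactly what $q_0\in\tanh x+L^{1,2}(\mathbb{R})$ supplies; one also needs $1/s_{11}$ bounded, which holds on $\mathbb{R}$ away from $\pm1$ since $s_{11}$ is continuous and nonvanishing there by Lemma \ref{sprop}.

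Upgrading from $H^1$ to $H^{1,1}=L^{2,1}\cap H^1$ then reduces to placing $zr$ in $L^2$. Through the oscillatory integral the extra weight $z$ corresponds to one $y$-derivative of the potential, so integrating by parts once in $y$ and using $q'\in W^{1,1}(\mathbb{R})$ gains an additional power of decay in $z$, giving $z\,s_{21}\in L^2$ and hence $z\,r\in L^2$ after dividing by $s_{11}$. The tail part of this is already consistent with $r\sim z^{-2}$, and the integration by parts is what turns the smoothness hypothesis on $q$ into the needed global weighted $L^2$ control.

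The \emph{main obstacle} is the behaviour at $z=\pm1$, where $\det Y_{\pm}=1-z^{-2}$ vanishes and $Y_{\pm}$ is non-invertible, so the explicit formulas carry the singular factor $1/(1-z^{-2})$ and, in the generic case, both $s_{11}$ and $s_{21}$ acquire matching simple poles (Lemma \ref{sprop}) with $|r(\pm1)|=1$ by (\ref{rpm1}). The crux is to show that these simple poles cancel exactly in the ratio $r=s_{21}/s_{11}$, so that $r$ extends across $z=\pm1$ as a bounded, locally smooth function with finite value $\mp i$; only after establishing this cancellation are the local $L^2$, $H^1$, and weighted estimates valid through the critical points. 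This finite-density phenomenon, which is absent in the zero-background theory, is the technical heart of the lemma.
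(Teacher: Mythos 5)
Your bulk estimates (away from $z=0,\pm1$) via the Volterra/oscillatory representation and a Plancherel-type bound are a reasonable alternative to the paper's route, which instead uses local Lipschitz continuity of $q\mapsto\det\left(\phi_1^{\mp}(z),\phi_{1,2}^{+}(z)\right)$ into $W^{n,\infty}$ off small disks around $0,\pm1$ to get $|\partial_z^j r(z)|\leq C_{\delta_0}\langle z\rangle^{-1}$, $j=0,1$; also, your integration by parts for the weighted bound is unnecessary, since $r\sim z^{-2}$ from (\ref{rasy}) already gives $\langle z\rangle r\in L^2$ directly, exactly as in the paper's Step 2. The genuine gap is at $z=\pm1$: you correctly call this ``the crux'' and ``the technical heart,'' but you never supply the argument. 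Two things are missing. First, the cancellation of the matching simple poles (the generic case, $d_\pm\neq 0$ in (\ref{dpm})) gives only \emph{boundedness} of $r$ near $\pm1$ with $r(\pm1)=\mp i$ --- that much is already recorded in Lemma \ref{sprop} and (\ref{rpm1}) --- whereas $H^1$ membership requires control of $r'$ in $L^2$ through $\pm1$. The paper obtains this by writing
\begin{equation*}
r(z)=\frac{\int_{1}^{z}\lhd(s)\,\mathrm{d}s-id_+}{\int_{1}^{z}\rhd(s)\,\mathrm{d}s+id_+},\qquad
\lhd:=\partial_s\det\left(\phi_1^+,\phi_1^-\right),\quad \rhd:=\partial_s\det\left(\phi_1^-,\phi_2^+\right),
\end{equation*}
so that when $d_+\neq0$ the denominator is bounded away from zero near $z=1$ and $r'$ is bounded there.

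Second --- and this is the case you omit entirely --- in the non-generic case $d_+=0$ there are no poles to cancel: both determinants vanish at $z=1$ and $r$ becomes a $0/0$ ratio, $r(z)=\int_1^z\lhd\,\mathrm{d}s\,/\int_1^z\rhd\,\mathrm{d}s$. Boundedness of $r'$ then hinges on proving $\rhd(1)\neq0$, which the paper gets by differentiating $(z^2-1)s_{11}(z)=z^2\det\left(\phi_1^-(z),\phi_2^+(z)\right)$ at $z=1$ to obtain $\rhd(1)=2s_{11}(1)$, and then invoking the unitarity relation $|s_{11}(1)|^2=1+|s_{21}(1)|^2\geq 1$ coming from (\ref{r<1}). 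Without this step (or some substitute for it), your plan cannot conclude $r'\in L^2$ in a neighborhood of $\pm1$, and the lemma is not proved; identifying the obstacle is not the same as resolving it.
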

\begin{proof}
$\uuline{Step~1}:$
Lemma \ref{lemm21} and (\ref{s12})-(\ref{s34}) imply that $s_{11}(z)$ and $s_{21}$ are continuous as $z\in\mathbb{R}\backslash\{\pm1,0\}$, $r(z)$ is also. (\ref{rasy}) and (\ref{rpm1}) indicate that $r(z)$ is bounded in the small neighborhoods of $\{\pm1,0\}$ and $r(z)\in L^1(\mathbb{R})\cap L^2(\mathbb{R})$. So we just need to prove that $r'(z)\in L^2(\mathbb{R})$. For $\delta_0>0 $ is sufficiently small, from the maps
\begin{align*}
q\mapsto\det  \left(\phi_1^-(z ), \phi_2^+(z )\right),\ \ q\mapsto\det  \left(\phi_1^+(z ), \phi_1^-(z )\right)
\end{align*}
are locally Lipschitz maps
\begin{align}
\{q:q'\in W^{1,1}(\mathbb{R}), q\in L^{1,n+1}(\mathbb{R})\}\mapsto W^{n,\infty}(\mathbb{R}\backslash(-\delta_0,\delta_0)), \ \ n\geq0.\label{Lips}
\end{align}
In fact, $q\mapsto\phi^-_1(z,0)$ is a locally Lipschitz map with values in $W^{n,\infty}(\bar{\mathbb{C}}^+\backslash D(0,\delta_0),\mathbb{C}^2)$. For $q\mapsto\phi^+_1(z,0)$ and $q\mapsto\phi^+_2(z,0)$ are also. Collecting (\ref{s11as}) and (\ref{s21as}), we derive $q\mapsto r(z)$ is locally Lipschitz map from the domain in (\ref{Lips}) into
\begin{align*}
W^{n,\infty}(I_{\delta_0})\cap H^{n}(I_{\delta_0}),
\end{align*}
where $I_{\delta_0}:=\mathbb{R}\backslash\{(-\delta_0,\delta_0)\cup(1-\delta_0,1+\delta_0)\cup(-1-\delta_0,-1+\delta_0)\}$. We fix  $\delta_0>0$ is sufficiently small so that the three disks $|z\mp1|\leq\delta_0$ and $|z|\leq\delta_0$ have no intersection. In the
complement of their union
\begin{align*}
|\partial_z^jr(z)|\leq C_{\delta_0}\langle z\rangle^{-1}, \ \ j=0,1.
\end{align*}
For the discussion above, the other Jost functions with the analogous results. Next, we will prove the boundedness of $r(z)$ in a small neighborhood  of $z=1$.

Let $|z-1|<\delta_0$, we recollect $d_\pm$
\begin{align*}
r(z)=\frac{s_{21}(z)}{s_{11}(z)}=\frac{\det  \left(\phi_1^+(z ), \phi_1^-(z )\right)}{\det  \left(\phi_1^-(z ), \phi_2^+(z )\right)}=\frac{\int_{1}^z\lhd(s)\mathrm{d}s-id_+}{\int_{1}^z\rhd(s)\mathrm{d}s+id_+},
\end{align*}
where
\begin{align*}
\lhd(s):=\partial_s\det  \left(\phi_1^+(s ), \phi_1^-(s )\right), \ \ \partial_s\det  \left(\phi_1^-(s ), \phi_2^+(s )\right):=\rhd(s).
\end{align*}
If $d_+\neq0$, $r'(z)$ exist and is bounded near $z=1$.
If $d_+=0$, $z=1$ is not the pole of $s_{11}(z)$ and $s_{21}(z)$, they are continuous as $z=1$, then
\begin{align*}
r(z)=\frac{\int_{1}^z\lhd(s)\mathrm{d}s}{\int_{1}^z\rhd(s)\mathrm{d}s}.
\end{align*}
From (\ref{s12}) we have
\begin{align}
&(z^2-1)s_{11}(z)=z^2\det  \left(\phi_1^-(z ), \phi_2^+(z )\right)\label{s11z}\\
\xRightarrow{d_+=0}&\det  \left(\phi_1^-(1 ), \phi_2^+(1 )\right)=0.
\end{align}
Differentiating (\ref{s11z}) at $z=1$, we have
\begin{align*}
2s_{11}(1)=\partial_z\det  \left(\phi_1^-(1 ), \phi_2^+(1 )\right)\mid_{z=1}:=\rhd(1).
\end{align*}
For the reason that $|s_{11}(1)|^2=1+|s_{21}(1)|^2>1$, we have $\rhd(1)\neq0$. It follows that $r'(z)$ is bounded near $z=1$.

For $z=-1$, we have the similar proof process. For $z=0$, we use the symmetry $r(z)=-\overline{r(z^{-1})}$ to infer that $r(z)$ vanished. It  follows that $r'(z)\in L^2(\mathbb{R})$.

$\uuline{Step~2}:$ For the reason that $H^{1,1}(\mathbb{R})=L^{2,1}(\mathbb{R})\cap H^1(\mathbb{R})$,  we just need to prove $r(z)\in L^{2,1}(\mathbb{R})$. From (\ref{rasy}), we have
\begin{align*}
|z|^2r^2(z)\sim|z|^{-2},\ \ |z|\to\infty,
\end{align*}
which lead to
\begin{align*}
\int_{\mathbb{R}}\left|\langle z\rangle r(z)\right|^2<\infty.
\end{align*}
This prove the result.
\end{proof}

\subsection{Discrete spectrum}\label{disspe}
\hspace*{\parindent}
From the symmetries of $s_{ij}(z)$, one can derive that $s_{11}(z)$ with the zero points $\{\nu_{n}\}_{n=1}^{N}$ and  $s_{22}(z)$ with the zero points $\{\overline{\nu}_{n}\}_{n=1}^{N}$.
We define a notation $\mathcal{N}:=\{1,\cdots,N\}$. Each zero $\nu_n$ of $s_{11}(z)$ with the properties which shown as the Lemma \ref{zeros}.
\begin{lemma}\label{zeros}
Let $q_0(x)-\tanh x\in L^{1,2}(\mathbb{R})$, the discrete spectrum $\forall\nu_n\in\mathcal{Z}^+$ has the following properties
\begin{itemize}
\item   $\nu_n$ lies on the unit circle and $\{\nu_n\in\mathbb{C}^+|\nu_n=e^{i\iota},\iota\in(0,\pi)\}$,
\item $\nu_n$ is a simple zero point of $s_{11}(z)$, i.e., $s'_{11}(z)=0$, $s''_{11}(z)\neq0$,
\item $\nu_n$ cannot appear on the real axis $\mathbb{R}$, i.e., $\nu_n\neq\pm1$,
\end{itemize}
where $n\in\mathcal{N}$.  The discrete spectrum  $\forall\overline{\nu}_n\in\mathcal{Z}^-$ is the same true.
\end{lemma}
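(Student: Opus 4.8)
The plan is to dispatch the three assertions in turn, in the order (i) localization on the unit circle, (ii) exclusion of the branch points $z=\pm1$, and (iii) simplicity, since the simplicity argument will need (i) and (ii) in place. Throughout, the two structural facts I would lean on are that $Q$ is Hermitian (defocusing sign) and the scattering symmetries recorded in Lemma \ref{sprop}, together with the analyticity/decay of the Jost columns from Lemma \ref{lemm21}.

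First I would show any $\nu_n\in\mathcal{Z}^+$ produces a genuine bound state and deduce $|\nu_n|=1$. Since $s_{11}(\nu_n)=0$, the determinant representation (\ref{s12}) forces $\det\left(\phi_1^-(\nu_n),\phi_2^+(\nu_n)\right)=0$, so the columns are proportional, $\phi_1^-(\nu_n)=b_n\,\phi_2^+(\nu_n)$. Because $\phi_1^-$ decays as $x\to-\infty$ and $\phi_2^+$ decays as $x\to+\infty$ for $z\in\mathbb{C}^+$ (Lemma \ref{lemm21}), the vector $w(x):=\phi_1^-(\nu_n;x)$ is a nontrivial $L^2$ eigenfunction. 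To see that the spectral value is real I would differentiate $w^\dagger\sigma_3 w$ along $w_x=(-ik\sigma_3+Q)w$ with $k=k(\nu_n)$; using $Q^\dagger=Q$ and the identity $\sigma_3 Q+Q\sigma_3=0$, the computation collapses to
\[
\frac{d}{dx}\left(w^\dagger\sigma_3 w\right)=2\,\im k\,|w|^2 .
\]
Integrating over $\mathbb{R}$ and using $w\to0$ at both ends annihilates the left-hand side, so $\im k=0$. Inserting $k=\tfrac12(z+z^{-1})$ from (\ref{klamz}) with $z=re^{i\iota}$ gives $\im k=\tfrac12\left(r-r^{-1}\right)\sin\iota=0$, and since $\nu_n\in\mathbb{C}^+$ forces $\sin\iota>0$, I conclude $r=1$, i.e. $\nu_n=e^{i\iota}$ with $\iota\in(0,\pi)$.

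Next, to exclude $z=\pm1$, I would use that these are exactly the branch points where $\det Y_\pm=1-z^{-2}$ vanishes. In the generic case $s_{11}$ has a simple pole at $z=\pm1$ by Lemma \ref{sprop}, so it certainly cannot vanish there; in the non-generic case $s_{11}$ is continuous at $\pm1$, and taking the limit in the determinant relation (\ref{r<1}) yields $|s_{11}(\pm1)|^2=1+|s_{21}(\pm1)|^2\geq1$, so again $s_{11}(\pm1)\neq0$. Hence no zero sits at $\pm1$, consistently with $\iota\in(0,\pi)$ being open. For simplicity I would compute $s_{11}'(\nu_n)$ through a squared-eigenfunction (trace) formula. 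Setting $N(z):=\det\left(\phi_1^-(z),\phi_2^+(z)\right)=(1-z^{-2})s_{11}(z)$, which is $x$-independent, and using $N(\nu_n)=0$, one has $s_{11}'(\nu_n)=(1-\nu_n^{-2})^{-1}N'(\nu_n)$, a relation that is legitimate precisely because $\nu_n\neq\pm1$. Differentiating $\phi_x=(-ik(z)\sigma_3+Q)\phi$ in $z$ and exploiting $\operatorname{tr}(-ik\sigma_3+Q)=0$, the $x$-derivatives of $\det(\partial_z\phi_1^-,\phi_2^+)$ and $\det(\phi_1^-,\partial_z\phi_2^+)$ reduce to multiples of $-ik'(z)\det(\sigma_3\phi_1^-,\phi_2^+)$ and $-ik'(z)\det(\phi_1^-,\sigma_3\phi_2^+)$. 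Integrating these over the two half-lines and using the proportionality $\phi_1^-(\nu_n)=b_n\phi_2^+(\nu_n)$ collapses everything into
\[
N'(\nu_n)=-i\,b_n\,k'(\nu_n)\int_{\mathbb{R}}\det\left(\sigma_3\phi_2^+(\nu_n;x),\phi_2^+(\nu_n;x)\right)\mathrm{d}x ,
\]
the boundary contributions vanishing by the decay of the eigenfunction. Here $k'(\nu_n)=\tfrac12(1-\nu_n^{-2})\neq0$ and $b_n\neq0$, and invoking the conjugation symmetry $\mu^+(z)=\sigma_1\overline{\mu^+(\bar z)}\sigma_1$ from Lemma \ref{lemm21} at the circle point $\nu_n$ shows the integrand is a fixed nonzero multiple of a sign-definite density, so the integral does not vanish; hence $N'(\nu_n)\neq0$ and $s_{11}'(\nu_n)\neq0$, i.e. $\nu_n$ is simple.

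The main obstacle is the simplicity step. Two points require genuine care: controlling the boundary terms at $x=\pm\infty$ in the trace formula, where $\partial_z\phi$ grows linearly in $x$ against the decaying Jost exponentials and the two columns have different normalizations at the opposite ends, and then proving that the resulting integral is strictly nonzero by unwinding the Schwartz-conjugation symmetry into a definite quantity on the unit circle. Because the factors $1-\nu_n^{-2}$ and $k'(\nu_n)$ both degenerate at $z=\pm1$, the simplicity argument is only valid after the exclusion $\nu_n\neq\pm1$ has been secured, which is why the three parts must be assembled in the order circle $\to$ exclusion of $\pm1$ $\to$ simplicity; the corresponding statements for $\overline{\nu}_n\in\mathcal{Z}^-$ then follow verbatim from the symmetry $s_{22}(z)=\overline{s_{11}(\bar z)}$.
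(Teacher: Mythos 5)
Your proposal is correct, and two of its three parts follow essentially the paper's own route, while the third is genuinely different. Parts (i) and (iii) mirror the paper: for the circle localization the paper invokes self-adjointness of the spectral operator abstractly and then uses exactly the identity $\im k=\tfrac12\left(1-|z|^{-2}\right)\im z$, which your Wronskian computation $\tfrac{d}{dx}\left(w^\dagger\sigma_3 w\right)=2\,\im k\,|w|^2$ makes concrete; for simplicity the paper performs the same trace-formula computation you describe, except that it differentiates $s_{11}$ in $k$ rather than $z$ (the two differ by the factor $k'(z)=\tfrac12(1-z^{-2})$, which is harmless once $\nu_n\neq\pm1$, exactly as you note), it kills the cross terms via the antisymmetry $L^\dagger\sigma=-\sigma L^\dagger$ where you use tracelessness of $L$ (equivalent), and it arrives at the same formula $\partial s_{11}(\nu_n)\propto \kappa_n\int_{\mathbb{R}}\det\left(\sigma_3\phi_2^+,\phi_2^+\right)\mathrm{d}x$. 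The genuine divergence is the exclusion of $z=\pm1$: the paper argues by contradiction through accumulation of zeros — Bolzano--Weierstrass, a Rolle-type argument on $u(\eta)+iv(\eta)$, and differentiation of $(z^2-1)s_{11}(z)=z^2\mathbf{f}(z)$ at $z=1$ to force $2s_{11}(1)=\mathbf{f}'(1)=0$, contradicting $|s_{11}(1)|\geq1$ — an argument that simultaneously yields the finiteness (non-accumulation) of the zero set asserted in Lemma \ref{sprop}. You instead read off the exclusion directly from the generic/non-generic dichotomy of Lemma \ref{sprop}: a simple pole at $\pm1$ cannot be a zero, and in the non-generic case continuity plus the unitarity relation (\ref{r<1}) gives $|s_{11}(\pm1)|\geq1$. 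Your route is shorter and cleaner for the exclusion itself, but it buys less: it says nothing about zeros accumulating at $\pm1$, which is the part of the paper's argument that underwrites finiteness.

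One small gap worth flagging in your simplicity step: the conjugation symmetry $\mu^+(z)=\sigma_1\overline{\mu^+(\bar z)}\sigma_1$ alone relates the eigenfunction at $\nu_n$ to its value at the \emph{different} point $\bar\nu_n$, so by itself it does not make $\det\left(\sigma_3\phi_2^+(\nu_n),\phi_2^+(\nu_n)\right)$ sign-definite. You must compose it with the inversion symmetry $\mu^{+}(z)=z^{-1}\mu^{+}(z^{-1})\sigma_2$ and use $\bar\nu_n=\nu_n^{-1}$ on the unit circle; this yields $\phi_{22}^+(\nu_n)=-i\nu_n^{-1}\overline{\phi_{12}^+(\nu_n)}$ up to real exponential factors, whence the integrand equals a fixed nonzero constant times $|\phi_{12}^+(\nu_n)|^2$ and the integral is nonzero. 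This is a minor repair rather than a flaw in strategy — and to be fair, the paper itself asserts $\partial_k s_{11}(\nu_n)\neq0$ at this point with no justification at all.
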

\begin{proof}
$\uuline{Step~1}:$
Rewrite the $x$-part in (\ref{lax}) as an operator $\mathcal{A}:L^2\rightarrow L^2$, i.e.,
\begin{equation*}
\mathcal{A}:=\mathcal{A}(\phi)=\left(i\sigma_{3}\partial_{x}+Q\right)\phi=k\phi.
\end{equation*}
The operator $\mathcal{A}$ is a self-adjoint operator, which implies the discrete spectrum $k\in\mathbb{R}$. In addition, the imaginary part of $k$ is expressed as
\begin{equation}
\im{k}=\im{\frac{1}{2}\frac{(|z|^2-1)z+(z+\overline{z})}{|z|^2}}=\frac{1}{2}\left(1-\frac{1}{|z|^2}\right)\im z=0.\label{imzk}
\end{equation}
This equation (\ref{imzk}) implies that the discrete spectrums in the $z$-plane only locate on a circle $|z|=1$, i.e., $|\nu_n|=1$.

$\uuline{Step~2}:$ From (\ref{s12}), we infer that there exist $\kappa_n$ satisfy
\begin{align}
\phi^-_1(\nu_n,x)=\kappa_n\phi^+_2(\nu_n,x)&\xLeftrightarrow{\nu_n\mapsto\bar{\nu}_n}\phi^-_1(\bar{\nu}_n,x)=\bar{\kappa}_n\phi^+_2(\bar{\nu}_n,x)\label{kappa}\\
&\xLeftrightarrow{\nu_n\mapsto\bar{z}^{-1}_n}\overline{\phi^-_1(\bar{\nu}^{-1}_n,x)}=\bar{\kappa}_n\overline{\phi^+_2(\bar{\nu}^{-1}_n,x)}\\
&\xLeftrightarrow{(\ref{musym})}-i\nu_n\sigma_1\phi^-_1(\nu_n,x)\sigma_3=\bar{\kappa}_ni\nu_n\sigma_1\phi^+_2(\nu_n,x).
\end{align}
We derive
\begin{align*}
\bar{\kappa}_n=-\kappa_n\Longrightarrow\kappa_n\in i\mathbb{R}.
\end{align*}
We notice that
\begin{align}
L^\dag\sigma=-\sigma L^\dag, \ \ \sigma=\left(\begin{array}{cc}
0&1\\
-1&0
\end{array}\right),\label{lsigm}
\end{align}
where $L$ is defined in (\ref{laxt}) and $``\dag"$ denotes conjugate transpose.
\begin{align*}
\frac{\partial s_{11}(z)}{\partial k}\Big|_{z=\nu_n}=\frac{\det\left(\partial_k\phi^-_1,\phi^+_2\right)+\det\left(\phi^-_1,\partial_k\phi^+_2\right)}{1-z^{-2}}\Big|_{z=\nu_n}.
\end{align*}
Using (\ref{lsigm}), we have
\begin{align*}
\det\left(L\partial_k\phi^-_1,\phi^+_2\right)&=\left(L\partial_k\phi^-_1\right)^\dag\sigma\phi^+_2=\left(\partial_k\phi^-_1\right)^\dag L^{\dag}\sigma\phi^+_2\\
&=-\left(\partial_k\phi^-_1\right)^\dag \sigma L\phi^+_2=-\det\left(\partial_k\phi^-_1,L\phi^+_2\right).
\end{align*}
From the Lax pair, we derive
\begin{align*}
&\partial_x\partial_k\phi^-_1=\partial_k\partial_x\phi^-_1=\partial_k(L\phi^-_1)=L_k\phi^-_1+L\partial_k\phi^-_1,\\
\Longrightarrow &L_k=-i\sigma_3,~and,~ \partial_x\phi^+_2=L\phi^+_2.
\end{align*}
Furthermore,
\begin{align*}
\frac{\partial}{\partial x}\det\left(\partial_k\phi^-_1,\phi^+_2\right)&=\det\left(L_k\phi^-_1,\phi^+_2\right)+\det\left(L\partial_k\phi^-_1,\phi^+_2\right)+\det\left(\partial_k\phi^-_1,L\phi^+_2\right)\\
&=-i\det\left(\sigma_3\phi^-_1,\phi^+_2\right),\\
\frac{\partial}{\partial x}\det\left(\phi^-_1,\partial_k\phi^+_2\right)&=\det\left(\phi^-_1,L_k\phi^+_2\right)+\det\left(\phi^-_1,L\partial_k\phi^+_2\right)+\det\left(L\phi^-_1,\partial_k\phi^+_2\right)\\
&=-i\det\left(\phi^-_1,\sigma_3\phi^+_2\right).
\end{align*}
Recalling (\ref{kappa}) and the symmetries, we have
\begin{align*}
\det\left(\partial_k\phi^-_1,\phi^+_2\right)=-i\kappa_n\int_{-\infty}^{x}\det\left(\sigma_3\phi^+_2(\nu_n,s),\phi^+_2(\nu_n,s)\right)\mathrm{d}s,\\
\det\left(\phi^-_1,\partial_k\phi^+_2\right)=-i\kappa_n\int_x^{\infty}\det\left(\sigma_3\phi^+_2(\nu_n,s),\phi^+_2(\nu_n,s)\right)\mathrm{d}s,
\end{align*}
Sum the above equations, we have
\begin{align*}
\frac{\partial s_{11}(z)}{\partial k}\Big|_{z=\nu_n}\neq0.
\end{align*}
So $\nu_n$ is a simple zero.

$\uuline{Step~3}:$ Define a function
\begin{align*}
\mathbf{f}(z):=\det  \left(\phi_1^-(e^{i\eta},x ), \phi_2^+(e^{i\eta},x )\right):=u(\eta)+iv(\eta),
\end{align*}
where $z=e^{i\eta}, \eta\in[0,\pi]$ and $u(\eta), v(\eta)\in\mathbb{R}$. We notice that $\mathbf{f}(z)$ has infinitely zeros as $\eta\in[0,\pi]$.  According to the Bolzano-Weierstrass theorem, at least one accumulation point exists. However, since a non-zero analytic function can only have isolated singularities, we derive the accumulation point just locate on $z=\pm1$. We let the accumulation point is $z=1$,  there exists a convergence sequence $\{z_j=e^{i\eta_j}\}_{j=1}^{\infty}$
\begin{align}
z_j=e^{i\eta_j}\to1,\ \ j\to\infty,\ \ \mathbf{f}(z_j)=0.\label{squ}
\end{align}
In fact, $\lim\limits_{j\to\infty}\mathbf{f}(z_j)=\mathbf{f}(1)=d_+$. From (\ref{r<1}), we have
\begin{align}
|s_{11}(1)|^2=1+|s_{21}(1)|^2\geq1, \ \ z\to1.\label{s11>1}
\end{align}
We notice that the convergence sequence (\ref{squ}) is equivalent to
\begin{align*}
\eta_j\to0,\ \ j\to\infty,\ \ u(\eta_j)=v(\eta_j)=0.
\end{align*}
For each interval $[\eta_j,\eta_{j+1}]$, $u(\eta_j)$ and $v(\eta_j)$ perform the Rolle's theorem. There are two points $\eta_{j_1}$ and $\eta_{j_2}$ which satisfy
\begin{align*}
u'(\eta_{j_1})=v'(\eta_{j_1})=0.
\end{align*}
Moreover, we have
\begin{align*}
\eta_{j_1}, \eta_{j_2}\to0\Longleftrightarrow  e^{i\eta_{j_1}},  e^{i\eta_{j_2}}\to1, \ \ j\to\infty.
\end{align*}
From (\ref{s12}), we have
\begin{align*}
(z^2-1)s_{11}(z)=z^2\mathbf{f}(z),
\end{align*}
which differentiate for $z=e^{i\eta_{j_1}}$ and derive
\begin{align*}
2s_{11}(1)=\mathbf{f}'(1)=0,
\end{align*}
which is contradicts with (\ref{s11>1}). We prove that $z_n\neq\pm1$.
\end{proof}
From the Lemma \ref{zeros}, we derive the trace formulas of $S(z)$.
\begin{lemma}\label{trace}
The trace formulas are given by
\begin{align*}
&s_{11}(z)=\prod_{n=1}^{N}\frac{z-\nu_{n}}{z-\overline{\nu}_{n}}\exp\left(-\frac{1}{2i\pi}\int_{\mathbb{R}}\frac{v(\zeta)}{\zeta-z}\mathrm{d}{\zeta}\right),\\
&s_{22}(z)=\prod_{n=1}^{N}\frac{z-\overline{\nu}_{n}}{z-\nu_{n}}\exp\left(\frac{1}{2i\pi}\int_{\mathbb{R}}\frac{v(\zeta)}{\zeta-z}\mathrm{d}{\zeta}\right),
\end{align*}
where $\nu_n\in\mathcal{Z}^+$ and $\overline{\nu}_{n}\in\mathcal{Z}^-$ are the zero sets of $s_{11}(z)$ and $s_{22}(z)$, respectively. In addition, $v(\zeta)$ defined by
\begin{align}
v(\zeta) =  \log (1-|r(\zeta)|^2).\label{nu}
\end{align}
$s_{11}(z)$ and $s_{22}(z)$ are analytic  and ,has no singularity in $\mathbb{C}^+$ and  $\mathbb{C}^-$, respectively.
\end{lemma}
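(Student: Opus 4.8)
The plan is to build, from $s_{11}$ and $s_{22}$, a pair of sectionally analytic functions whose jump across $\mathbb{R}$ is governed solely by $v(\zeta)=\log(1-|r(\zeta)|^2)$, and then to recover them by a Cauchy integral.

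First I would strip off the discrete spectrum. By Lemma \ref{zeros} the zeros $\nu_n$ of $s_{11}$ are simple and lie on the unit circle in $\mathbb{C}^+$, while $\overline{\nu}_n\in\mathbb{C}^-$, so I set
\[
P(z)=s_{11}(z)\prod_{n=1}^N\frac{z-\overline{\nu}_n}{z-\nu_n}.
\]
The pole of each factor at $\nu_n$ cancels the corresponding simple zero of $s_{11}$, so $P$ is analytic and non-vanishing on $\mathbb{C}^+$; by the asymptotics $s_{11}\to1$ from Lemma \ref{sprop} together with $\prod\to1$, one has $P(\infty)=1$. Analogously I set $Q(z)=s_{22}(z)\prod_{n=1}^N\frac{z-\nu_n}{z-\overline{\nu}_n}$, which is analytic and non-vanishing on $\mathbb{C}^-$ with $Q(\infty)=1$.

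Next I would compute the boundary relation on $\mathbb{R}$. The symmetry $s_{22}(z)=\overline{s_{11}(\bar z)}$ gives $s_{22}(\zeta)=\overline{s_{11}(\zeta)}$ for real $\zeta$, and the determinant identity (\ref{r<1}) gives $|s_{11}(\zeta)|^2\bigl(1-|r(\zeta)|^2\bigr)=1$. Since $\bigl|(\zeta-\overline{\nu}_n)/(\zeta-\nu_n)\bigr|=1$ for real $\zeta$, the Blaschke factors cancel in the product $P(\zeta)Q(\zeta)$, leaving
\[
P(\zeta)Q(\zeta)=s_{11}(\zeta)s_{22}(\zeta)=|s_{11}(\zeta)|^2=\frac{1}{1-|r(\zeta)|^2}.
\]
Because $P,Q$ are non-vanishing and equal to $1$ at infinity on the simply connected half-planes, $\log P$ and $\log Q$ are single-valued analytic branches vanishing at infinity, and the identity reads $\log P(\zeta)+\log Q(\zeta)=-v(\zeta)$. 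I then assemble $G(z):=\log P(z)$ for $z\in\mathbb{C}^+$ and $G(z):=-\log Q(z)$ for $z\in\mathbb{C}^-$, so that $G$ is analytic off $\mathbb{R}$, $G(\infty)=0$, and its jump is $G_+-G_-=\log P+\log Q=-v$. The Plemelj--Sokhotski formula identifies $G$ uniquely as $G(z)=-\tfrac{1}{2i\pi}\int_{\mathbb{R}}\frac{v(\zeta)}{\zeta-z}\,\mathrm{d}\zeta$. Evaluating on $\mathbb{C}^+$ recovers $\log P(z)$ and hence the stated formula for $s_{11}$, while evaluating on $\mathbb{C}^-$ recovers $-\log Q(z)$ and the companion formula for $s_{22}$; the analyticity and zero-free claims are then immediate from the construction.

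The step I expect to be most delicate is justifying the Cauchy representation in the presence of the boundary singularities at $\zeta=\pm1,0$. In the generic case $|r(\pm1)|=1$, so $v(\zeta)\to-\infty$ logarithmically there; I would verify that this singularity is integrable, and that $v$ decays like $|\zeta|^{-4}$ at infinity and vanishes at the origin by (\ref{rasy}), so that the integral converges and indeed defines the sectionally analytic $G$ with the prescribed jump and the required decay. The same endpoint analysis is what legitimizes the single-valued branch of the logarithm despite the degeneracies of $s_{11}$, $s_{22}$ at $z=\pm1,0$.
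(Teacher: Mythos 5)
Your proposal cannot be compared against the paper's own argument for the simple reason that there is none: the authors state Lemma \ref{trace} without proof, presenting it as a consequence of Lemma \ref{zeros}. Your route --- stripping the Blaschke factors to get $P$, $Q$ analytic and zero-free in their respective half-planes, converting the determinant identity $|s_{11}(\zeta)|^2\bigl(1-|r(\zeta)|^2\bigr)=1$ on $\mathbb{R}$ into the additive jump $G_+-G_-=-v$, and solving by a Cauchy integral --- is the standard derivation of trace formulas, and all the bookkeeping checks out: the symmetry $s_{22}(\zeta)=\overline{s_{11}(\zeta)}$ for real $\zeta$, the unimodularity of $(\zeta-\overline{\nu}_n)/(\zeta-\nu_n)$ on $\mathbb{R}$, and the signs in the two exponentials match the stated formulas exactly. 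The one step you only announce --- the endpoint analysis at $\zeta=\pm1$ --- is where the actual content lies, and it should be carried out rather than deferred: in the generic case $s_{11}$ has simple poles at $z=\pm1$ by Lemma \ref{sprop}, so $P$ does too, $\log P$ grows logarithmically there, while $v(\zeta)\sim 2\log|\zeta\mp1|$ is integrable, so the Cauchy integral converges and has a matching logarithmic singularity. To finish the identification of $G$ with the Cauchy integral, note that their difference has equal boundary values from above and below on $\mathbb{R}\setminus\{0,\pm1\}$, hence extends analytically across it by Morera's theorem; at the three exceptional points it has at worst logarithmic growth, so $(z-z_0)\bigl(G(z)-C[-v](z)\bigr)\to 0$ and the singularities are removable by Riemann's theorem; the difference is then entire and vanishes at infinity, hence identically zero by Liouville. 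With that paragraph made explicit, your argument is a complete proof --- which is more than the paper supplies.
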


\section{Inverse scattering transform} \label{insrh}

\subsection{A basic RH problem}
\hspace*{\parindent}
According to the relation between $\phi^\pm$ and $S(z)$ in (\ref{phis}), together with the properties of Jost solutions and scattering data, on the upper/lower half plane, we define a sectionally meromorphic function $M(z):=M(z,x, t)$
\begin{align}
&M(z)=\left\{\begin{matrix} \left(\frac{\mu_1^{-}(z )}{{s_{11}(z) }},  \mu_2^{+}(z )\right), \  \ &z\in \mathbb{C}^+,\vspace{2mm} \cr
\left(\mu_1^{+}(z ),  \frac{\mu_2^{-}(z )}{{\overline{s_{11}(\bar z)}}} \right), \  \ &z\in \mathbb{C}^-. \end{matrix}\right. \nonumber
 \end{align}
This sectionally meromorphic function  $M(z)$ has the simple poles $\nu_n$ and $\overline{\nu}_{n}, (n=1,2,\cdots,N)$, with the following residue conditions
\begin{align}
& \mathop{\mathrm{Res}}\limits_{z=\nu_n} M (z) =\lim_{z\rightarrow \nu_n}M(z)\left(\begin{array}{cc} 0&0\\ c_ne^{2i t\theta(\nu_n)} &0\end{array}  \right), \label{fresm1}\\
&\mathop{\mathrm{Res}}\limits_{z=\bar \nu_n}M(z) =\lim_{z\rightarrow \bar \nu_n}M(z)\left(\begin{array}{cc} 0& -\bar c_ne^{-2i t\theta(\bar \nu_n)} \\0&0\end{array}\right),\label{fresm2}
\end{align}
where $c_n=\frac{s_{21}(\nu_n)}{s_{11}'(\nu_n)}. $

The piecewise meromorphic function $M(z)$ admits the  RH problem \ref{RHP0}.

\begin{prob} \label{RHP0}
 Fin  $M(z)$ with  properties:
\begin{itemize}
\item   $M(z)$ is meromorphic in $\mathbb{C}\backslash (\mathbb{R}\cup\mathcal{Z})$.

\item $M(z)$ satisfies the symmetries $M(z)=\sigma_1 \overline{M(\bar{z})} \sigma_1=z^{-1} M(z^{-1}) \sigma_2$.

\item  $M(z)$ admits  the asymptotic behavior
\begin{align}
&M(z)=I+\mathcal{O}(z^{-1}), \quad  z\rightarrow \infty,     \qquad\qquad \label{fRHP3}\\
&M(z)= \frac{\sigma_2}{z}+\mathcal{O}(1), \quad  z\rightarrow 0.
\end{align}

\item   $M(z)$ satisfies the jump condition
	$$M_+(z)=M_-(z)V(z), \; z \in \mathbb{R},$$
      	where
      	\begin{equation}\label{V0}
      		V(z)=\left(\begin{array}{cc}
      			1-|r(z)|^2 & -\overline{r(\overline{z})} e^{-2it \theta(z)}\\
      			r(z)e^{2it \theta(z)} & 1
      		\end{array}\right),
      	\end{equation}
      with $\theta(z)= \lambda\left(x/t+2k^2+1\right)$.
\item residue conditions:
\begin{align}
& \mathop{\mathrm{Res}}\limits_{z=\nu_n} M (z) =\lim_{z\rightarrow \nu_n}M(z)\left(\begin{array}{cc} 0&0\\ c_ne^{2i t\theta(\nu_n)} &0\end{array}  \right), \label{fresm1}\\
&\mathop{\mathrm{Res}}\limits_{z=\bar \nu_n}M(z) =\lim_{z\rightarrow \bar \nu_n}M(z)\left(\begin{array}{cc} 0& -\bar c_ne^{-2i t\theta(\bar \nu_n)} \\0&0\end{array}\right),\label{fresm2}
\end{align}
where $c_n=\frac{s_{21}(\nu_n)}{s_{11}'(\nu_n)}. $
\end{itemize}
\end{prob}
The real axis $\mathbb{R}$ is the jump contour of RH problem \ref{RHP0}. The poles lie on the unit circle $\mathcal{G}$, see Figure \ref{cpm}. The solution of the complex mKdV equation (\ref{cmkdv}) can be given by
\begin{align}
 q(x, t)=-i\lim_{z\rightarrow \infty} (zM (z ))_{12}, \label{sol}
 \end{align}
the subscript denotes the element located on the first row and second column of matrix function $M(z)$.

\subsection{Signature table and saddle points}
\hspace*{\parindent}
As $t\rightarrow\infty$,  the exponential oscillatory term $e^{\pm2it\theta(z) }$ in jump matrix $ V(z)$ plays a crucial role in the asymptotic analysis by the RH method. We rewrite $\theta(z)$ as
\begin{equation*}
\theta(z)=\left(\xi+1\right)\left(z-\frac{1}{z}\right)+\frac{1}{4}\left[z^3-\frac{1}{z^3}-\left(z-\frac{1}{z}\right)\right],\ \ \xi=\frac{x}{2t}.
\end{equation*}
The  decay of $e^{\pm2it\theta(z) }$ is determined by $\re[2i\theta(z)]$. After a series of calculations, $\re[2i\theta(z)]$ shaped as
\begin{equation*}
\re[2i\theta(z)]=-\im{z}\left(1+\frac{1}{|z|^2}\right)\left[2\xi+2+\frac{1}{2}\left(1+\frac{1}{|z|^4}\right)\left(3\mathrm{Re} ^2 z-\mathrm{Im} ^2 z\right)-\frac{2\mathrm{Re} ^2 z}{|z|^2}\right].
\end{equation*}
Let the critical lines as $\mathcal{Y}:\re [2i\theta(z)]=0$ and the unit circle $\mathcal{G}:\{z\in\mathbb{C}\mid|z|=1\}$.
The signature table of $\re[2i\theta(z)]$ shown in Figure \ref{proptheta}.

          \begin{figure}
    	\centering
    	    	\subfigure[$ \xi<-1.5$]{\label{figurea}
    		\begin{minipage}[t]{0.3\linewidth}
    			\centering
    			\includegraphics[width=1.5in]{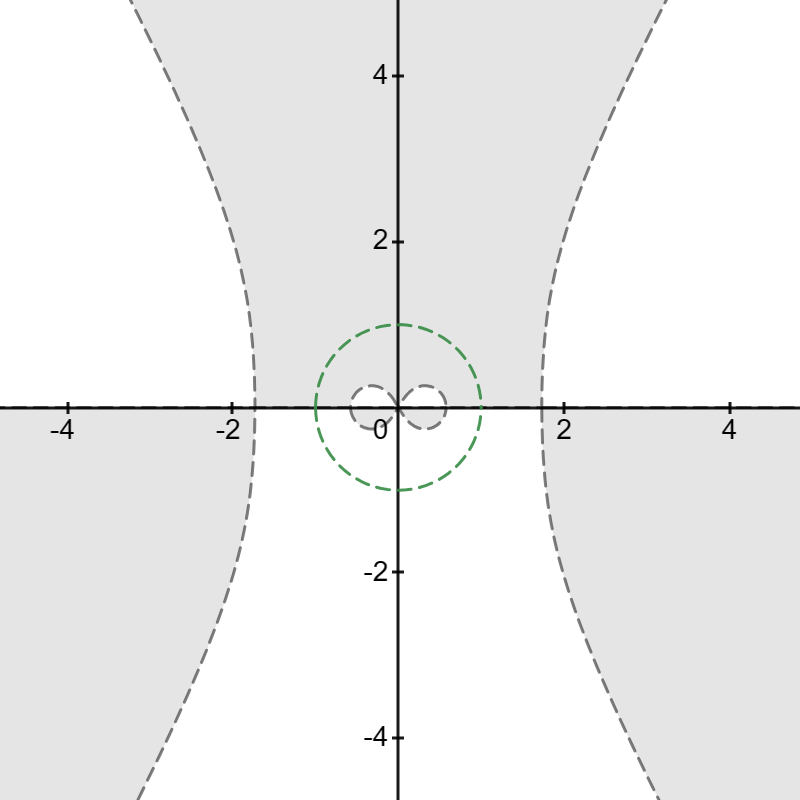}
    		\end{minipage}
    	}
    	\subfigure[$\xi=-1.5$]{\label{figureb}
    		\begin{minipage}[t]{0.3\linewidth}
    			\centering
    			\includegraphics[width=1.5in]{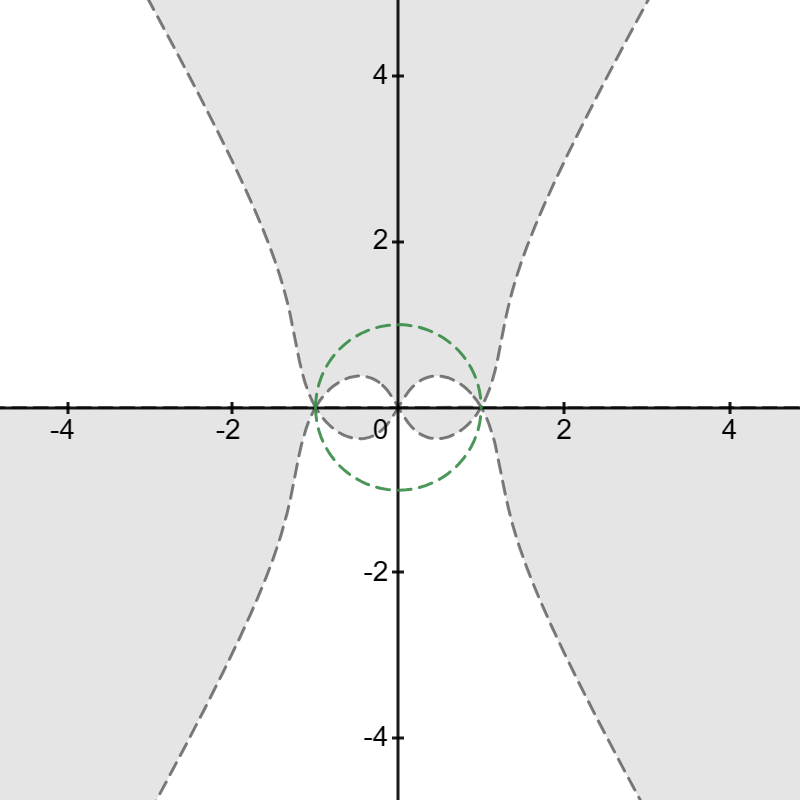}
    		\end{minipage}
    	}
    	\subfigure[$-1.5<\xi<-0.75$]{\label{figurec}
	\begin{minipage}[t]{0.3\linewidth}
		\centering
		\includegraphics[width=1.5in]{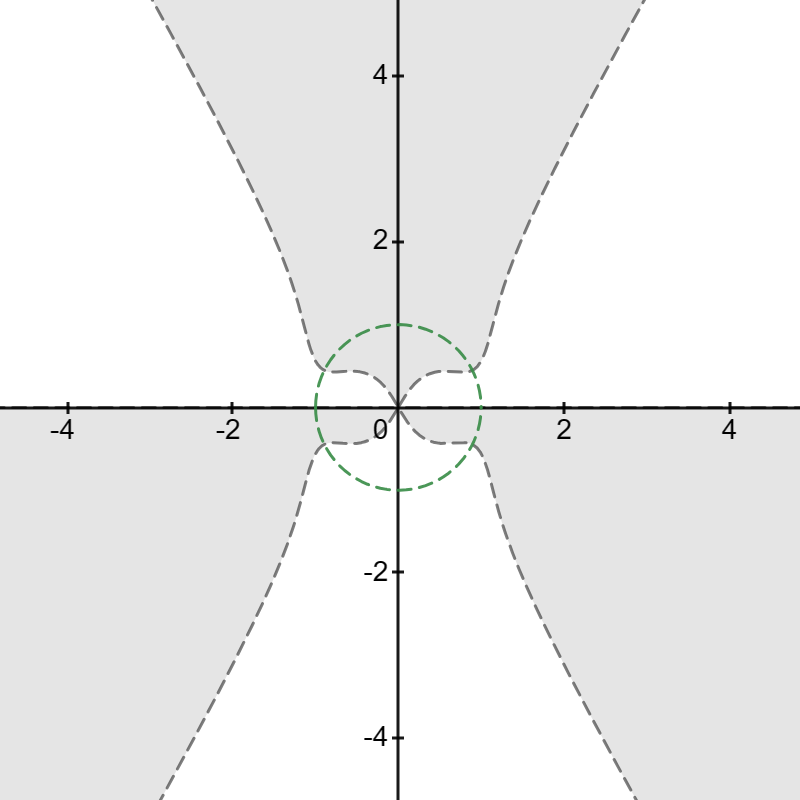}
	\end{minipage}
}

    \subfigure[$-0.75<\xi<-0.5$]{\label{figured}
   	\begin{minipage}[t]{0.3\linewidth}
    		\centering
    		\includegraphics[width=1.5in]{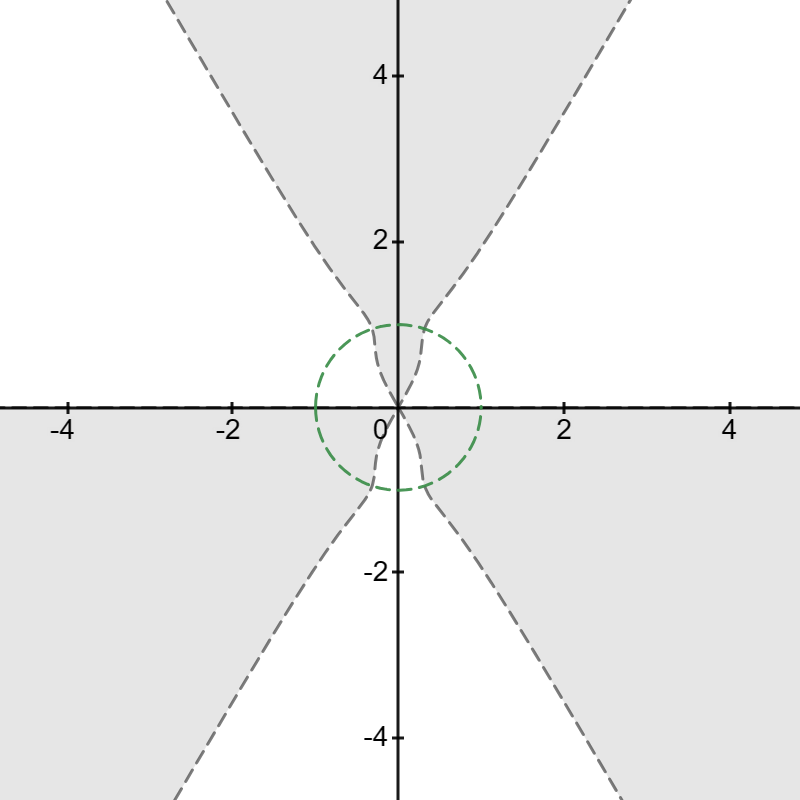}
    	\end{minipage}
    }
    \subfigure[$\xi=-0.5$]{\label{figuree}
    	\begin{minipage}[t]{0.3\linewidth}
    		\centering
    		\includegraphics[width=1.5in]{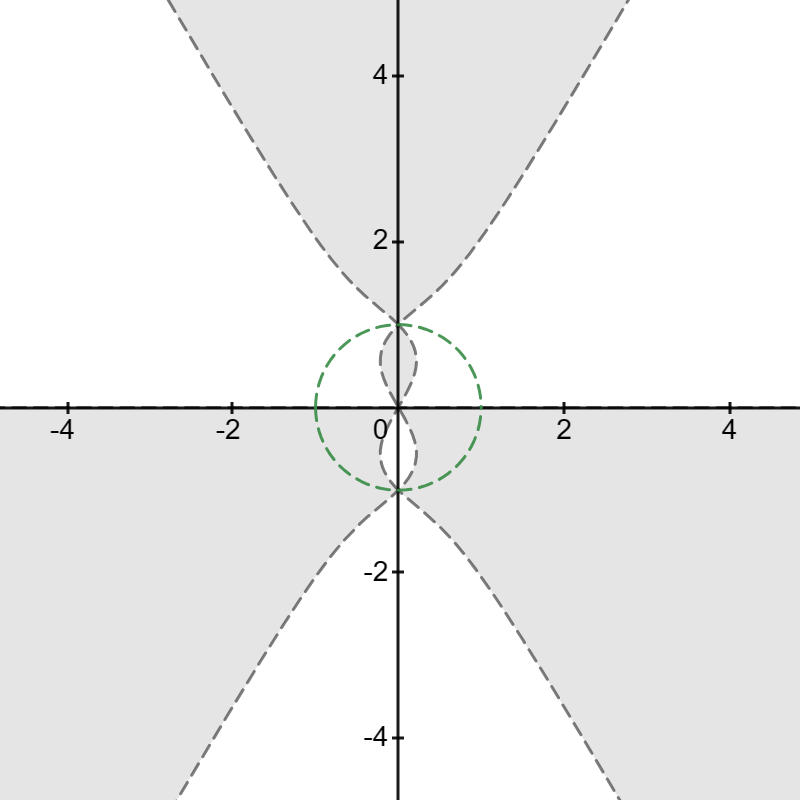}
    	\end{minipage}
    }
    \subfigure[$\xi>-0.5$]{\label{figuref}
	\begin{minipage}[t]{0.3\linewidth}
		\centering
		\includegraphics[width=1.5in]{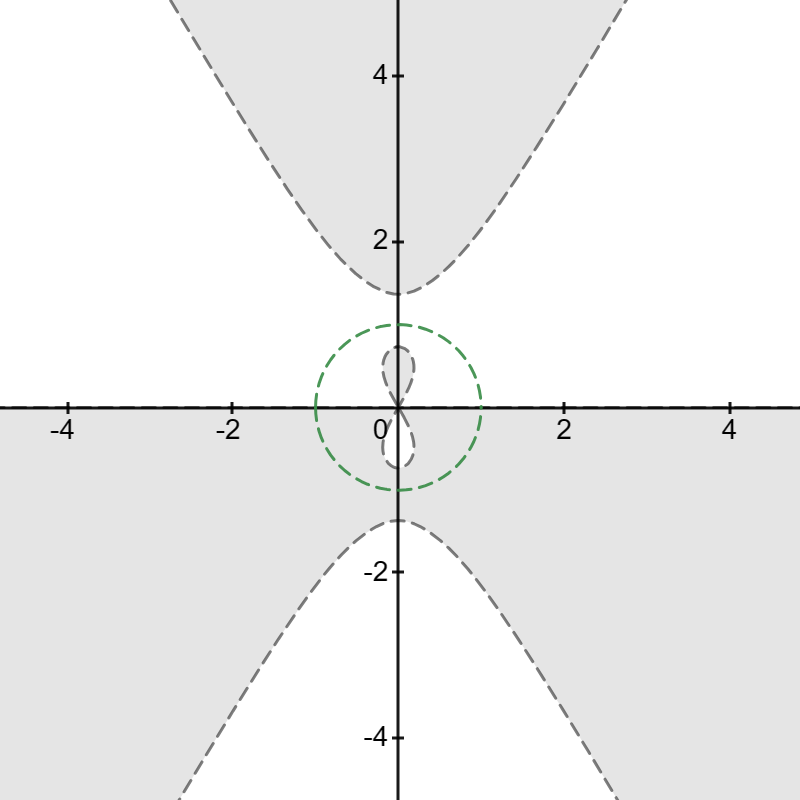}
	\end{minipage}
}
   	\centering
    	\caption{\footnotesize The signature table   of $\re [2i\theta(z)]$. The gray dashed lines denote the  critical line  $\mathcal{Y}:\re (2i\theta(z))=0$.
    In the gray regions, we have $\re [2i\theta(z)]>0$, which implies that $\left|e^{-2it\theta(z)}\right| \to 0$ as $t\to \infty$. In the white regions,  $\re [2i\theta(z)]<0$,
    which implies that $\left|e^{2it\theta(z)}\right| \to 0$ as $t\to \infty$. The green dashed circle is the  unit circle  $\mathcal{G}:$ $\{z\in\mathbb{C}\mid|z|=1\}$. }
    	\label{proptheta}
    \end{figure}

The stationary phase points are determind by  $\theta'(z)$ which can be simplified as
        \begin{align}
            &0=\theta'(z) =\frac{1}{4z^{4}}(1+z^2)(3+3z^4+4z^2\xi).\label{2theta}
        \end{align}
 The differential (\ref{2theta}) implies that there exist two fixed pure imaginary saddle points $z=\pm i$ and  the  distribution of the other stationary phase points is based on a discriminant $$\Bbbk:=4\xi^2-9.$$
\begin{itemize}
\item
For $\Bbbk>0\Longrightarrow|\xi|>\frac{3}{2}$,

$\uuline{\xi<-\frac{3}{2}}$, ~~~$\theta'(z)=0$  with  four different zeros on the real axis $z_{j}\in\mathbb{R}, (j=1,2,3,4)$, see Figure \ref{figurea},
   \begin{align}
z_{1}=-z_4=\frac{\sqrt{-2\xi+\sqrt{\Bbbk}}}{\sqrt{3}},\ \ \ \ \ z_{2}=-z_3=\frac{\sqrt{-2\xi-\sqrt{\Bbbk}}}{\sqrt{3}}.\label{z1234}
\end{align}
In fact, $z_j$ meeting a sequence   $z_{4}<-1<z_{3}<0<z_{2}<1<z_{1}$ and $z_{1}z_{2}=z_{3}z_{4}=1$.

$\uuline{\xi>-\frac{3}{2}}$, ~~~$\theta'(z)=0$  still with  four different zeros, which are restricted on the imaginary axis $z_{j}\in i\mathbb{R}, (j=1,2,3,4)$, see Figure \ref{figuref},
\begin{align}
z_{1}=-z_4=\frac{i\sqrt{2\xi+\sqrt{\Bbbk}}}{\sqrt{3}},\ \ \ \ \ z_{2}=-z_3=\frac{i\sqrt{2\xi-\sqrt{\Bbbk}}}{\sqrt{3}}.\label{iz1234}
\end{align}
Similarly, $\im{z}_{4}<-1<\im{z}_{3}<0<\im{z}_{2}<1<\im{z}_{1}$ and $z_{1}z_{2}=z_{3}z_{4}=-1$.

\item For $\Bbbk<0\Longrightarrow|\xi|<\frac{3}{2}$, we derive four stationary phase points on the complex plane $z_{j}\in\mathbb{C}\backslash\{\mathbb{R}\cup i\mathbb{R}\}, (j=1,2,3,4)$ and $|z_j|=1$ which implies that the zero points are controlled on the unit circle;   see Figure \ref{figurec}-\ref{figured}.

\end{itemize}
We classify the $x$-$t$ half-plane as three asymptotic regions based upon the interaction between the critical line $\mathcal{Y}$ and the unit circle $\mathcal{G}$; see Figure \ref{spacetime}.
\begin{itemize}
\item solitonless~region~$\mathcal{D}_1$ and $\mathcal{D}_3$:  There is no  interaction between the critical line $\mathcal{Y}$ and the unit circle $\mathcal{G}$ as $\xi<-3/2$ and $\xi>-1/2$. Moreover, $\mathcal{Y}$ is always far away from $\mathcal{G}$, corresponding with Figure \ref{figurea} and \ref{figuref}, respectively.
\item soliton~region~$\mathcal{D}_2$: For $-3/2<\xi\leq-1/2$, there is always interaction between $\mathcal{Y}$ and $\mathcal{G}$, the interaction points are held on the unit circle $\mathcal{G}$,  see Figures \ref{figurec} and \ref{figured}. Especially, for $\xi=-1/2$, the interaction  points  $z=\pm i$ locate on the imaginary axis $i\mathbb{R}$.
\item $\mathbf{transition~region~\mathcal{D}_4}$: For $\xi=-3/2$, the interaction points between $\mathcal{Y}$ and $\mathcal{G}$ are $z=1$ and $z=-1$ which are merge points of  saddle points $z_j, j=1,2$ and $z_j, j=3,4$, respectively, see Figure \ref{figureb}. Moreover, the norm of $(1-|r(\pm1)|^2)^{-1}$ blows up,  which implies the emergence of a new phenomenon   as $\xi\approx-3/2$. The original method to deal with the cases $\xi<-3/2$ and $-3/2<\xi<3/2$ is broken. This case is a transition region, which is the main study region in this paper.
\end{itemize}

\section{Painlev\'e-type asymptotics} \label{sec3}
\hspace*{\parindent}
In this section, we study  the Painlev\'e asymptotics    in the transition region $\mathcal{D}$ (\ref{D}).
In this case,
the four stationary points $z_j, j=1,2,3,4$ defined by (\ref{z1234}) are real and close to $z=\pm1$ at least the  velocity  of $t^{-1/3}$ as $t\to \infty$.

\subsection{Modifications to the basic RH problem}\label{modi1}
\hspace*{\parindent}
To obtain a standard RH problem without poles and  singularities, we perform some modifications to the basic RH problem \ref{RHP0} by removing the poles $\nu_n, \overline{\nu}_n$ and the singularities at $z=0$.
Since the  discrete spectrum numbers are finite and distributed on the unit circle $\mathcal{G}$. In particular, the  discrete spectrum is far away from the jump contour $\mathbb{R}$ and the critical line $\mathcal{Y}$.
The influence of the discrete spectrum decays  exponentially to the identity matrix $I$
when we convert their residues  into jumps on some small closed circles.  This shows that a modification of the basic RH problem \ref{RHP0} is possible. First, we remove the  influence of the discrete spectrums.

We split the set $\mathcal{N}$ into two subsets
\begin{align*}
\mathcal{N}_1:=\{n\in\mathcal{N}\mid\re [2i\theta(\nu_n)]>0\}, \ \ \mathcal{N}_2:=\{n\in\mathcal{N}\mid\re [2i\theta(\nu_n)]<0\}.
\end{align*}
 The zero $\nu_n$ leads to $\left|e^{2it\theta(\nu_n)}\right| \nrightarrow 0$ as  $n\in\mathcal{N}_1$. For $n\in\mathcal{N}_2$, the zero $\nu_n$ leads to $\left|e^{2it\theta(\nu_n)}\right| \rightarrow 0$.
In particular, in the region $\mathcal{D}$, all $\{n\in\mathcal{N}\mid n=1,\cdots,N\}\in\mathcal{N}_1$, i.e. $\mathcal{N}_1=\mathcal{N}$ which implies that $\mathcal{N}_2=\varnothing$.

    To remove poles $\nu_j$ and $\bar \nu_j$ and  open   the contour $\Gamma$ (which is defined in (\ref{gamma})) by  a  matrix   decomposition,
    we  introduce the following function
      \begin{equation}
        T(z)=
        \prod_{n=1}^{N} \Bigg(\frac{z-  \nu_n}{z\nu_n-1}\Bigg) \exp \left[  -\frac{1}{2i\pi} \int_\Gamma   v(\zeta) \left(\frac{1}{ \zeta-z}- \frac{1}{2\zeta} \right) \, \mathrm{d}\zeta \right],\label{tfcs}
        \end{equation}
where  $v(\zeta)$ is defined by (\ref{nu}) and
\begin{equation}
\Gamma=(-\infty,z_{4})\cup(z_{3},0)\cup(0,z_{2})\cup(z_{1},+\infty).\label{gamma}
\end{equation}
The properties of $T(z)$ were shown in Proposition \ref{prop1}.
\begin{proposition}\label{prop1}
            The function
           $T(z)$ is  meromorphic  in $\mathbb{C} \backslash   \Gamma $ with  simple zeros at the points $\nu_n$ and simple poles at the points $\bar{\nu}_n$, and satisfies the jump condition:
                    \begin{equation*}
                    T_+(z)=T_-(z)(1-|r(z)|^2), \quad z\in \Gamma.
                \end{equation*}
            Moreover, $T(z)$ admits the following properties:
             \begin{itemize}
                \item $T(z)$ satisfies the symmetries $\overline{T(\bar{z})}=T^{-1}(z)=T(z^{-1})$.
                                \item $T(z)$ admits the asymptotic behavior
                \begin{equation}
                    T(\infty) := \lim_{z \to \infty} T(z)= \Bigg(\prod_{n=1}^{N}\bar \nu_n\Bigg) \exp \left(\frac{1}{4i\pi} \int_\Gamma \frac{v(\zeta)}{\zeta} \, \mathrm{d}\zeta \right),\ \ \ \ z\to\infty.\label{Texpan1}
                \end{equation}
                Moreover, $| T(\infty)|^2=1$.
              And $T(z)$ hold a   asymptotic expansion
                \begin{equation}
                   T(z)=T(\infty)\Big[ 1-iz^{-1}T_1+\mathcal{O}(z^{-2})\Big],\label{Texpan}
                \end{equation}
                where $$T_1:=2\operatorname*{\sum}\limits_{n=1 }^{N} {\rm Im} \nu_{n}+\frac{1}{2\pi}\int_\Gamma v(\zeta)d\zeta.$$
            \end{itemize}
        \end{proposition}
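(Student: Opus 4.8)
My plan is to factor the defining expression (\ref{tfcs}) as $T(z)=B(z)E(z)$, where $B(z)=\prod_{n=1}^N\frac{z-\nu_n}{z\nu_n-1}$ is a finite Blaschke-type product and $E(z)=\exp\big(-\frac{1}{2i\pi}\int_\Gamma v(\zeta)\big(\frac{1}{\zeta-z}-\frac{1}{2\zeta}\big)\mathrm{d}\zeta\big)$ is a Cauchy-type exponential, and then to read off each asserted property from the appropriate factor. Since $v=\log(1-|r|^2)$ is real and integrable on $\Gamma$ (the only singularities being the integrable logarithmic ones at the endpoints where $|r|\to1$), the Cauchy integral defines a function $E(z)$ that is analytic and nonvanishing on $\mathbb{C}\setminus\Gamma$; hence the meromorphy of $T$, together with the location and simplicity of its zeros and poles, comes entirely from $B$. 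Because $|\nu_n|=1$ by Lemma \ref{zeros}, the factor $\frac{z-\nu_n}{z\nu_n-1}$ has a simple zero at $\nu_n$ and a simple pole at $z=\nu_n^{-1}=\bar\nu_n$, and both points lie off $\Gamma\subset\mathbb{R}$, giving the first claim. For the jump I would apply the Sokhotski--Plemelj formula: $B$ is analytic across $\Gamma$, whereas the boundary values of the Cauchy integral differ by $v(z)$, so that $E_+(z)/E_-(z)=\exp(v(z))=1-|r(z)|^2$, the orientation of $\Gamma$ in (\ref{gamma}) fixing the sign; this produces exactly $T_+(z)=T_-(z)(1-|r(z)|^2)$.

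The conjugation symmetry $\overline{T(\bar z)}=T^{-1}(z)$ is immediate: conjugating $B$ and using $\bar\nu_n=\nu_n^{-1}$ gives $\overline{B(\bar z)}=B^{-1}(z)$, while the reality of $v$ and of $\Gamma$ flips the prefactor, since $\overline{1/(2i\pi)}=-1/(2i\pi)$ and $\overline{(\zeta-\bar z)^{-1}}=(\zeta-z)^{-1}$ for real $\zeta$, yielding $\overline{E(\bar z)}=E^{-1}(z)$. The inversion symmetry $T(z^{-1})=T^{-1}(z)$ is the most delicate point. For $B$ one verifies directly that $B(z^{-1})=B^{-1}(z)$ using $|\nu_n|=1$. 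For $E$ I would change variables $\zeta\mapsto\zeta^{-1}$ in the integral; this requires that $\Gamma$ be invariant under inversion, which holds because $z_1z_2=z_3z_4=1$ in (\ref{z1234}), and that $v(\zeta^{-1})=v(\zeta)$, which follows from the scattering symmetry $r(\zeta)=-\overline{r(\zeta^{-1})}$ established earlier, so that $|r(\zeta^{-1})|=|r(\zeta)|$. The subtracted term $-\frac{1}{2\zeta}$ in the integrand is precisely the regularization that makes the transformed kernel close back up to the integrand of $E^{-1}(z)$; checking that the Jacobian $-\zeta^{-2}\mathrm{d}\zeta$, the reversal of orientation, and this correction term combine correctly is where the bookkeeping is heaviest, and this is the step I expect to be the main obstacle.

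The asymptotics then follow by expanding both factors. As $z\to\infty$ we have $B(z)\to\prod_n\nu_n^{-1}=\prod_n\bar\nu_n$ and the kernel $\frac{1}{\zeta-z}-\frac{1}{2\zeta}\to-\frac{1}{2\zeta}$, so $E(\infty)=\exp\big(\frac{1}{4i\pi}\int_\Gamma\frac{v(\zeta)}{\zeta}\mathrm{d}\zeta\big)$, which is (\ref{Texpan1}); since $\int_\Gamma v(\zeta)/\zeta\,\mathrm{d}\zeta$ is real (both $v$ and $\zeta$ being real on $\Gamma$) this exponent is purely imaginary, and $|\prod_n\bar\nu_n|=1$, whence $|T(\infty)|^2=1$. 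For (\ref{Texpan}) I would carry the expansions one order further: from $\frac{1}{\zeta-z}=-\frac{1}{z}-\frac{\zeta}{z^2}+\cdots$ the exponential contributes $1+\frac{1}{2i\pi z}\int_\Gamma v\,\mathrm{d}\zeta+\mathcal{O}(z^{-2})=1-\frac{i}{z}\cdot\frac{1}{2\pi}\int_\Gamma v\,\mathrm{d}\zeta+\mathcal{O}(z^{-2})$, while $\frac{z-\nu_n}{z\nu_n-1}$ gives $1+\frac{1}{z}\sum_n(\bar\nu_n-\nu_n)+\mathcal{O}(z^{-2})=1-\frac{2i}{z}\sum_n\im\nu_n+\mathcal{O}(z^{-2})$. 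Collecting the $z^{-1}$ coefficients yields $T(z)=T(\infty)[1-iz^{-1}T_1+\mathcal{O}(z^{-2})]$ with $T_1=2\sum_{n=1}^N\im\nu_n+\frac{1}{2\pi}\int_\Gamma v(\zeta)\,\mathrm{d}\zeta$, as claimed. Apart from the inversion symmetry of $E$, every step here is a routine application of Plemelj's formula or a Taylor expansion of the two explicit factors.
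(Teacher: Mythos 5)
Your factorization $T=BE$, with $B$ the Blaschke-type product and $E$ the Cauchy exponential, and the way you read each property off the appropriate factor, is the natural and surely intended argument --- the paper states Proposition \ref{prop1} without any proof --- and most of your steps are sound. The zero/pole locations, both symmetries, the value of $T(\infty)$, its unimodularity, and the coefficient $T_1$ all check out; in particular your inversion bookkeeping is exactly right: $\Gamma$ is inversion-invariant because $z_1z_2=z_3z_4=1$, the identity $v(\zeta^{-1})=v(\zeta)$ follows from $r(\zeta^{-1})=-\overline{r(\zeta)}$, and the subtracted $\frac{1}{2\zeta}$ turns the transformed kernel into minus the original one, so $E(z^{-1})=E(z)^{-1}$ does hold.

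The genuine defect is the jump condition, and it cannot be repaired by your parenthetical ``the orientation of $\Gamma$ in (\ref{gamma}) fixing the sign.'' The paper fixes the orientation of $\mathbb{R}\supset\Gamma$ to be left-to-right (stated just before Figure \ref{Djump65}), so Sokhotski--Plemelj applied to $C(z)=\frac{1}{2i\pi}\int_\Gamma\frac{v(\zeta)}{\zeta-z}\,\mathrm{d}\zeta$ gives $C_+-C_-=v$ on $\Gamma$. Since the exponent in (\ref{tfcs}) is $-C(z)$ plus a constant, this yields
\begin{equation*}
\frac{T_+(z)}{T_-(z)}=\frac{E_+(z)}{E_-(z)}=e^{-\left(C_+(z)-C_-(z)\right)}=e^{-v(z)}=\frac{1}{1-|r(z)|^2},\qquad z\in\Gamma,
\end{equation*}
i.e.\ $T_+=T_-(1-|r|^2)^{-1}$ --- the reciprocal of what you (and the proposition) assert. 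The statement you were asked to prove is in fact inconsistent with definition (\ref{tfcs}) under the paper's stated orientation, and a correct write-up must flag this sign clash rather than absorb it silently; your derivation, taken literally, asserts $E_+/E_-=e^{+v}$ despite the minus sign in the exponent, which is an error. (The definition, not the proposition, appears to be the intended object: for the transformation (\ref{trans1}), $M^{(1)}=T(\infty)^{-\sigma_3}MGT^{\sigma_3}$, to cancel the diagonal middle factor $(1-|r|^2)^{\sigma_3}$ in the lower-diagonal-upper factorization of $V$ on $\Gamma$, one needs exactly $T_-^{-1}(1-|r|^2)T_+=1$, i.e.\ the reciprocal jump.) A smaller inaccuracy: $v$ has no singularities at the endpoints of $\Gamma$, since the endpoints are $z_1,\dots,z_4$ and $0$, where $|r|<1$; the points $\pm1$ where $|r|\to1$ lie in $\mathbb{R}\setminus\Gamma$, so there are no logarithmic endpoint singularities to integrate through, and $v$ is in fact bounded on $\Gamma$.
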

Our main goal is to construct some small regions that can remove the influence coming from the discrete spectrums. To this end,
we define $\rho$, which is small enough
\begin{equation}\label{definerho}
 0<\rho <  \frac{1}{2} {\rm min}  \left\{   \operatorname*{min}\limits_{ \nu_n,  \nu_l\in  \mathcal{Z}^+ }  |\nu_n-\nu_l|,    \operatorname*{min}\limits_{\nu_n\in  \mathcal{Z}^+} |{\rm Im} \nu_n|,  \operatorname*{min}\limits_{\nu_n\in \mathcal{Z}^+,  { \rm Im} \theta(z) =0 }| \nu_n-z|  \right\}.
 \end{equation}
We construct $N$ small open disks $\odot_{n}:=\{z\in\mathbb{C}^+\mid|z-\nu_{n}|<\rho\}$ with counter-clockwise direction on the upper half-plane.
The construction way of $\odot_{n}$ ensure circles are pairwise disjoint. In the same way, $N$ open disks $\overline{\odot}_{n}:=\{z\in\mathbb{C}^-\mid|z-\overline{\nu}_{n}|<\rho\}$ with clockwise direction on the lower half-plane can also be constructed. We define a region $\mathcal{K}$ by
$$\mathcal{K}:=\bigcup\limits_{n=1}^{N}(\odot_{n}\cup\overline{\odot}_{n}),\ \ \partial\mathcal{K}:=\bigcup\limits_{n=1}^{N}(\partial\odot_{n}\cup\partial\overline{\odot}_{n}),$$
where $\partial\odot_{n}$ and $\partial\overline{\odot}_{n}$ denote the boundary of the disks $\odot_{n}$ and $\overline{\odot}_{n}$, respectively.
The direction of the real axis $\mathbb{R}$ goes from left to right, the direction of $\partial\odot_n$ is  counterclockwise, and  the direction of $\partial\overline{\odot}_n$  is clockwise; see Figure \ref{Djump65}.
To change  the residue of the pole $\nu_n$ into a jump on a circle $\odot_{n}$ , we define an interpolation function
            \begin{equation*}
           G(z) = \begin{cases}
             \left(\begin{array}{cc} 1& - \displaystyle {\frac{z-\nu_n}{c_n e^{2it\theta(\nu_n)}} }\\ 0&1\end{array}  \right), \;   z\in\odot_{n},   \\[4pt]
           \left(\begin{array}{cc} 1&0 \\ -\displaystyle {\frac{z-\bar{\nu}_n}{\bar{c}_n e^{ -2it\theta(\bar{\nu}_n) }}}&1\end{array}  \right), \;   z\in\overline{\odot}_n,\\
            I, \;  \;  \; else.
           \end{cases}
       \end{equation*}
and a directed contour
\begin{equation*}
	\Sigma^{(1)}=\mathbb{R} \cup\partial\mathcal{K}.
\end{equation*}

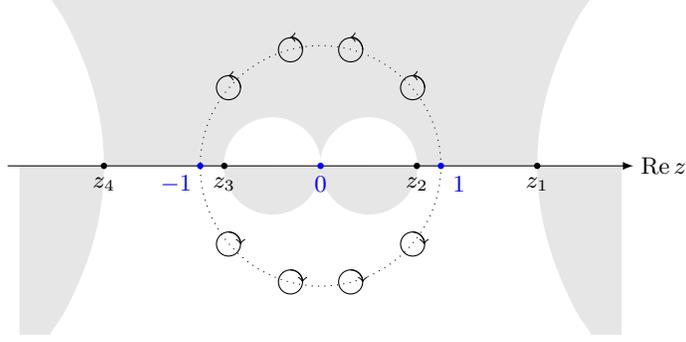
\begin{figure}[H]
\begin{center}
\begin{tikzpicture}[scale=0.8]
\path[fill=gray!20] (-4.5,0)rectangle (4.5,2.8);
\fill [white]  plot [domain=2.5:0,smooth] (3.6+\x*\x/9,\x) -- (4.6,0) -- (4.5,2.8);
\fill [gray!20]  plot [domain=-2.5:0,smooth] (3.6+\x*\x/9,\x) -- (4.6,0) -- (4.5,-2.8);
\fill [white]  plot [domain=2.5:0,smooth] (-3.6-\x*\x/9,\x) -- (-4.6,0) -- (-4.5,2.8);
\fill [gray!20]  plot [domain=-2.5:0,smooth] (-3.6-\x*\x/9,\x) -- (-4.6,0) -- (-4.5,-2.8);
\path[fill=gray!20] (-5,0)rectangle (-4.5,-2.8);
\path[fill=gray!20] (5,0)rectangle (4.5,-2.8);
  \filldraw[gray!20](-0.8,0)--(0,0) arc (0:-180:0.8);
    \filldraw[gray!20](0.8,0)--(1.6,0) arc (0:-180:0.8);
    \filldraw[white](-0.8,0)--(0,0) arc (0:180:0.8);
    \filldraw[white](0.8,0)--(1.6,0) arc (0:180:0.8);

\node    at (-3.6, -0.3)  {\footnotesize $z_4$};
\node    at (-1.6, -0.3)  {\footnotesize $z_3$};

\node    at (3.6, -0.3)  {\footnotesize $z_1$};
\node    at (1.6, -0.3)  {\footnotesize $z_2$};

 \draw [dotted ] (0, 0) circle [radius=2];
\draw[   -latex ](-5.2, 0)--(5.2, 0);
\node    at (5.7, 0)  {\footnotesize $\re z$};
\node  [blue]  at (0, -0.3)  {\footnotesize $0$};
        \coordinate (A) at (1.53,  1.3);
		\coordinate (B) at (1.53,  -1.3);
	
		\coordinate (E) at (-1.53,  1.3);
		\coordinate (F) at (-1.53,  -1.3);
		\coordinate (G) at (-1,  1.73);
		\coordinate (H) at (-1,  -1.73);
		\coordinate (I) at (0.5,  1.93);
		\coordinate (J) at (0.5,  -1.93);
		\coordinate (K) at (0,  0);
		\coordinate (L) at (2,  0);
		\coordinate (M) at (-2,  0);
	\coordinate (N) at (-0.5,  1.93);
		\coordinate (O) at (-0.5,  -1.93);

	\coordinate (P) at (-3.6,  0);
		\coordinate (Q) at (-1.6,  0);
	\coordinate (R) at (3.6, 0);
		\coordinate (S) at (1.6,  0);

\fill[] (P) circle (1.5pt);
\fill[] (Q) circle (1.5pt);

\fill[] (R) circle (1.5pt);
\fill[] (S) circle (1.5pt);
\fill[blue] (K)  circle (1.5pt);
\fill[blue] (L) circle (1.5pt);
\fill[blue] (M) circle (1.5pt);
 \draw [] (A) circle [radius=0.2];
  \draw [  -> ]  (1.73,  1.3) to  [out=90,  in=0] (1.53,  1.5);	
  \draw [] (B) circle [radius=0.2];
  \draw [ -> ]  (1.53,  -1.1) to  [out=0,  in=90] (1.72,  -1.3);

   \draw [] (E) circle [radius=0.2];
   \draw [  -> ]  (-1.33,  1.3) to  [out=90,  in=0]  (-1.53,  1.5);	
    \draw [] (F) circle [radius=0.2];
     \draw [  -> ]  (-1.53, -1.1) to  [out=0,  in=90]  (-1.33,  -1.3);	

       \draw [] (I) circle [radius=0.2];
          \draw [ -> ]  (0.7,  1.93) to  [out=90,  in=0]   (0.5,  2.14);
        \draw [] (J) circle [radius=0.2];
 \draw [  -> ]  (0.5,  -1.73) to  [out=0,  in=90]   (0.7,  -1.93);

    \draw [] (N) circle [radius=0.2];
          \draw [ -> ]  (-0.3,  1.93) to  [out=90,  in=0]   (-0.5,  2.14);
        \draw [] (O) circle [radius=0.2];
 \draw [  -> ]  (-0.5,  -1.73) to  [out=0,  in=90]   (-0.3,  -1.93);

  \node [blue]   at (2.3, -0.3)  {\footnotesize $1$};
    \node [blue]   at (-2.4, -0.3)  {\footnotesize $-1$};

\end{tikzpicture}
\end{center}
\caption{\footnotesize The jump contour  $\Sigma^{(1)}$  of  $M^{(1)}(z)$.}
\label{Djump65}
\end{figure}
A new function given by the following transformation
       \begin{equation}
        M^{(1)}(z)=T(\infty)^{-\sigma_3} M(z) G(z)T(z)^{\sigma_3}, \label{trans1}
       \end{equation}
and $M^{(1)}(z)$ satisfies the RH problem \ref{m1}.

 \begin{prob} \label{m1}
 Find  $M^{(1)}(z)=M^{(1)}(z;x,t)$ with properties
       \begin{itemize}
        \item   $M^{(1)}(z)$ is  analytical  in $ \mathbb{C} \setminus \Sigma^{(1)}$.
        \item   $M^{(1)}(z)=\sigma_1 \overline{M^{(1)}(\bar{z})}\sigma_1 =z^{-1}M^{(1)}(z^{-1})\sigma_2$.
        \item  $M^{(1)}(z)$ satisfies the jump condition
        \begin{equation*}
            M^{(1)}_+(z)=M^{(1)}_-(z)V^{(1)}(z),
        \end{equation*}
        where
\begin{equation}
	V^{(1)}(z)=\begin{cases}\begin{array}{ll}
b_-^{-1} b_+,  \  \ \  \;  z\in \Gamma,\\[8pt]
T(z)^{-\sigma_3} V(z)T(z)^{\sigma_3}, \  \ \ \;  z\in \mathbb{R}\backslash\Gamma,\\[6pt]
		\left(\begin{array}{cc}
			1 & -\frac {  z-\nu_n} {c_n} T^{-2}(z) e^{ -2it\theta( \nu_n)} \\
			0 & 1
		\end{array}\right),  \ \  \ \ 	 z\in\partial\odot_{n}, \\[12pt]
		\left(\begin{array}{cc}
			1 & 0	\\
			 \frac {  z-\bar \nu_n} {\bar c_n} T^{ 2}(z) e^{  2it\theta(\bar \nu_n)} & 1
		\end{array}\right),   \ \  \ \	z\in\partial\overline{\odot}_{n},
	\end{array}
\end{cases}
\label{jumpv1}
\end{equation}
with
\begin{equation}
b_-^{-1} b_+:=\left(\begin{array}{cc}
		1 &  -\overline{r(z)} T^{-2}(z)e^{-2it\theta(z)} \\
		0 & 1
	\end{array}\right)
		\left(\begin{array}{cc}
			1 & 0\\
r(z)T^2(z)e^{ 2it\theta(z)} & 1
		\end{array}\right).  \label{opep1}\\
\end{equation}
        \item $ M^{(1)}(z)$ admits the  asymptotic behaviors
        \begin{align*}
                &M^{(1)}(z)=I+\mathcal{O}(z^{-1}),	\quad  z \to  \infty,\\
                &M^{(1)}(z)=\frac{\sigma_2}{z}+\mathcal{O}(1), \quad z \to 0.
        \end{align*}

    \end{itemize}
\end{prob}

\begin{remark}
The jump on $\mathbb{R}\backslash\Gamma$ still is the jump matrix $V(z)$. The aim is to match the RH problem with the Painlev\'{e}-II model in Appendix \ref{appx}.
\end{remark}
  Since   the jump matrixes on the contour $\partial\mathcal{K}$ exponentially decay to the
  identity matrix at $t \to \infty$,  it can be shown that   the RH problem \ref{m1} is  asymptotically equivalent to the following  RH problem \ref{m2}.

 \begin{prob} \label{m2}
 Find   $M^{(2)}(z)=M^{(2)}(z;x,t)$ with properties
       \begin{itemize}
        \item   $M^{(2)}(z)$ is  analytical  in $\mathbb{C}\backslash \mathbb{R}$.
        \item  $M^{(2)}(z)=\sigma_1 \overline{M^{(2)}(\bar{z})}\sigma_1 =z^{-1}M^{(2)}(z^{-1})\sigma_2$.
        \item $M^{(2)}(z)$ satisfies the jump condition
        \begin{equation}
            M^{(2)}_+(z)=M^{(2)}_-(z)V^{(2)}(z),
        \end{equation}
        where
\begin{equation}
	V^{(2)}(z)=\begin{cases}
b_-^{-1}b_+,    \;  z\in \Gamma,\\[6pt]
T(z)^{-\sigma_3} V(z)T(z)^{\sigma_3},  \;  z\in \mathbb{R}\setminus\Gamma.
	\end{cases} \label{jumpv2}
\end{equation}

        \item $M^{(2)}(z)$ admits the asymptotic behaviors
        \begin{align}
                &M^{(2)}(z)=I+\mathcal{O}(z^{-1}),	\quad  z \to  \infty,\\
                &M^{(2)}(z)=\frac{\sigma_2}{z}+\mathcal{O}(1), \quad z \to 0.\label{m20}
        \end{align}

    \end{itemize}
\end{prob}



For removing the singularity as $z=0$, we introduce a transformation for $M^{(2)}(z)$
\begin{align}
M^{(2)}(z)=J(z) M^{(3)}(z),\label{trans3}
\end{align}
where $J(z)$ is given by
\begin{align}
J(z):= I+ \frac{1}{z} \sigma_2 M^{(3)}(0)^{-1},\label{J}
\end{align}
and $M^{(3)}(z)$ satisfies the RH problem \ref{ms3} without spectrum singularity.

 \begin{prob} \label{ms3}
 Find  $M^{(3)}(z)=M^{(3)}(z;x,t)$ with properties
       \begin{itemize}
        \item  $M^{(3)}(z)$ is  analytical  in $\mathbb{C}\backslash \mathbb{R}$.
        \item  $M^{(3)}(z)=\sigma_1 \overline{M^{(3)}(\bar{z})}\sigma_1 =\sigma_2 M^{(3)}(0)^{-1}M^{(3)}(z^{-1})\sigma_2$.
        \item $M^{(3)}(z) $ satisfies the jump condition:
        \begin{equation*}
            M^{(3)}_+(z)=M^{(3)}_-(z)V^{(2)}(z),
        \end{equation*}
        where
        $V^{(2)}(z)$ is given by   (\ref{jumpv2}).
        \item $M^{(3)}(z)$ admits the  asymptotics    $  M^{(3)}(z)=I+\mathcal{O}(z^{-1}),	\quad  z \to  \infty.$

    \end{itemize}
\end{prob}

\begin{proof}

We confirm that $M^{(3)}(z)$ satisfies the jump condition
\begin{align*}
M^{(3)}_+(z)&=J^{-1}(z)  M^{(2)}_+(z)=J^{-1}(z)  M^{(2)}_-(z) V^{(2)}(z)=J^{-1}(z) J(z)M^{(3)}_-(z)  V^{(2)}(z)=M^{(3)}_-(z)  V^{(2)}(z).
\end{align*}
using the information that $M^{(2)}(z)$ satisfies the RH problem \ref{m2}.
To show $M^{(3)}(z)$ has no singularity as $z=0$, we verify that  $M^{(3)}(z)$ with
the expansion
\begin{align*}
M^{(3)}(z) = M^{(3)}(0)+ z \widetilde{M}^{(3)}(z),
\end{align*}
which tests and verifies $M^{(2)}(z)$ satisfies (\ref{m20}).
Substituting the expansion of $M^{(3)}(z)$ into (\ref{trans3}) yields
\begin{align*}
M^{(2)}(z) =  \frac{1}{z} \sigma_2 + M^{(3)}(0) + z \widetilde{M}^{(3)}(z)+\sigma_2 M^{(3)}(0)^{-1} \widetilde{M}^{(3)}(z)= \frac{1}{z} \sigma_2 +\mathcal{O}(1),
\end{align*}
which implies that $M^{(3)}(z)$ has no singularity as $z=0$.
\end{proof}

\subsection{A hybrid $\bar{\partial}$-RH problem} \label{modi2}
\hspace*{\parindent}
In this subsection, we open the $\bar{\partial}$-lenses to deform  the contour $\Gamma$ as shown in Figure \ref{signdbar}. We are able to do the $\bar{\partial}$ continuous extensions due to  this contour deformation. With  a suitably small angle $0<\varphi <\pi/4$,
the directed rays $\Sigma_j\in\mathbb{C}^+, (j=0\pm,1,2,3,4)$ and $L_{0\pm}\in\mathbb{C}^+$ in Figure \ref{signdbar} defined by
 \begin{align*}
     & \Sigma_{0+}:=h e^{i\varphi},\ \ L_{0+}:=\hat{h}e^{i\pi/2}+z_{2}/2,\ \  \Sigma_{2}:=h e^{i(\pi-\varphi)}+z_{2}, \ \ \Sigma_{1}:=\mathbb{R}^+ e^{i\varphi}+z_{1},\\
     & \Sigma_{0-}:=h e^{i(\pi-\varphi)},\ \ L_{0-}:=\hat{h}e^{i\pi/2}+z_{3}/2,\ \ \Sigma_{3}:=h e^{i\varphi}+z_{3}, \ \ \Sigma_{4}:=\mathbb{R}^+ e^{i(\pi-\varphi)}+z_{4},
     \end{align*}
   where
    $h\in \left(0,  \frac{1}{2}z_{2}\sec \varphi\right)$ and $\hat{h}\in \left(0,  \frac{1}{2}z_{2}\tan \varphi\right)$
    .
Conjugate rays of $\Sigma_j\in\mathbb{C}^+$ and $L_{0\pm}\in\mathbb{C}^+$ are the directed rays $\overline{\Sigma}_j\in\mathbb{C}^-$ and $\overline{L}_{0\pm}\in\mathbb{C}^-$, respectively. A directed ray $\Sigma$ is defined as
$$\Sigma:=\bigcup\limits_{j=0\pm,1,2,3,4}\left(\Sigma_j\cup\overline{\Sigma}_j\right)\bigcup\limits_{j=0\pm}\left(L_j\cup\overline{L}_j\right).$$
The regions $\Omega_{j}\subset\mathbb{C}^+ (j=0\pm,1,2,3,4)$  in Figure \ref{signdbar} are defined by
\begin{align*}
&\Omega_{0+}:=\left\{z\in\mathbb{C}^{+}\mid\arg{z}\in(0,\varphi),\re z\in{(0,z_{2}/2)}\right\},\ \ \Omega_{0-}:=\left\{z\in\mathbb{C}^{+}\mid-\overline{z}\in\Omega_{0+}\right\},\\
&\Omega_{2}:=\left\{z\in\mathbb{C}^{+}\mid\arg{z}\in(\pi-\varphi,\pi),\re z\in{(z_{2}/2,z_{2})}\right\},\ \ \Omega_{3}:=\left\{z\in\mathbb{C}^{+}\mid-\overline{z}\in\Omega_{2}\right\},\\
&\Omega_{1}:=\left\{z\in\mathbb{C}^{+}\mid\arg{z}\in(0,\varphi),\re z\in{(z_{1},+\infty)}\right\},\ \
\Omega_{4}:=\left\{z\in\mathbb{C}^{+}\mid-\overline{z}\in\Omega_{1}\right\},
\end{align*}
and $\overline{\Omega}_j\subset\mathbb{C}^-$ are the  conjugate regions of $\Omega_j$. We define a region $\Omega$ by
$$\Omega:=\bigcup\limits_{j=0\pm,1,2,3,4}\left(\Omega_j\cup\overline{\Omega}_j\right).$$
In fact, $\Sigma$ is the border of $\Omega$.
\begin{remark}

All sectors opened by this angle must fall within their respective decaying areas, hence the angle $\varphi$ must small enough. Furthermore,
$\Sigma$ does not cross with  $\partial\mathcal{K}$. Figure \ref{signdbar} illustrates this. The same is true between $\Sigma$  and crucial lines $\mathcal{Y}$, .
\end{remark}

    \begin{figure}[H]
        \begin{center}
  \begin{tikzpicture}[scale=0.8]
  \path[fill=gray!20] (-6,0)rectangle (6,2.5);
\fill [white]  plot [domain=2.5:0,smooth] (4.6+\x*\x/9,\x) -- (6,0) -- (6.1,2.5);
\fill [gray!20]  plot [domain=-2.5:0,smooth] (4.6+\x*\x/9,\x) -- (6,0) -- (6.1,-2.5);
\fill [white]  plot [domain=2.5:0,smooth] (-4.6-\x*\x/9,\x) -- (-6,0) -- (-6.1,2.5);
\fill [gray!20]  plot [domain=-2.5:0,smooth] (-4.6-\x*\x/9,\x) -- (-6,0) -- (-6.1,-2.5);
 \filldraw[gray!20](-1.6,0)--(-2.6,0) arc (180:360:1.3);
   \filldraw[gray!20](-1.0,0)--(-0.0,0) arc (180:360:1.3);
    \filldraw[white](-1.6,0)--(0,0) arc (0:180:1.3);
     \filldraw[white](1.3,0)--(2.6,0) arc (0:180:1.3);

                \draw[dotted,black!40,-latex](-6,0)--(6.2,0)node[black,right]{$\re z$};
                 \draw [ ] (-4.6, 0)--(-2.6,0);
                                    \draw [ ] (4.6, 0)--(2.6,0);
                                 \draw [-latex ] (-4.6, 0)--(-3.5,0);
                                    \draw [-latex ] (2.6, 0)--(4,0);

                \node[shape=circle,fill=blue, scale=0.15]  at (-3.2,0){0} ;
                      \node[shape=circle,fill=blue, scale=0.15]  at (3.2,0){0} ;
            \node[shape=circle,fill=black,scale=0.15] at (-2.6,0) {0};
              \node[shape=circle,fill=black,scale=0.15] at (2.6,0) {0};
             \node[shape=circle,fill=black,scale=0.15] at (-4.6,0) {0};
             \node[shape=circle,fill=black,scale=0.15] at (4.6,0) {0};
                \node[shape=circle,fill=blue,scale=0.15] at (0.0,0) {0};
                		\coordinate (A) at (-1.3,  0);
                \coordinate (B) at (1.3,  0);

                \node[below,blue] at (0.0,0) {\footnotesize $0$};
                 \node[below,blue] at (-3.2,0) {\footnotesize $-1$};
                 \node[below,blue] at (3.2,0) {\footnotesize $1$};

                  \node[below] at (-4.3,0) {\footnotesize $z_4$};
                 \node[below] at (4.4,0) {\footnotesize $z_1$};
                   \node[below] at (-2.6,0) {\footnotesize $z_3$};
                 \node[below] at (2.6,0) {\footnotesize $z_2$};

                  \node[below] at (-0.75,0.6) {\footnotesize $\Omega_{0-}$};
                 \node[below] at (0.85,0.55) {\footnotesize $\Omega_{0+}$};
                 \node[below] at (-0.85,0.1) {\footnotesize $\overline{\Omega}_{0-}$};
                 \node[below] at (0.85,0.1) {\footnotesize $\overline{\Omega}_{0+}$};

                     \node[below] at (-0.75,1.2) {\footnotesize $\Sigma_{0-}$};
                 \node[below] at (0.85,1.2) {\footnotesize $\Sigma_{0+}$};
                 \node[below] at (-0.4,-0.4) {\footnotesize $\overline{\Sigma}_{0-}$};
                 \node[below] at (0.62,-0.4) {\footnotesize $\overline{\Sigma}_{0+}$};

                  \node[below] at (-1.8,0.6) {\footnotesize $\Omega_{3}$};
                 \node[below] at (1.8,0.55) {\footnotesize $\Omega_{2}$};
                 \node[below] at (-1.8,0.1) {\footnotesize $\overline{\Omega}_{3}$};
                 \node[below] at (1.8,0.1) {\footnotesize $\overline{\Omega}_{2}$};

                   \node[below] at (-1.8,1.2) {\footnotesize $\Sigma_{3}$};
                 \node[below] at (1.8,1.2) {\footnotesize $\Sigma_{2}$};
                 \node[below] at (-1.8,-0.4) {\footnotesize $\overline{\Sigma}_{3}$};
                 \node[below] at (2,-0.4) {\footnotesize $\overline{\Sigma}_{2}$};

                     \node[below] at (-5.4,0.6) {\footnotesize $\Omega_{4}$};
                 \node[below] at (5.4,0.55) {\footnotesize $\Omega_{1}$};
                 \node[below] at (-5.4,0.1) {\footnotesize $\overline{\Omega}_{4}$};
                 \node[below] at (5.4,0.1) {\footnotesize $\overline{\Omega}_{1}$};

                    \node[below] at (-5.4,1.6) {\footnotesize $\Sigma_{4}$};
                 \node[below] at (5.4,1.6) {\footnotesize $\Sigma_{1}$};
                 \node[below] at (-5.4,-1.0) {\footnotesize $\overline{\Sigma}_{4}$};
                 \node[below] at (5.4,-1.0) {\footnotesize $\overline{\Sigma}_{1}$};

                                 \draw [ ] (0.0, 0)--(-1.3,0.9);
                                    \draw [ ] (0.0, 0)--(1.3,0.9);
                                 \draw [ ] (0.0, 0)--(-1.3,-0.9);
                                  \draw [ ] (0.0, 0)--(1.3,-0.9);
                                 \draw[](-2.6,0)--(-1.3,0.9);
                                    \draw[](2.6,0)--(1.3,0.9);
                                  \draw[](-2.6,0)--(-1.3,-0.9);
                                  \draw[](2.6,0)--(1.3,-0.9);
                               \draw [ ] (4.6,0 )--(6,-1.5);
                                 \draw [] (4.6,0 )--(6,1.5);
                                 \draw[-latex](4.6,0 )--(5.25,-0.7);
                                       \draw[-latex](4.6,0 )--(5.25,0.7);
                                 \draw [ ] (-4.6,0)--(-6,-1.5);
                                 \draw [ ] (-4.6,0 )--(-6,1.5);
                                 \draw[-latex,](-6.0,-1.5 )--(-5.25,-0.7);
                                       \draw[-latex](-6.0,1.5 )--(-5.25,0.7);

                                       \draw[](-1.3,0)--(-1.3,0.9);
                                             \draw[](1.3,0)--(1.3,0.9);
                                              \draw[-latex](-1.3,0)--(-1.3,0.45);
                                             \draw[-latex](1.3,0)--(1.3,0.45);

                                    \draw[](-1.3,0)--(-1.3,-0.9);
                                             \draw[](1.3,0)--(1.3,-0.9);
                                              \draw[-latex](-1.3,0)--(-1.3,-0.45);
                                             \draw[-latex](1.3,0)--(1.3,-0.45);

                            \draw [-latex] (-2.6,0)--(-1.95,0.45);
                              \draw [-latex] (-1.3,0.9)--(-0.65,0.45);
                                  \draw [-latex] (0,0)--(0.65,0.45);
                              \draw [-latex] (1.3,0.9)--(1.95,0.45);
                             \draw [-latex] (-2.6,0)--(-1.95,-0.45);
                             \draw [-latex] (-1.3,-0.9)--(-0.65,-0.45);
                                \draw [-latex] (0,0)--(0.65,-0.45);
                              \draw [-latex] (1.3,-0.9)--(1.95,-0.45);

  \end{tikzpicture}
            \caption{ \footnotesize{The jump contour $\Gamma$ be opened as $\Sigma$ . }}
      \label{signdbar}
        \end{center}
    \end{figure}
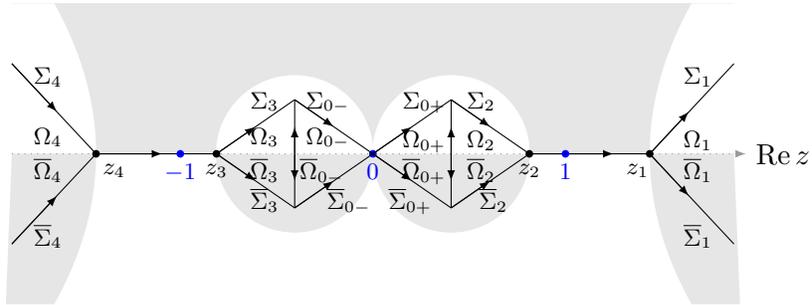

The estimates of $\re[2i\theta(z)]$, which are displayed in the following Proposition \ref{rezpm}, demonstrate the decaying charateristics of the oscillating factors $e^{\pm2it\theta(z)}$ in $\Omega$.
\begin{proposition}\label{rezpm}In the transition region  $ \mathcal{D}$,
$\re[2i\theta(z)]$ holds the following estimates
 \begin{itemize}
        \item For $z\in\{\Omega_{0\pm}\cup\overline{\Omega}_{0\pm}\},$
\begin{subequations}
\begin{align}
&\re[2i\theta(z)]\leq -c_{0}(\tilde{\varphi})|\im z||\sin{\tilde{\varphi}}|,\ \ z\in\Omega_{0\pm},\\
&\re[2i\theta(z)]\geq c_{0}(\tilde{\varphi})|\im z||\sin{\tilde{\varphi}}|,\ \ z\in\overline{\Omega}_{0\pm},
\end{align}\label{0pm}
\end{subequations}
where  $\tilde{\varphi}=\arg{z}$ and $c_{0}(\tilde{\varphi})$ depend on $\tilde{\varphi}$.

\item For $z\in\{\Omega_{j}\cup\overline{\Omega}_{j}\},~j=1,4$,
\begin{align}
&\re[2i\theta(z)]\leq\begin{cases}
-2|\re z-z_j|^2|\im z|,\ \ z\in\Omega_{j}\cap\left\{|z|\leq\sqrt2\right\},\\
-2(\sqrt{2}-1)|\im{z}|,\ \ z\in\Omega_j\cap\left\{|z|>\sqrt{2}\right\},
\end{cases}\\
&\re[2i\theta(z)]\geq\begin{cases}
2|\re z-z_j|^2|\im z|,\ \ z\in\overline{\Omega}_{j}\cap\left\{|z|\leq\sqrt2\right\},\\
2(\sqrt{2}-1)|\im{z}|,\ \ z\in\overline{\Omega}_j\cap\left\{|z|>\sqrt{2}\right\}.
\end{cases}
\end{align}
\item For $z\in\{\Omega_{j}\cup\overline{\Omega}_{j}\},~j=2,3$,
\begin{align}
&\re[2i\theta(z)]\leq-c(z_j)|\re z-z_j|^2|\im z|,\ \ z\in\Omega_{j},\\
&\re[2i\theta(z)]\geq c(c_j)|\re z-z_j|^2|\im z|,\ \ z\in\overline{\Omega}_{j},
\end{align}
where  $c(z_j)$ depends on $z_j$.
\end{itemize}
\end{proposition}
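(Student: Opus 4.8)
The plan is to read every inequality off the explicit factorisation recorded just before the statement. Writing $z=u+iv$, one has
\[
\re(2i\theta)=-v\left(1+\frac{1}{|z|^2}\right)\mathcal{P}(z),\qquad \mathcal{P}(z):=2\xi+2+\frac{1}{2}\left(1+\frac{1}{|z|^4}\right)\left(3u^2-v^2\right)-\frac{2u^2}{|z|^2}.
\]
Since the prefactor $-v\,(1+|z|^{-2})$ is strictly negative on $\mathbb{C}^+$ and strictly positive on $\mathbb{C}^-$, every estimate reduces to a quantitative lower bound for the single real function $\mathcal{P}(z)$, multiplied by a power of $v$. First I would trim the casework by symmetry. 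As $\theta$ has real coefficients and obeys $\theta(-z)=-\theta(z)$ and $\theta(1/z)=-\theta(z)$, the quantity $\re(2i\theta)$ changes sign under $z\mapsto\bar z$ and is invariant under $z\mapsto-\bar z$ and under the inversion $z\mapsto1/\bar z$. The sign change identifies the lower-half estimate on $\overline{\Omega}_j$ with the upper-half estimate on $\Omega_j$; the invariance under $z\mapsto-\bar z$ identifies $\Omega_{0-},\Omega_3,\Omega_4$ with $\Omega_{0+},\Omega_2,\Omega_1$; and the inversion (under which $z_2\mapsto z_1$, consistent with $z_1z_2=1$) explains why the inner and outer constants can be matched. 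It therefore suffices to treat $\Omega_{0+}$, $\Omega_2$ and $\Omega_1$.

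The structural heart of the matter is that $(x,t)\in\mathcal{D}$ forces $\xi\to-\tfrac32$, so $z_1,z_2$ coalesce at $z=1$ (and $z_3,z_4$ at $z=-1$). Writing $\theta'(z)=\frac{3}{4z^4}(1+z^2)\prod_{j=1}^{4}(z-z_j)\approx 6(z-z_1)(z-z_2)$ near $z=1$, one sees $\theta''(z_j)\sim z_2-z_1\sim\sqrt{\Delta}\to0$ while $\theta'''\approx 12$ stays bounded away from $0$. Hence the relevant model near each $z_j$ is \emph{cubic}, which is exactly why the stated bounds carry the factor $|\re z-z_j|^{2}|\im z|$ rather than $|z-z_j|^2$. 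To make this quantitative I would work on the real axis first: $\mathcal{P}(u,0)=2\xi+\tfrac32u^2+\tfrac{3}{2u^2}$ vanishes precisely at $u=z_j$ (because $z_j$ solves $3u^4+4\xi u^2+3=0$), with slope $\tfrac{3(z_j^4-1)}{z_j^3}$ there --- small, but of the \emph{correct sign on the side of $z_j$ the region occupies} ($u<z_2$ for $\Omega_2$, $u>z_1$ for $\Omega_1$) --- while $\mathcal{P}_{uu}(u,0)=3+9u^{-4}$ is uniformly positive. Both contributions being nonnegative on the admissible side yields $\mathcal{P}(u,0)\ge c\,|u-z_j|^2$ uniformly in $\mathcal{D}$.

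It then remains to push this from the real axis into the open lens. Because $\mathcal{P}$ depends on $v$ only through $v^2$, one has $\mathcal{P}(u,v)=\mathcal{P}(u,0)+O(v^2)$, and on the wedge the small opening angle gives $v^2\le\tan^2\varphi\,(u-z_j)^2$, so for $\varphi<\pi/4$ the perturbation cannot swallow the main term; this delivers $\re(2i\theta)\le -c\,|\re z-z_j|^2|\im z|$, with the clean constant $2$ available on $\Omega_1,\Omega_4$ (where $|z|>1$) and a $z_j$-dependent constant on $\Omega_2,\Omega_3$ (where the region reaches down to $\re z=z_2/2$ and the $|z|^{-k}$ terms contribute). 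For $\Omega_{0\pm}$ there is no stationary point: $\mathcal{P}(u,0)$ is strictly positive on $(0,z_2/2]$ --- one checks the endpoints using $2\xi=-\tfrac32z_2^2-\tfrac{3}{2z_2^2}$ --- and tracking the prefactor through $v=|z|\sin\tilde\varphi$ gives $-c_0(\tilde\varphi)|\im z||\sin\tilde\varphi|$; on the far part $\Omega_1\cap\{|z|>\sqrt2\}$ the cubic growth $\theta\sim z^3/4$ makes the weaker linear bound $-2(\sqrt2-1)|\im z|$ hold with room to spare.

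The main obstacle throughout is \emph{uniformity in $\mathcal{D}$}: the constants and the very shape $|\re z-z_j|^2|\im z|$ must be independent of $\xi$ as $z_1\to z_2\to1$, which forbids appealing to nondegeneracy of the stationary points and forces the cubic replacement above. A secondary point is keeping the coalescing stationary points $z=\pm1$ outside every $\Omega_j$ (guaranteed by the gap between $z_2$ and $z_1$ and the choices of $\rho$ and $\varphi$), while the rational terms of $\mathcal{P}$ blowing up as $z\to0$ only strengthen the inequality in $\Omega_{0\pm}$. Verifying the uniform lower bound on $\mathcal{P}$ on the non-Taylor parts of $\Omega_1$ and $\Omega_2$, directly from the explicit formula, is the most computational step.
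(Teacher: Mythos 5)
Your proposal is correct in substance and shares the paper's foundation: both arguments factor $\re(2i\theta)=-\im z\cdot(\text{positive factor})\cdot(\text{real bracket})$, feed in the relation $2\xi=-\tfrac32\left(z_j^2+z_j^{-2}\right)$ between $\xi$ and the stationary points, and use the lens opening angle to control the $-(\im z)^2$ term. The difference is in how the bracket is bounded. The paper works region by region with explicit algebra: on $\Omega_{0\pm}$ it writes the bracket via $F(|z|)=|z|+|z|^{-1}$ and uses monotonicity of $l\mapsto l^2-\sqrt{\alpha}\,l+1$ to get positivity; on $\Omega_1$ it substitutes $2\xi+\tfrac32=-\tfrac32\left(z_1^2+z_1^{-2}-1\right)$, bounds $h(|z|)=\tfrac{|z|^6+|z|^4}{|z|^6+1}$ by $\tfrac43$, and completes squares, the wedge entering only as $3(\re z)^2-(\im z)^2\ge2(\re z)^2$; the remaining regions are dispatched "by the same method." You instead (i) trim the casework by the exact symmetries $z\mapsto\bar z$ and $z\mapsto-\bar z$ (which are indeed region symmetries here), and (ii) anchor the estimate at the degenerating stationary point: Taylor expansion of $\mathcal{P}(u,0)=2\xi+\tfrac32\left(u^2+u^{-2}\right)$ at $z_j$, keeping the sign of the small linear term $\tfrac{3(z_j^4-1)}{z_j^3}(u-z_j)\ge0$ on the admissible side and the uniform convexity $\mathcal{P}_{uu}\ge3$, then an $O(v^2)$ perturbation into the lens. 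Your organization makes the uniformity in $\xi$ as $z_1,z_2\to1$ transparent and explains conceptually why the bounds have the cubic shape $|\re z-z_j|^2|\im z|$; the paper's computation reaches its specific constants more directly.

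One caveat on your quantitative claims: with the constants you actually establish ($\mathcal{P}(u,0)\ge\tfrac32|u-z_j|^2$ against a perturbation whose size near $z=1$ is $\approx 2v^2$, and $v^2\le\tan^2\varphi\,|u-z_j|^2$), positivity of the bracket requires $\tan^2\varphi<\tfrac34$, so your assertion that $\varphi<\pi/4$ alone "cannot swallow the main term" is not justified at angles near $\pi/4$; likewise the advertised constants $2$ and $2(\sqrt2-1)$ need you to retain the prefactor $\left(1+|z|^{-2}\right)$ and to take $\varphi$ small. Since the construction fixes $\varphi$ sufficiently small from the outset, this affects only your parenthetical thresholds, not the validity of the argument.
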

\begin{proof}
For $z\in\{\Omega_{0\pm}\cup\overline{\Omega}_{0\pm}\}$, we take $z\in\Omega_{0+}$ as an example and the remaining regions can be derived using the same method.  Let $z=|z| e^{i\tilde{\varphi}}=u+iv$ where $u>v>0$ and $0<\tilde{\varphi}<\varphi$. Define a function $F(l)=l+l^{-1}$ with $l>0$. Rewrite $\re{(2i\theta)}$ as
\begin{equation*}
\re{(2i\theta)}=-F(|z|)g(z)\sin{\tilde{\varphi}},
\end{equation*}
where $$g(z)=2\xi-3\cos{2\tilde{\varphi}}+\frac{1}{2}F^{2}(|z|)(1+2\cos{2\tilde{\varphi}}).$$
Since $\re{(2i\theta)}=0$, we have
\begin{equation}
g(z)=2\xi-3\cos{2\tilde{\varphi}}+\frac{1}{2}F^{2}(|z|)(1+2\cos{2\tilde{\varphi}})=0.\label{gz=0}
\end{equation}
This equation (\ref{gz=0}) implies that
\begin{equation*}
4<\alpha:=F^{2}(|z|)=3-\frac{4\xi+3}{1+2\cos{2\tilde{\varphi}}}.
\end{equation*}
Moreover, we have $F(|z|)=|z|+|z|^{-1}=\sqrt{\alpha}$, which admits two solutions
\begin{equation*}
|z|_{1}=\frac{\sqrt{\alpha}-\sqrt{\alpha-4}}{2},\ \ |z|_{2}=\frac{\sqrt{\alpha}+\sqrt{\alpha-4}}{2},
\end{equation*}
and $|z|_{1}<|z|_{2}$. It is easy to check that $h(l):=l^2-\sqrt{\alpha}l+1$ is a monotonically increasing function as $l\in\left(\frac{\sqrt{\alpha}}{2},+\infty\right)$. And it is a monotonically decreasing function as $l\in\left(-\infty,\frac{\sqrt{\alpha}}{2}\right)$. Since $|z|_{1}<\frac{\sqrt{\alpha}}{2}$, $h(l)$ is monotonically decreasing on $(0,|z|_{1})$. We have $h(l)>h(|z|_{1})=0$ and $F(l)>\sqrt{\alpha}$, which implies that $$g(z)>0.$$
There is a constant $c_{0}=c_{0}(\tilde{\varphi})$ such that
\begin{equation*}
\re{(2i\theta)}\leq -c_{0}F(|z|)\sin{\tilde{\varphi}}=-c_{0}\sqrt{\alpha}\sin{\tilde{\varphi}}.
\end{equation*}
We derive the equations (\ref{0pm}).

For $z\in\{\Omega_{j}\cup\overline{\Omega}_j\}_{j=1}^4$, we prove $z\in\Omega_1$ and the remian domains can be proved using the same method. Let $z=z_{1}+u+iv=z_1+|z-z_1|e^{i\tilde{\varphi}}\in\Omega_{1}$, where $u>0$ and
$v=u\tan{\tilde{\varphi}}>0, |z|^2=(u+z_{1})^2+u^2\tan^2\tilde{\varphi}.$
By simplification,  $\re(2i\theta)$ can be rewritten as
\begin{align*}
\re(2i\theta)
&=-vg(u,v),
\end{align*}
where
\begin{align*}
g(u,v)&=\left(1+\frac{1}{|z|^6}\right)\left[3(z_{1}+u)^2-v^2+\frac{|z|^6+|z|^4}{|z|^6+1}\left(2\xi+\frac{3}{2}\right)\right]\\
&>3(z_{1}+u)^2-v^2+\frac{|z|^6+|z|^4}{|z|^6+1}\left(2\xi+\frac{3}{2}\right)\\
&:=f(u,v).
\end{align*}
Let $h(|z|):=\frac{|z|^6+|z|^4}{|z|^6+1}$, and $h(|z|)$ with   max/min value
\begin{equation}
\begin{cases}
0\leq h(|z|)\leq\frac{4}{3},\ \ |z|\leq\sqrt{2},\\
1< h(|z|)\leq\frac{4}{3},\ \ |z|>\sqrt{2}.
\end{cases}\label{zsqrt2}
\end{equation}
For $|z|\leq\sqrt{2}$, from  (\ref{z1234}), we have
\begin{equation}
\xi=-\frac{3z_{1}^4+3}{4z_{1}^2}\Longrightarrow2\xi+\frac{3}{2}=-\frac{3}{2}(z_{1}^2+z_{1}^{-2}-1).\label{ab}
\end{equation}
Substituting (\ref{ab}) into $f(u,v)$, we have
\begin{equation*}
f(u,v)=3(z_{1}+u)^2-v^2-\frac{3}{2}h(|z|)(z_{1}^2+z_{1}^{-2}+1).
\end{equation*}
By making some estimates for $f(u,v)$ under the condition (\ref{zsqrt2}), we have
\begin{equation*}
f(u,v)\geq 3(z_{1}+u)^2-v^2-2(z_{1}^2+z_{1}^{-2}+1).
\end{equation*}
For the reason that $v=(z_1+u)\tan{\tilde{\varphi}}$ and $0<\tilde{\varphi}<\varphi<\frac{\pi}{4}$,
\begin{align*}
f(u,v)&\geq 3(z_{1}+u)^2-(z_1+u)^2\tan^2{\tilde{\varphi}}-2(z_{1}^2+z_{1}^{-2}+1)\\
&>2u^2+4z_1u-2(z_{1}^{-2}+1)\\
&\geq2u^2.
\end{align*}
For $|z|>\sqrt{2}$, moreover, $\xi\rightarrow\left(-\frac{3}{2}\right)^-$,
\begin{equation*}
f(u,v)=3(z_{1}+u)^2-v^2-\frac{3}{2}h(|z|).
\end{equation*}
Considering  condition (\ref{zsqrt2}) as $|z|>\sqrt{2}$, and  $v=(z_1+u)\tan{\tilde{\varphi}}$ with $0<\tilde{\varphi}<\varphi<\frac{\pi}{4}$,
\begin{align*}
f(u,v)&\geq3(z_{1}+u)^2-v^2-2
\geq2(z_{1}+u)^2-2
\geq4\cos{\tilde{\varphi}}-2
\geq2(\sqrt{2}-1).
\end{align*}
The third case can be derived in the same way.
\end{proof}

     Next we perform continuous  expansions for the jump matrix $V^{(2)}$ to remove the jump from $\Gamma$.
\begin{lemma} \label{prop3}
   Let $q_0(x)-\tanh x\in L^{1,2}(\mathbb{R})$.    There exist the boundary value functions $R_{j}(z):\overline{\Omega}_j\mapsto \mathbb{C}, j=0\pm,1,2,3,4$ continuous on $\overline{\Omega}_j$ and with continuous first partial derivative on $\Omega_j$, which defined by
       \begin{align*}
       R_{j}(z)= \begin{cases}
            r(z)T^2(z), \quad z \in \Gamma,\\
            r(z_j)T^2(z_j),\ \ z\in \Sigma_{j},\ j=1,2,3,4,\\
            0,\quad z \in \Sigma_{0\pm},\ j=0\pm,
          \end{cases}
  \end{align*}
  the conjugate function $\overline{R}_{j}(z)$ is given by
  \begin{align*}
        \overline{R}_{j}(z)= \begin{cases}
                \overline{r}(z)T^{-2}(z), \quad  z \in \Gamma,\\
                \overline{r}(z_j)T^{-2}(z_j),\ \ z\in\overline{\Sigma}_j,\ j=1,2,3,4,\\
                 0,\quad z \in \overline{\Sigma}_{0\pm},\ j=0\pm.
               \end{cases}
       \end{align*}
       There exist a  constant $c:=c(q_{0})$ and a cutoff function $\omega\in C_0^{\infty}(\mathbb{R},[0,1])$ with small support near 1. We have
      \begin{equation*}
      |\overline{\partial}R_j|, |\overline{\partial}\overline{R}_j|\leq\begin{cases}
      c(|r'(|z|)|+|z-z_j|^{-1/2}+\varphi(|z|)),\ \ z\in\Omega_j\cup\overline{\Omega}_j,\\
      c|z-1|,\ \ z\in\Omega_j\cup\overline{\Omega}_j ~and~|z-1| ~small ~enough,\\
      c(|r'(\pm|z|)|+|z|^{-1/2}),\ \ z\in\Omega_{0\pm}\cup\overline{\Omega}_{0\pm}.
      \end{cases}
      \end{equation*}
Moreover, setting $R:\Omega\mapsto\mathbb{C}$ by $R(z)|_{z\in\Omega_j}=R_j(z)$ and $R(z)|_{z\in\overline{\Omega}_j}=\overline{R}_j(z)$, the extension can be made such that $R(z)=\overline{R}(\overline{z}^{-1})$.
        \end{lemma}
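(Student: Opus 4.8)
The plan is to construct the extensions explicitly by interpolating in the angular variable and then to read off $\overline\partial R_j$ in polar coordinates centred at the relevant endpoint. Since the only jump of $T(z)$ lies on $\Gamma\subset\partial\Omega_j$ and the discrete spectrum stays off the sectors once $\varphi$ is small, $T^{\pm2}$ is analytic on each open $\Omega_j$, so it suffices to extend $r$ off $\mathbb R$ and multiply by the analytic factor. For $j=1,2,3,4$ I would write $z-z_j=|z-z_j|e^{i\phi}$ with $\phi=\arg(z-z_j)\in(0,\varphi)$ and set
\begin{align*}
R_j(z)=r(z_j)T^2(z_j)+\bigl[r(|z|)T^2(z)-r(z_j)T^2(z_j)\bigr]\cos\!\Bigl(\tfrac{\pi}{2\varphi}\,\phi\Bigr),
\end{align*}
where $r(|z|)$ is read as $r(\operatorname{sgn}(\re z)\,|z|)$ so that it restricts to $r(z)$ on $\Gamma$. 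By construction $R_j$ is continuous on $\overline\Omega_j$, equals $r(z)T^2(z)$ at $\phi=0$ (on $\Gamma$) and collapses to the frozen constant $r(z_j)T^2(z_j)$ at $\phi=\varphi$ (on $\Sigma_j$). For $j=0\pm$ the tip is $z=0$, where $r$ vanishes by (\ref{rasy}); the same formula with $z_j$ replaced by $0$ gives $R_{0\pm}(z)=r(\pm|z|)T^2(z)\cos(\tfrac{\pi}{2\varphi}\arg z)$, which is $0$ on $\Sigma_{0\pm}$ as required. The conjugate functions $\overline R_j$ are built by the same recipe with $r\mapsto\overline r$ and $T^2\mapsto T^{-2}$, and the relation $R(z)=\overline R(\bar z^{-1})$ is then imposed by constructing $R$ on a fundamental set of sectors and propagating it using $r(z)=-\overline{r(z^{-1})}$ and $\overline{T(\bar z)}=T(z^{-1})=T(z)^{-1}$ of Proposition \ref{prop1}.

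The $\overline\partial$-estimate is then a direct computation. Using $\overline\partial\,r(|z|)=r'(|z|)\tfrac{z}{2|z|}$ and $\overline\partial\phi=\tfrac{i}{2(\bar z-z_j)}$ together with $\overline\partial T^{\pm2}=0$ on the sector, one obtains
\begin{align*}
\overline\partial R_j=T^2(z)\cos\!\Bigl(\tfrac{\pi}{2\varphi}\phi\Bigr)r'(|z|)\tfrac{z}{2|z|}-\tfrac{\pi}{2\varphi}\sin\!\Bigl(\tfrac{\pi}{2\varphi}\phi\Bigr)\bigl[r(|z|)T^2(z)-r(z_j)T^2(z_j)\bigr]\tfrac{i}{2(\bar z-z_j)}.
\end{align*}
The first term is $\le c\,|r'(|z|)|$. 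For the second I would invoke the embedding $H^1(\mathbb R)\hookrightarrow C^{0,1/2}(\mathbb R)$ (legitimate since $r\in H^{1,1}(\mathbb R)$) and the smoothness of $T$ away from $\pm1$ to get $|r(|z|)T^2(z)-r(z_j)T^2(z_j)|\lesssim \bigl||z|-z_j\bigr|^{1/2}+|z-z_j|\lesssim|z-z_j|^{1/2}$; dividing by $|\bar z-z_j|=|z-z_j|$ yields the stated $|z-z_j|^{-1/2}$. On $\Omega_{0\pm}$ the same steps with $r(0)=0$, hence $|r(\pm|z|)|\lesssim|z|^{1/2}$, give $c(|r'(\pm|z|)|+|z|^{-1/2})$. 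The remaining contribution $\varphi(|z|)$ is the term produced by the localizing cutoff $\omega\in C_0^\infty$ with support near $1$: on its support $r(|z|)$ is used as above, while off its support $r$ is frozen to $r(z_j)$, so that the $\overline\partial$ falls on $\omega$.

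The genuinely hard step, forced by the transition region, is the behaviour near $z=\pm1$, where $|r(\pm1)|=1$ by (\ref{rpm1}). There $v=\log(1-|r|^2)$ is (log-)singular at the endpoints $z_1,z_2\to1$ and $z_3,z_4\to-1$ of $\Gamma$, and $(1-|r|^2)^{-1}$ in the factorization of $V$ blows up, so the plain lens opening is inadmissible across the short intervals straddling $\pm1$. Following Cuccagna--Jenkins \cite{cs2016}, I would not open the lens there but, on the support of $\omega$ near $1$, replace $r(|z|)T^2(z)$ by its first-order Taylor data at $z=1$. Since $r$ is $C^1$ near $z=1$ with $r(1)=-i$ (Lemma \ref{sprop}), the bracketed difference in $\overline\partial R_j$ becomes $\mathcal O(|z-1|)$, and because $z_1,z_2$ approach $1$ at rate $t^{-1/3}$ while $|\bar z-z_j|\gtrsim|z-1|$ on this neighbourhood, the factor $|\bar z-z_j|^{-1}$ no longer degrades the bound; this is exactly the refined estimate $|\overline\partial R_j|\le c\,|z-1|$. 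The main obstacle is to perform this local modification so that it is simultaneously compatible with the symmetry $R(z)=\overline R(\bar z^{-1})$, with the controlled power-law behaviour of $T^2$ at the coalescing endpoints, and with the Painlev\'e-II parametrix of Appendix \ref{appx}; the first-order vanishing recorded in the $c\,|z-1|$ bound is precisely what guarantees that the residual $\overline\partial$-contribution from this neighbourhood is negligible in the pure $\overline\partial$-problem \ref{trad}.
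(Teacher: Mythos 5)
First, a remark on the comparison itself: the paper never proves Lemma \ref{prop3} --- it is stated and immediately used, being imported from the $\bar\partial$-steepest-descent literature (\cite{cs2016}, \cite{wzy2023}) --- so your proposal can only be judged against what a correct proof requires, not against an argument in the text. Your overall scheme is the standard one and the generic bounds are essentially right: the cosine interpolation in $\phi=\arg(z-z_j)$, the identities $\bar\partial\,r(|z|)=r'(|z|)\frac{z}{2|z|}$ and $\bar\partial\phi=\frac{i}{2(\bar z-z_j)}$, the bound $|r(|z|)-r(z_j)|\le\|r'\|_{L^2}\,||z|-z_j|^{1/2}$ giving the $|z-z_j|^{-1/2}$ term, and the use of $r(0)=0$ (from (\ref{rasy})) for the $|z|^{-1/2}$ bound on $\Omega_{0\pm}$. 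One caveat already here: you invoke ``smoothness of $T$ away from $\pm1$'' to get $|T^2(z)-T^2(z_j)|\lesssim|z-z_j|$, but $z_j$ is an \emph{endpoint of the jump contour} $\Gamma$ of $T$, where $T$ carries an oscillatory singularity of the form $\exp\left(\frac{v(z_j)}{2\pi i}\log(z-z_j)\right)$; $T(z_j)$ is not an unambiguous limit and $T$ is not Lipschitz there, and in the transition region the exponent $v(z_j)=\log(1-|r(z_j)|^2)$ diverges as $z_j\to\pm1$. A complete proof must regularize this boundary value (or factor the singular power out explicitly), and must track that all implicit constants stay uniform in $t$; your argument does neither.

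The decisive gap is the middle estimate $|\bar\partial R_j|\le c|z-1|$, which is the only part of the lemma that is special to the transition region and the one the error analysis in Section \ref{modi4} actually needs. Your argument for it fails twice. First, the inequality $|\bar z-z_j|\gtrsim|z-1|$ is false on $\Omega_j$: letting $z\to z_j$ one has $|z-z_j|\to0$ while $|z-1|\to|z_j-1|\sim t^{-1/3}>0$; in fact the correct inequality runs the other way, $|z-z_j|\le|z-1|+|1-z_j|\le 2|z-1|$ on $\Omega_j$, so the factor $|\bar z-z_j|^{-1}$ is \emph{worse} than $|z-1|^{-1}$, not better. Second, even granting your inequality, the arithmetic ``bracket $=\mathcal O(|z-1|)$ divided by $|\bar z-z_j|\gtrsim|z-1|$'' yields $\mathcal O(1)$, not $\mathcal O(|z-1|)$. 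To obtain $c|z-1|$ one needs the bracket to be $\mathcal O(|z-1|\,|z-z_j|)$, i.e.\ the non-analytic function being interpolated must have derivative $\mathcal O(|z-1|)$ throughout the neighbourhood; this is arranged by subtracting, on the support of the cutoff $\omega$, the first-order Taylor polynomial of the data at $z=1$ (analytic, hence $\bar\partial$-free) and interpolating only the Taylor \emph{remainder}, whose derivative vanishes linearly at $1$ given the extra regularity of $r$ near $\pm1$ supplied by $q_0\in\tanh x+H^{4,4}(\mathbb{R})$ --- the term written $\varphi(|z|)$ in the statement records the cost of $\bar\partial$ hitting $\omega$. Your sketch mentions ``Taylor data'' but never sets this construction up; and note that even this device does not come for free, because forcing the extension to the exact constant $r(z_j)T^2(z_j)$ on $\Sigma_j$ makes the analytic Taylor part contribute a term of size $|\tilde r'|$ near $z_j$ that does not vanish as $z\to z_j$. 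So the refined bound cannot be reached by decorating your formula with a cutoff: the local construction (and, consistently, the boundary data fed to the Painlev\'e parametrix) must genuinely be modified near $\pm1$. This is precisely the Cuccagna--Jenkins device the paper alludes to, and it is the step missing from your proposal.
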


Define a matrix function as
    \begin{align}\label{R3}
    	R^{(3)}(z)=\begin{cases}
      		\left(\begin{array}{cc}
    	1&  0\\
    -R_{j}(z)e^{2it\theta(z)}&1
    \end{array} \right), \quad z \in \Omega_{j},\ j=0\pm,1,2,3,4,\\
          		\left(\begin{array}{cc}
    	1&  -\overline{R}_{j}(z)e^{-2it\theta(z)}\\0&1
    \end{array} \right), \quad z \in \overline{\Omega}_{j},\ j=0\pm,1,2,3,4,\\
    I, \quad { else}.
    	\end{cases}
    \end{align}
and  a new jump contour given by
$$\Sigma^{(4)}=\bigcup\limits_{j=0\pm}(L_j\cup \overline{L}_j)\cup[z_4,z_3]\cup [z_2, z_1].$$
The following  transformation shows a new function
    \begin{equation}\label{trans4}
        M^{(4)}(z)=M^{(3)}(z)R^{(3)}(z),
    \end{equation}
where  $M^{(4)}(z)$ satisfies a hybrid $\bar{\partial}$-RH problem \ref{m4}.

\begin{prob2} \label{m4}
    Find    $M^{(4)}(z)=M^{(4)}(z;x,t)$  with properties
    \begin{itemize}
        \item  $M^{(4)}(z)$ is continuous in $\mathbb{C}\setminus  \Sigma^{(4)} $. See Figure \ref{jumpm4}.
        \item $M^{(4)}(z)$ satisfies the  jump condition
        \begin{equation}
            M^{(4)}_+(z)=M^{(4)}_-(z)V^{(4)}(z),
        \end{equation}
     \end{itemize}
        where
       \begin{align}\label{V3}
       	V^{(4)}(z)=\begin{cases}
           \left( \begin{array}{cc}
    	1& 0\\ r\left( z_j\right)T^2(z_j) e^{2it\theta(z)}&1
    \end{array} \right), \  z \in \Sigma_{j}, \ j=1,2,3,4;\vspace{2mm}\\
    \left(\begin{array}{cc}
    	1&  -\overline{r}\left( z_j\right)T^{-2}(z_j) e^{-2it\theta(z)}\\0&1
    \end{array} \right), \  z \in \overline{\Sigma}_{j},\ j=1,2,3,4;\vspace{2mm}\\
     	\left(	\begin{array}{cc}
     	1& 0\\
     	r\left( z_2\right)T^2(z_2) e^{2it\theta(z) } & 1
     \end{array}\right),\ z\in L_{0+};\vspace{2mm}\\
     \left(	\begin{array}{cc}
     	1& -\overline{r}\left( z_2\right)T^{-2}(z_2) e^{-2it\theta(z) }\\
     	0 & 1
     \end{array}\right),\ z\in \overline{L}_{0+};\vspace{2mm}\\
     	\left(	\begin{array}{cc}
     	1& 0\\
     	-r\left( z_3\right)T^2(z_3) e^{2it\theta(z) } & 1
     \end{array}\right),\ z\in L_{0-};\vspace{2mm}\\
            \left(		\begin{array}{cc}
       	1& \overline{r}\left( z_3\right)T^{-2}(z_3) e^{-2it\theta(z)}\\
       	0 & 1
       \end{array}\right),\  z\in \overline{L}_{0-};\vspace{2mm}\\
       T(z)^{-\sigma_3} V(z)T(z)^{\sigma_3},\quad z\in \mathbb{R}\backslash\Gamma.
       	\end{cases}
       \end{align}
    \begin{itemize}
        \item $M^{(4)}(z)$ has the asymptotic behavior: $ M^{(4)}(z)=I+\mathcal{O}(z^{-1}),	\quad  z \to  \infty.$

        \item For $z\in \mathbb{C}\setminus \Sigma^{(4)} $, we have
        \begin{equation*}
            \bar{\partial}M^{(4)}(z)= M^{(4)}(z) \bar{\partial}R^{(3)}(z),
        \end{equation*}
        where
         \begin{align}\label{parR3}
    	\bar{\partial}R^{(3)}(z)=\begin{cases}
      		\left(\begin{array}{cc}
    	0&  0\\
    -\bar{\partial}R_{j}(z)e^{2it\theta(z)}&0
    \end{array} \right), \quad z \in \Omega_{j},\ j=0\pm,1,2,3,4,\\
          		\left(\begin{array}{cc}
    	0&  -\bar{\partial}\overline{R}_{j}(z)e^{-2it\theta(z)}\\0&0
    \end{array} \right), \quad z \in \overline{\Omega}_{j},\ j=0\pm,1,2,3,4,\\
   0, \quad { else}.
    	\end{cases}
    \end{align}
\end{itemize}
    \end{prob2}

In the next subsection, we decompose $M^{(4)}(z)$ into a pure RH problem  $M^{rhp}(z)$ with $\bar{\partial} R^{(3)}(z)=0$ and a pure $\bar{\partial}$-problem $M^{(5)}(z)$ with $\bar{\partial} R^{(3)} (z)\neq 0$
in the form

\begin{equation}
M^{(4)}(z)=\begin{cases}
M^{rhp}(z)\Longrightarrow\bar{\partial} R^{(3)}(z)=0,\\
M^{(5)}(z) M^{rhp}(z)\Longrightarrow\bar{\partial} R^{(3)} (z)\neq 0.
\end{cases}\label{rhdbar}
\end{equation}

     \begin{figure}[H]
        \begin{center}
  \begin{tikzpicture}[scale=0.8]
    \path[fill=gray!20] (-6,0)rectangle (6,2.5);
\fill [white]  plot [domain=2.5:0,smooth] (4.6+\x*\x/9,\x) -- (6,0) -- (6.1,2.5);
\fill [gray!20]  plot [domain=-2.5:0,smooth] (4.6+\x*\x/9,\x) -- (6,0) -- (6.1,-2.5);
\fill [white]  plot [domain=2.5:0,smooth] (-4.6-\x*\x/9,\x) -- (-6,0) -- (-6.1,2.5);
\fill [gray!20]  plot [domain=-2.5:0,smooth] (-4.6-\x*\x/9,\x) -- (-6,0) -- (-6.1,-2.5);
 \filldraw[gray!20](-1.6,0)--(-2.6,0) arc (180:360:1.3);
   \filldraw[gray!20](-1.0,0)--(-0.0,0) arc (180:360:1.3);
    \filldraw[white](-1.6,0)--(0,0) arc (0:180:1.3);
     \filldraw[white](1.3,0)--(2.6,0) arc (0:180:1.3);

                \draw[dotted,black!40,-latex](-6,0)--(6,0)node[black,right]{$\re z$};
                 \draw [ ] (-4.6, 0)--(-2.6,0);
                                    \draw [ ] (4.6, 0)--(2.6,0);
                                 \draw [-latex ] (-4.6, 0)--(-3.5,0);
                                    \draw [-latex ] (2.6, 0)--(4,0);

                \node[shape=circle,fill=blue, scale=0.15]  at (-3.45,0){0} ;
                      \node[shape=circle,fill=blue, scale=0.15]  at (3.45,0){0} ;
            \node[shape=circle,fill=black,scale=0.15] at (-2.6,0) {0};
              \node[shape=circle,fill=black,scale=0.15] at (2.6,0) {0};
             \node[shape=circle,fill=black,scale=0.15] at (-4.6,0) {0};
             \node[shape=circle,fill=black,scale=0.15] at (4.6,0) {0};
                \node[shape=circle,fill=blue,scale=0.15] at (0.0,0) {0};
                		\coordinate (A) at (-1.3,  0);
                \coordinate (B) at (1.3,  0);
                \coordinate (C) at (-3.45,  0);
                \coordinate (D) at (3.45,  0);

\draw[dotted,blue] (C) circle [radius=1.6];
\draw[dotted,blue] (D) circle [radius=1.6];

                \node[below,blue] at (0.0,0) {\footnotesize $0$};
                 \node[below,blue] at (-3.45,0) {\footnotesize $-1$};
                 \node[below,blue] at (3.45,0) {\footnotesize $1$};

                  \node[below] at (-4.3,0) {\footnotesize $z_4$};
                 \node[below] at (4.4,0) {\footnotesize $z_1$};
                   \node[below] at (-2.6,0) {\footnotesize $z_3$};
                 \node[below] at (2.6,0) {\footnotesize $z_2$};


                     \node[below] at (-0.75,1.2) {\footnotesize $\Sigma_{0-}$};
                 \node[below] at (0.85,1.2) {\footnotesize $\Sigma_{0+}$};
                 \node[below] at (-0.4,-0.4) {\footnotesize $\overline{\Sigma}_{0-}$};
                 \node[below] at (0.62,-0.4) {\footnotesize $\overline{\Sigma}_{0+}$};

                 \node[below] at (-0.75,0.6) {\footnotesize $L_{0-}$};
                 \node[below] at (0.87,0.6) {\footnotesize $L_{0+}$};
                 \node[below] at (-0.8,0.1) {\footnotesize $\overline{L}_{0-}$};
                 \node[below] at (1,-0.) {\footnotesize $\overline{L}_{0+}$};


                   \node[below] at (-1.8,1.2) {\footnotesize $\Sigma_{3}$};
                 \node[below] at (1.8,1.2) {\footnotesize $\Sigma_{2}$};
                 \node[below] at (-1.8,-0.4) {\footnotesize $\overline{\Sigma}_{3}$};
                 \node[below] at (2,-0.4) {\footnotesize $\overline{\Sigma}_{2}$};


                    \node[below] at (-5.4,1.6) {\footnotesize $\Sigma_{4}$};
                 \node[below] at (5.4,1.6) {\footnotesize $\Sigma_{1}$};
                 \node[below] at (-5.4,-1.0) {\footnotesize $\overline{\Sigma}_{4}$};
                 \node[below] at (5.4,-1.0) {\footnotesize $\overline{\Sigma}_{1}$};

                                 \draw [ ] (0.0, 0)--(-1.3,0.9);
                                    \draw [ ] (0.0, 0)--(1.3,0.9);
                                 \draw [ ] (0.0, 0)--(-1.3,-0.9);
                                  \draw [ ] (0.0, 0)--(1.3,-0.9);
                                 \draw[](-2.6,0)--(-1.3,0.9);
                                    \draw[](2.6,0)--(1.3,0.9);
                                  \draw[](-2.6,0)--(-1.3,-0.9);
                                  \draw[](2.6,0)--(1.3,-0.9);
                               \draw [ ] (4.6,0 )--(6,-1.5);
                                 \draw [ ] (4.6,0 )--(6,1.5);
                                 \draw[-latex](4.6,0 )--(5.25,-0.7);
                                       \draw[-latex](4.6,0 )--(5.25,0.7);
                                 \draw [ ] (-4.6,0)--(-6,-1.5);
                                 \draw [ ] (-4.6,0 )--(-6,1.5);
                                 \draw[-latex](-6.0,-1.5 )--(-5.25,-0.7);
                                       \draw[-latex](-6.0,1.5 )--(-5.25,0.7);

                                       \draw[](-1.3,0)--(-1.3,0.9);
                                             \draw[](1.3,0)--(1.3,0.9);
                                              \draw[-latex](-1.3,0)--(-1.3,0.45);
                                             \draw[-latex](1.3,0)--(1.3,0.45);

                                    \draw[](-1.3,0)--(-1.3,-0.9);
                                             \draw[](1.3,0)--(1.3,-0.9);
                                              \draw[-latex](-1.3,0)--(-1.3,-0.45);
                                             \draw[-latex](1.3,0)--(1.3,-0.45);

                            \draw [-latex] (-2.6,0)--(-1.95,0.45);
                              \draw [-latex] (-1.3,0.9)--(-0.65,0.45);
                                  \draw [-latex] (0,0)--(0.65,0.45);
                              \draw [-latex] (1.3,0.9)--(1.95,0.45);
                             \draw [-latex] (-2.6,0)--(-1.95,-0.45);
                             \draw [-latex] (-1.3,-0.9)--(-0.65,-0.45);
                                \draw [-latex] (0,0)--(0.65,-0.45);
                              \draw [-latex] (1.3,-0.9)--(1.95,-0.45);

  \end{tikzpicture}
            \caption{ \footnotesize{The jump contour $\Sigma^{(4)}$. }}
      \label{jumpm4}
        \end{center}
    \end{figure}
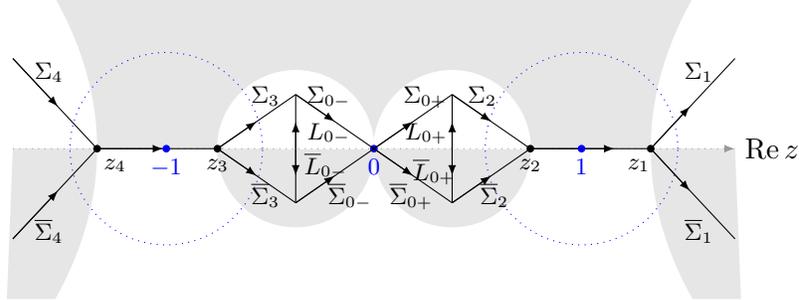

    \subsection{Pure RH problem} \label{modi3}
  \hspace*{\parindent}
In this subsection, we consider the contribution of the pure RH problem to the long-time asymptotics.
The pure RH problem  is given as follows.
	
\begin{prob}\label{mrhp}
    Find  $M^{rhp}(z)=M^{rhp}(z;x,t)$ which satisfies
	  \begin{itemize}
	  	\item  $M^{rhp}(z)$ is analytical in $\mathbb{C}\backslash \Sigma^{(4)}$. See   Figure \ref{jumpm4}.
	  	\item $M^{rhp}(z)$  satisfies the jump condition
\begin{equation*}
	  		M^{rhp}_+(z)=M^{rhp}_-(z)V^{(4)}(z),
	  	\end{equation*}
	  	where $V^{(4)}(z)$ is given by (\ref{V3}).
	  	\item   $M^{rhp}(z)$  has the same asymptotics with  $M^{(4)}(z)$.

	  \end{itemize}
\end{prob}

We consider the local models of $M^{rhp}(z)$ as $z=\pm1$.
 For a fixed parameter $0<\varepsilon <\sqrt{2C}$, we define two open disks in the neighborhood of  $z=\pm1$,
 \begin{align*}
\mathcal{U}_{L} := \{z \in \mathbb{C}: |z+1|< \varepsilon \tau^{-1/3+\varsigma}\},\ \
  \mathcal{U}_{R} := \{z \in \mathbb{C}: |z-1|< \varepsilon \tau^{-1/3+\varsigma}\},
 \end{align*}
the boundaries $\partial\mathcal{U}_{L}$ and $\partial\mathcal{U}_{R}$ with oriented counterclockwise, see Figure \ref{jumpm4}. The parameter $\varsigma$ is a constant and is defined by $0<\varsigma<1/6$ and $\sqrt{2C}\tau^{-1/3+\varsigma}<\rho$ as $t$ is large enough, $\rho$ is defined by (\ref{definerho}).
The saddle points belong to the corresponding disks
  \begin{align*}
  z_1, z_2 \in \mathcal{U}_{R},\ \ \ \  z_3, z_4 \in \mathcal{U}_{L}.
  \end{align*}
In the disks $\mathcal{U}_{L}$ and $\mathcal{U}_{R}$, there are two local RH problems associated with $ M_{L}(z)$ and $ M_{R}(z)$.

In fact, the jump matrix $V^{(4)}$ outside $\mathcal{U}_{L}$ and $\mathcal{U}_{R}$ decays   exponentially and uniformly fast with $t \to +\infty$ into  a identity matrix.
This  enlightens us to reconstruct the solution of $M^{rhp}(z)$ in the following three cases:
 \begin{equation} \label{trans6}
 M^{rhp}(z) = \begin{cases}
  M^{err}(z),\quad z \in \mathbb{C} \setminus \{\mathcal{U}_{L}\cup\mathcal{U}_{R}\},\\
  M^{err}(z) M_{L}(z), \quad z \in \mathcal{U}_{L},\\
 M^{err}(z) M_{R}(z), \quad z \in \mathcal{U}_{R},
 \end{cases}
 \end{equation}
where $ M^{err}(z)$ is an error function  determined in  subsection \ref{snrh} via  a small norm RH problem.
 $M_{L}(z)$ and $M_{R}(z)$ admit  local models  near $z=-1$ and $z=1$, which are approximated by an RH problem associated with Painlev\'{e} II transcendents. We expound the local  model $M_{R}(z)$ as $z=1$, the local model $ M_{L}(z)$ can be derived in a similar way.

The local RH problem of  $M_{R}(z)$ is as follows.
\begin{prob}
    Find  $ M_{R}(z):=M_{R}(z,x,t)$ with properties
	  \begin{itemize}
	  	\item  $M_{R}(z)$ is analytical in $\mathcal{U}_{R} \backslash  {\Sigma}_{R}$, where   $\Sigma_{R}= \Sigma^{(4)} \cap \mathcal{U}_{R}   $.
     See Figure \ref{sign3er}.
	  	\item  $M_{R}(z)$  satisfies the jump condition
\begin{equation*}
	  		M_{R+}(z)=M_{R-}(z) {V}_{R}(z), \ z\in  {\Sigma}_{R},
\end{equation*}
\end{itemize}
where
\begin{align}\label{vp}
        V_{R}(z)= \begin{cases}
       \left( \begin{array}{cc}
       		1& 0\\
       		 r(z_j) T^2(z_j) e^{2it\theta(z) }     & 1
       	\end{array}\right),\  z\in \{\Sigma_{j}\cap\mathcal{U}_{R}\}, \ j=1,2, \\
    \left(	\begin{array}{cc}
       			1& -\overline{r(z_j)} T^{-2}(z_j)e^{-2it\theta(z) }\\
       			0 & 1
       		\end{array}\right) ,\ z \in \{\overline{\Sigma}_{j}\cap\mathcal{U}_{R}\}, \ j=1,2,\\
 T(z)^{-\sigma_3} V(z)T(z)^{\sigma_3},  \   z\in (z_2,  z_1).
       	\end{cases}
       \end{align}

	  \begin{itemize}
         \item   $M_{R}(z)(I+\mathcal{O}(t^{-1/3}))^{-1} \to I, \ t\to +\infty$, uniformly for $z \in \partial \mathcal{U}_{R}.$
	  \end{itemize}
\end{prob}

\subsubsection{Local solvable RH problem}
\hspace*{\parindent}
In the region $\mathcal{D}$,  we have $\xi \to \left(-\frac{3}{2}\right)^-$  as $t \to +\infty$.  From
(\ref{z1234})-(\ref{iz1234}), we found that   the phase points $z_1$ and $z_2$ will merge to $z=1$, while $z_3$ and $z_4$ will merge to $z=-1$.

We seek a polynomial fit for the phase function $t\theta(z)$ as $z=1$ and find that
\begin{align}
t\theta(z)={}&3 t (z - 1)^3+(x+3 t) (z -1)+\frac{1}{2}x\left[\sum_{n=2}^{\infty}(-1)^{n-1}(z-1)^n\right]\notag\\
&+\frac{1}{4}t\left[3(z-1)^2-11(z-1)^3+\sum_{n=2}^{\infty}(-1)^{n-1}\frac{n^2+3n+8}{2}(z-1)^n\right]\notag\\
  \xlongequal{z\mapsto\hat{k} }{}&\frac{4}{3}\hat{k}^3+s\hat{k}+\mathcal{O}(t^{-1/3}\hat{k}^2),\label{z1}
 \end{align}
where
\begin{align}
  \hat{k}  = \tau^{\frac{1}{3}} (z-1),\ \
  s   = \frac{8}{9} \left(\xi+\frac{3}{2}\right)\tau^{\frac{2}{3}},\ \
  \tau=\frac{9}{4}t.\label{hatk}
\end{align}
In the same way, we find that the phase function $t\theta(z)$ has the same property as $z=-1$. It is given by
\begin{align}
t\theta(z)={}&3 t (z + 1)^3+(x+3 t) (z + 1)+\frac{1}{2}x\left[\sum_{n=2}^{\infty}(z+1)^n\right]\notag\\
&+\frac{1}{4}t\left[-3(z+1)^2-11(z+1)^3+\sum_{n=2}^{\infty}\frac{n^2+3n+8}{2}(z+1)^n\right]\notag\\
 \xlongequal{z\mapsto\check{k} }{}&\frac{4}{3}\check{k}^3+s\check{k}+\mathcal{O}(t^{-1/3}\check{k}^2),\label{z-1}
 \end{align}
where $s, \tau$ are defined by (\ref{hatk}), and
\begin{equation}
  \check{k}  = \tau^{\frac{1}{3}} (z+1).\label{checkk}
\end{equation}

Moreover,
the new phase points $\hat{k}_j$ and $\check{k}_j$ with the properties:
\begin{proposition} \label{opow}
 In the transition region $\mathcal{D}$ (\ref{D}),  the scaling transformations  (\ref{hatk}) and  (\ref{checkk}) lead to four new phase points  $\hat{k}_j= \tau^{1/3} (  z_j -1  ),\ j=1,2$ and $\check{k}_j= \tau^{1/3} (  z_j+1 ), \ j=3,4$ which are
always held by a constant
  \begin{align*}
  &\left|\hat{k}_{1}\right|< 2(9/4)^{1/3} \sqrt{2C},\ \ \left|\hat{k}_{2}\right|< (9/4)^{1/3} \sqrt{2C},\\
  &\left|\check{k}_{3}\right|< (9/4)^{1/3} \sqrt{2C},\ \ \left|\check{k}_{4}\right|< 2(9/4)^{1/3} \sqrt{2C}.
        \end{align*}
\end{proposition}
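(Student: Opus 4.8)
The plan is to estimate the four real stationary points directly from their closed-form expressions (\ref{z1234}) and then transport these estimates through the scaling (\ref{hatk})--(\ref{checkk}), the whole time using the defining inequality of the transition region $\mathcal{D}$ in (\ref{D}). The first move is to halve the work by symmetry. Since $z_3=-z_2$ and $z_4=-z_1$ (recorded immediately after (\ref{z1234})), the definitions $\hat{k}_j=\tau^{1/3}(z_j-1)$ and $\check{k}_j=\tau^{1/3}(z_j+1)$ give $\check{k}_3=-\hat{k}_2$ and $\check{k}_4=-\hat{k}_1$, so that $|\check{k}_3|=|\hat{k}_2|$ and $|\check{k}_4|=|\hat{k}_1|$. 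Thus it is enough to bound $\hat{k}_1$ and $\hat{k}_2$, and the stated pairing of constants will follow automatically.

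Next I would introduce the local variable $\epsilon:=-(\xi+\tfrac32)$, which is positive in $\mathcal{D}$ and, by (\ref{D}), obeys $\epsilon\, t^{2/3}<C$; with $\tau=\tfrac94 t$ this reads $\epsilon\,\tau^{2/3}<(9/4)^{2/3}C$, and in particular $\epsilon\to0$ as $t\to\infty$ inside $\mathcal{D}$. Writing $\Delta=4\xi^2-9=4\epsilon(3+\epsilon)$ and $-2\xi=3+2\epsilon$, the formulas (\ref{z1234}) reduce to the exact identities
\begin{equation*}
z_1^2-1=\frac{2\epsilon+2\sqrt{\epsilon(3+\epsilon)}}{3},\qquad z_2^2-1=\frac{2\epsilon-2\sqrt{\epsilon(3+\epsilon)}}{3}.
\end{equation*}
I would then factor $z_j-1=(z_j^2-1)/(z_j+1)$. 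The ordering $0<z_2<1<z_1$ from (\ref{z1234}) shows that $z_j+1$ is bounded below (by $1$, and in fact tends to $2$ as $\epsilon\to0$), while the displayed identities bound $|z_j^2-1|$ by a constant times $\sqrt{\epsilon}$ for small $\epsilon$: the $2\epsilon$ contribution is dominated by $2\sqrt{\epsilon(3+\epsilon)}$, and $\sqrt{3+\epsilon}$ is uniformly close to $\sqrt3$. Combining these yields $|z_j-1|\le c\,\sqrt{\epsilon}$ with an explicit $c$.

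Finally I would multiply through by $\tau^{1/3}$ and invoke the region constraint: $\epsilon\,\tau^{2/3}<(9/4)^{2/3}C$ gives $\sqrt{\epsilon}\,\tau^{1/3}<(9/4)^{1/3}\sqrt{C}$, whence $|\hat{k}_j|=\tau^{1/3}|z_j-1|\le c\,(9/4)^{1/3}\sqrt{C}$. Tracking the denominators --- using $z_1+1>2$ for $\hat{k}_1$ and the bound $z_2+1>1$ for $\hat{k}_2$ --- produces exactly the asserted constants, the extra factor $2$ attached to $\hat{k}_1,\check{k}_4$ versus $\hat{k}_2,\check{k}_3$ being mere slack in the estimate (at leading order all four points have the same magnitude $\sim\tfrac{\sqrt{\epsilon}}{\sqrt3}\,\tau^{1/3}$, which one may also read off as the critical points $\pm\tfrac12\sqrt{-s}$ of the model cubic $\tfrac43\hat{k}^3+s\hat{k}$ in (\ref{z1})). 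The only point needing care is uniformity in $\mathcal{D}$, but this is guaranteed because the constraint $\epsilon\,\tau^{2/3}<(9/4)^{2/3}C$ forces $\epsilon$ to be small for large $t$, keeping $z_j+1$ near $2$ and $\sqrt{3+\epsilon}$ near $\sqrt3$; consequently the argument is a bookkeeping computation rather than a genuine conceptual obstacle.
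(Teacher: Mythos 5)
Your proof is correct, and it reaches the bounds by a genuinely different algebraic route than the paper, so a comparison is in order. The paper never opens the nested radicals in (\ref{z1234}): it passes to the variable $\chi_j=z_j+z_j^{-1}$, for which the stationary--point equation $3(\chi_j^2-2)+4\xi=0$ is quadratic, and then uses the clean identity $(z_2-1)^2=(\chi_2-2)z_2$ to get $(z_2-1)^2<-2\left(\xi+\tfrac32\right)<2Ct^{-2/3}$, hence $|\hat{k}_2|<(9/4)^{1/3}\sqrt{2C}$; the bound for $z_1$ is \emph{not} re-derived but inherited from the one for $z_2$ through the reciprocal symmetry $z_1z_2=1$ together with $\tfrac12<z_2<1$ for $t$ large, via $z_1-1=(1-z_2)/z_2<2(1-z_2)$ --- this is precisely where the extra factor $2$ in the constants for $\hat{k}_1,\check{k}_4$ originates. (The paper's displayed relation $\chi_j=-\tfrac43\xi+2$ is a typo for $\chi_j^2=-\tfrac43\xi+2$, so its chain (\ref{23236}) is garbled as printed, but the conclusion $(z_2-1)^2<-2\left(\xi+\tfrac32\right)$ is true, e.g.\ since $\chi_2-2=\frac{-\frac43(\xi+3/2)}{\chi_2+2}<-\tfrac13\left(\xi+\tfrac32\right)$.) Your route --- explicit radicals plus the factorization $z_j-1=(z_j^2-1)/(z_j+1)$ and the region constraint $\sqrt{\epsilon}\,\tau^{1/3}<(9/4)^{1/3}\sqrt{C}$ --- is equally valid and more self-contained, treats $z_1$ and $z_2$ on the same footing, and correctly ends with the same symmetry reduction $\check{k}_3=-\hat{k}_2$, $\check{k}_4=-\hat{k}_1$ that the paper uses; both arguments share the same implicit restriction to large $t$ (you need $\epsilon$ small, the paper needs $z_2>\tfrac12$). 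Two caveats on your write-up, neither fatal. First, your denominators do not produce ``exactly the asserted constants'': for small $\epsilon$ they give roughly $|\hat{k}_1|\lesssim\tfrac{1}{\sqrt3}(9/4)^{1/3}\sqrt{C}$ and $|\hat{k}_2|\lesssim\tfrac{2}{\sqrt3}(9/4)^{1/3}\sqrt{C}$ --- note the ordering is \emph{opposite} to the proposition's --- and these imply the stated inequalities only because $\tfrac{2}{\sqrt3}<\sqrt2$; you should say this rather than claim exact agreement. Second, your parenthetical cross-check against the critical points $\pm\tfrac12\sqrt{-s}$ of the model cubic does not in fact match your (correct) leading order $\sqrt{\epsilon/3}\,\tau^{1/3}$: with the paper's $s,\tau$ from (\ref{hatk}) one gets $\tfrac{\sqrt2}{3}\sqrt{\epsilon}\,\tau^{1/3}$, off by a factor $\sqrt{3/2}$. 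That mismatch traces to the bookkeeping in the paper's expansion (\ref{z1}) (the true cubic Taylor coefficient of $t\theta$ at $z=1$, $\xi=-\tfrac32$ is $2t$ rather than $3t$), it plays no role in your estimates, but it should not be cited as confirmation.
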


\begin{proof}
  From (\ref{2theta}), it can be seen that the phase points $z_j, j=1,2$ satisfy the equation
   \begin{equation*}
   3(\chi^2_j-2)+4\xi=0,
   \end{equation*}
   where $\chi_j=z_j+z_j^{-1}$ and $\xi$ satisfy the relation
   \begin{equation*}
   \chi_j=-\frac{4}{3}\xi+2.
   \end{equation*}
For the reason that $z_1, z_2\to 1$ as $t\to\infty$, we take a large $t$ such that $\frac{1}{2}<z_2<1$.
Furthermore, we have
\begin{align}
&z_2^2 +1 =\chi_2 z_2  \Longrightarrow    ( z_2 -1 )^2=(\chi_2-2)z_2. \label{por4}
\end{align}
Noting that $\chi_2 -2 >0$ and $0<z_2 <1$, we have
\begin{align}
& (\chi_2-2) z_2 <  \chi_2 -2  =   -\frac{4}{3}\xi
 <  -2\left(\xi+\frac{3}{2}\right).\label{23236}
\end{align}
Substituting (\ref{23236}) into (\ref{por4}) yields
\begin{align}
&( z_2 -1 )^2<  -2\left(\xi+\frac{3}{2}\right) <  2Ct^{-2/3},\label{por32}
\end{align}
 which implies that
\begin{align}
&     \big|\hat{k}_2 \big|= \big|\tau^{1/3} (  z_2- 1  ) \big| <(9/4)^{1/3} \sqrt{2C}.        \label{por33}
\end{align}
Recalling that the  symmetry $ z_1z_2=1$ which implies that  $z_1>1$.     We have
 \begin{align*}
z_1 -1 =  \frac{1}{z_2} -1    =  \frac{1- z_2}{z_2}.
\end{align*}
Using (\ref{por33}), we derive $\hat{k}_1$ can be controlled  by a constant
 \begin{align*}
\big| \hat{k}_1 \big|= \big|\tau^{1/3} (  z_1 -1   ) \big|= \left|  \frac{ \tau^{1/3} (1-z_2)}{z_2}\right|< 2(9/4)^{1/3} \sqrt{2C}.
\end{align*}
From (\ref{z1234}), we have $z_3=-z_2$ and $z_{4}=-z_1$,
the estimate of $\left|\check{k}_j\right|$ can be derived.
\end{proof}

 We define two open disks in the neighborhood of $\hat{k}=0$ and $\check{k}=0$,
 \begin{align*}
\check{\mathcal{U}}_{L} := \{\check{k} \in \mathbb{C}: |\check{k}|< \varepsilon\tau^{\varsigma} \},\ \
 \hat{\mathcal{U}}_{R} := \{\hat{k} \in \mathbb{C}: |\hat{k}|< \varepsilon\tau^{\varsigma} \},
 \end{align*}
which  are oriented counterclockwise, respectively.
There are new saddle points $\hat{k}_j, j=1,2$ and $\check{k}_j, j=3,4$ under the maps $z\mapsto\hat{k}$ (\ref{z1}) and $z\mapsto\check{k}$ (\ref{z-1}), respectively, which belong to the corresponding disks
  \begin{align*}
  \hat{k}_1, \hat{k}_2 \in \hat{\mathcal{U}}_{R},\ \ \ \  \check{k}_3, \check{k}_4 \in \check{\mathcal{U}}_{L}.
  \end{align*}
  The transformation (\ref{hatk}) and (\ref{checkk}) map  $\mathcal{U}_{R}$ into $\hat{\mathcal{U}}_{R}$ and  map $\mathcal{U}_{L}$ into $\check{\mathcal{U}}_{L}$, see Figure \ref{jumpm4}.

   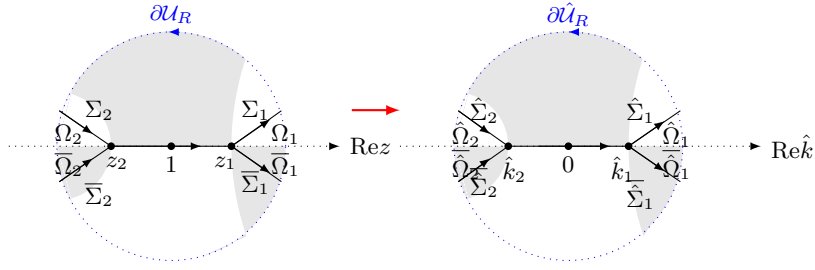
\begin{figure}
        \begin{center}
  \begin{tikzpicture}[scale=0.8]
      \filldraw[gray!20](-3.3,0)--(-1.4,0) arc (0:180:1.9);
       \filldraw[gray!20](3.3,0)--(5.2,0) arc (0:180:1.9);

\fill [white]  plot [domain=2.5:0,smooth] (4.3+\x*\x/9,\x) -- (6,0) -- (6.1,2.5);
\fill [gray!20]  plot [domain=-1.5:0,smooth] (4.3+\x*\x/9,\x)  -- (5.2,0)arc (0:-45:1.9);
\fill [white]  plot [domain=2.5:0,smooth] (-2.3+\x*\x/9,\x) -- (-1.4,0) -- (-1.5,2.5);
\fill [gray!20]  plot [domain=-1.5:0,smooth] (-2.3+\x*\x/9,\x)  -- (-1.4,0)arc (0:-75:1.9);
\fill[white](-5.2,0)--(-4.3,0)arc(0:75:0.9)--(-5.2,0)arc(180:150:1.9);
      \fill[gray!20](-5.2,0)--(-4.3,0) arc (0:-75:0.9)--(-5.2,0)arc(180:210:1.9);
  \fill[gray!20](1.4,0)--(2.3,0) arc (0:-75:0.9)--(1.4,0)arc(180:210:1.9);
 \fill[white](1.4,0)--(2.3,0) arc (0:75:0.9)--(1.4,0)arc(180:150:1.9);

                \draw[dotted,-latex](-6,0)--(-0.5,0)node[black,right]{\footnotesize Re$z$};
                \node[shape=circle,fill=black, scale=0.15]  at (-3.3,0){0} ;
            \node[shape=circle,fill=black,scale=0.15] at (-2.3,0) {0};
             \node[shape=circle,fill=black,scale=0.15] at (-4.3,0) {0};
\draw[blue,-latex](-3.3,1.9)--(-3.35,1.9);
                 \node[below] at (-3.3,0) {\footnotesize $1$};
                  \draw [dotted,blue](-3.3,0) circle (1.9);
                   \draw [-latex] (-2.3,0)--(-1.65,0.45);
                                 \draw[](-2.3,0)--(-1.47,0.59);
                      \draw [ ] (-4.3, 0)--(-5.16, 0.59);
                           \draw [-latex] (-5.16, 0.59)--(-4.6,0.2);

    \draw [ ] (-2.3,0)--(-1.47, -0.59);
          \draw [-latex] (-2.3,0)--(-1.65,-0.45);
   \draw [] (-4.3, 0)--(-5.16, -0.59);
       \draw [] (-4.3, 0)--(-2.3, 0);
    \draw [-latex](-4.3, 0)--(-2.8,0);
                \draw [-latex]  (-5.16, -0.59)--(-4.6, -0.2);
\node[blue]  at (-3.3,2.2) {\scriptsize $\partial \mathcal{U}_{R}$};

 \node[below] at (-4.2,0) {\footnotesize $z_2$};
                                         \node[below] at (-2.4,0) {\footnotesize $z_1$};
                                         \node  at (-4.5,0.6) {\footnotesize $\Sigma_{2}$};
                                         \node  at (-4.5,-0.8) {\footnotesize $\overline{\Sigma}_{2}$};
                                          \node  at (-1.9,0.6) {\footnotesize $\Sigma_{1}$};
                                         \node  at (-1.9,-0.6) {\footnotesize $\overline{\Sigma}_{1}$};

                                         \node at (-1.4,0.2) {\footnotesize $\Omega_1$};
                                          \node at (-1.4,-0.3) {\footnotesize $\overline{\Omega}_1$};
                                          \node at (-5.0,0.2) {\footnotesize $\Omega_2$};
                                          \node at (-5.0,-0.3) {\footnotesize $\overline{\Omega}_2$};

\draw[-latex,thick,red] (-0.3,0.6)--(0.5,0.6);

   \draw[dotted,-latex](0.5,0)--(6.5,0)node[black,right]{\footnotesize Re$\hat{k}$};
                \node[shape=circle,fill=black, scale=0.15]  at (3.3,0){0} ;
            \node[shape=circle,fill=black,scale=0.15] at (2.3,0) {0};
             \node[shape=circle,fill=black,scale=0.15] at (4.3,0) {0};

\draw[blue,-latex](3.35,1.9)--(3.3,1.9);
                 \node[below] at (3.3,0) {\footnotesize $0$};
                  \draw [dotted,blue](3.3,0) circle (1.9);
                   \draw [-latex] (1.47,0.59)--(2.0,0.2);
                                 \draw[](2.3,0)--(1.47,0.59);
                                  \draw [ ] (4.3, 0)--(5.16, 0.59);
                           \draw [-latex] (4.3, 0)--(4.9,0.42);

    \draw [ ] (2.3,0)--(1.47, -0.59);
          \draw [-latex] (1.47,-0.59)--(2,-0.2);
   \draw [] (4.3, 0)--(5.16, -0.59);
       \draw [] (4.3, 0)--(2.3, 0);
    \draw [-latex](2.8,0)--(4, 0);
                \draw [-latex] (4.3, 0)--(4.9,-0.42);
\node[blue]  at (3.3,2.2) {\scriptsize $\partial \hat{\mathcal{U}}_{R}$};

 \node[below] at (4.2,0) {\footnotesize $\hat{k}_1$};
                                         \node[below] at (2.4,0) {\footnotesize $\hat{k}_2$};
                                         \node  at (4.5,0.6) {\footnotesize $\hat{\Sigma}_{1}$};
                                          \node  at (5.1,0.2) {\footnotesize $\hat{\Omega}_{1}$};
                                           \node  at (5.1,-0.3) {\footnotesize $\overline{\hat{\Omega}}_{1}$};

                                         \node  at (4.5,-0.8) {\footnotesize $\overline{\hat{\Sigma}}_{1}$};

                                           \node  at (1.9,0.6) {\footnotesize $\hat{\Sigma}_{2}$};

                                             \node  at (1.6,0.2) {\footnotesize $\hat{\Omega}_{2}$};
                                           \node  at (1.6,-0.3) {\footnotesize $\overline{\hat{\Omega}}_{2}$};

                                             \node  at (1.9,-0.6) {\footnotesize $\overline{\hat{\Sigma}}_{2}$};
  \end{tikzpicture}
            \caption{\footnotesize{ The jump contours of  $\Sigma_{R} $ (left) and $\hat{\Sigma}_{R} $ (right).   }}
      \label{sign3er}
        \end{center}
    \end{figure}
Next, we show that the local RH problem $M_{R}(z)$ can be approximated by the solution of
 a model RH problem for Painlev\'{e}-II equation. From (\ref{hatk}), we have
\begin{equation*}
 z = \tau^{-1/3} \hat{k}+1,
\end{equation*}
and let
$
\widetilde{r}(z): =r(z)T^2(z)=r(\tau^{-1/3} \hat{k} +1)T^2(\tau^{-1/3} \hat{k}+1).
$
In particular, we have $\widetilde{r}(1) =r(1)T^{2}(1)$ and derive the following estimates of $\re\left[2i\left(\frac{4}{3}\hat{k}^3+s\hat{k}\right)\right]$.

\begin{lemma}\label{rehatk}
For $t\rightarrow \infty$, we have the following estimates
\begin{align}
&\re\left[2i\left(\frac{4}{3}\hat{k}^3+s\hat{k}\right)\right]\leq-4\hat{u}^2\hat{v},\ \ \hat{k}\in\hat{\Omega}_j,\ \ j=1,2,\label{rehatk1}\\
&\re\left[2i\left(\frac{4}{3}\hat{k}^3+s\hat{k}\right)\right]\geq4\hat{u}^2\hat{v},\ \ \hat{k}\in\overline{\hat{\Omega}}_j,\ \ j=1,2,\label{rehatk2}
\end{align}
where $\hat{k} = \hat{k}_1 +\hat{u}+i\hat{v}$.
\begin{proof}
Let  $z=z_1+u+iv$. For the reason that $\arg(z-z_1)<\pi/4$, we have $v<u$. Then
\begin{align*}
\re\left[2i\left(\frac{4}{3}\hat{k}^3+s\hat{k}\right)\right]&=\re\left(2i\left(3t(z-1)^3+2(x+3t)(z-1)\right)\right)\\
&\leq-12tu^2v+3tv\omega(z_1),
\end{align*}
where
\begin{equation*}
\omega(z_1)=-\left(\left( (-1 + z_1)  \left(1 + z_1 + 5 z_1^3 + z_1^2 (-7 + 12 u)\right) \right)/z_1^2\right),
\end{equation*}
and $\omega'(z_1)\leq0$. We derive $\omega(z_1)\leq\omega(1)=0$. We derive (\ref{rehatk1}). The next (\ref{rehatk2}) can be derived using a similar method.
\end{proof}
\end{lemma}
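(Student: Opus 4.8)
The plan is to reduce the estimate to an elementary analysis of a real cubic polynomial. First I would observe that the cubic model in (\ref{z1}) is in fact \emph{exact} at leading order: unwinding the definitions $\hat{k}=\tau^{1/3}(z-1)$, $s=\frac{8}{9}(\xi+\frac{3}{2})\tau^{2/3}$, $\tau=\frac{9}{4}t$ together with $\xi=x/(2t)$ gives the identity
\begin{equation*}
\tfrac{4}{3}\hat{k}^3+s\hat{k}=3t(z-1)^3+(x+3t)(z-1),
\end{equation*}
so the quantity to be estimated is $\re\big(2i[3t(z-1)^3+(x+3t)(z-1)]\big)$. Writing $z=z_1+u+iv$ (so that $\hat{u}=\tau^{1/3}u$ and $\hat{v}=\tau^{1/3}v$) and using $\re(2i\zeta)=-2\im\zeta$, a direct expansion of the imaginary part of the cubic yields
\begin{equation*}
\re\Big(2i\big(\tfrac{4}{3}\hat{k}^3+s\hat{k}\big)\Big)=-6tv\,g,\qquad g:=3(z_1-1+u)^2-v^2+\tfrac{1}{2}\big(2-z_1^2-z_1^{-2}\big).
\end{equation*}

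Next I would eliminate $x+3t$ using the saddle equation. Since $z_1$ solves $3+3z_1^4+4z_1^2\xi=0$ from (\ref{2theta}), one has $\xi=-\frac{3}{4}(z_1^2+z_1^{-2})$ and hence $x+3t=2t(\xi+\frac{3}{2})=-\frac{3t}{2}(z_1-z_1^{-1})^2$, which is exactly the constant term already carried by $g$. Expanding $3(z_1-1+u)^2$ and recalling $z_1-z_1^{-1}=(z_1-1)(z_1+1)/z_1$, I would rewrite
\begin{equation*}
g=(z_1-1)^2\psi(z_1)+6(z_1-1)u+3u^2-v^2,\qquad \psi(z_1):=3-\frac{(z_1+1)^2}{2z_1^2}.
\end{equation*}

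The estimate then follows from three sign facts, each valid precisely because we are in the transition regime. First, since $z_1\to1$ as $t\to\infty$ one has $\psi(z_1)\to1$, so $\psi(z_1)\ge0$ for $t$ large (monotonicity of $\psi$ shows $\psi\ge1$ for $z_1\ge1$). Second, the opened lens is centered at $z_1$ with opening angle $\varphi<\pi/4$, so $\arg(z-z_1)<\pi/4$ forces $|v|<|u|$ and hence $3u^2-v^2\ge2u^2$. Third, in $\Omega_1$ one has $u>0$ while $z_1-1>0$, so the cross term $6(z_1-1)u\ge0$. Combining, $g\ge2u^2\ge\frac{3}{2}u^2$, and therefore
\begin{equation*}
\re\Big(2i\big(\tfrac{4}{3}\hat{k}^3+s\hat{k}\big)\Big)=-6tv\,g\le-9t u^2 v=-4\hat{u}^2\hat{v},
\end{equation*}
which is (\ref{rehatk1}); the conjugate bound (\ref{rehatk2}) follows at once upon replacing $v$ by $-v$, i.e.\ by the Schwarz symmetry of the phase.

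The step I expect to be delicate is the uniform sign control as the phase points coalesce. For the saddle $z_2=1/z_1<1$ the lens $\Omega_2$ opens toward the upper left, so there $u<0$ while $z_2-1<0$; the cross term $6(z_2-1)u$ is then again nonnegative, but now $\psi(z_2)>0$ holds only once $z_2$ is close enough to $1$ (one checks $\psi(z_2)>0\iff 5z_2^2-2z_2-1>0\iff z_2>(1+\sqrt{6})/5$), a condition guaranteed by $z_1,z_2\to1$ in $\mathcal{D}$ but which would fail away from the transition region. Thus the real content is verifying that a single cubic estimate survives uniformly as the four stationary points merge at $\pm1$, and the only place the geometry of the deformed contour genuinely enters is the bookkeeping of which cross terms $6(z_j-1)u$ are nonnegative for each of $z_1,z_2$ and their conjugates.
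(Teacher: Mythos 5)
Your proposal is correct and takes essentially the same route as the paper: both reduce the model phase to the exact cubic $3t(z-1)^3+(x+3t)(z-1)$, eliminate $x+3t$ through the saddle relation $\xi=-\frac{3}{4}\left(z_1^2+z_1^{-2}\right)$, and finish with an elementary sign analysis using the lens-angle bound $|v|<|u|$. If anything, your bookkeeping is tighter than the paper's: your decomposition $g=(z_1-1)^2\psi(z_1)+6(z_1-1)u+3u^2-v^2$ makes every sign explicit, it avoids the paper's spurious factor $2$ in front of $(x+3t)(z-1)$ and its auxiliary function $\omega(z_1)$ (which secretly depends on $u$), and you treat the second saddle $z_2$ and the conjugate regions explicitly where the paper only says ``by the same estimate.''
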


 \begin{proposition} \label{ppe}  For $j=1,2$, we have the following estimates
\begin{align}\label{e1}
 & \Big| \widetilde{r}\left(z\right)   e^{2it\theta \left(z\right)}- r_0 e^{8i\hat{k}^3/3+2is\hat{k} }  \Big|   \lesssim t^{-1/3+2\varsigma}, \ \hat{k} \in \left(\hat{k}_2,  \hat{k}_1\right),\\
 &  \Big| \widetilde{r}\left(z_j \right)   e^{2it\theta \left(z\right)}-  r_0 e^{8i\hat{k}^3/3+2is\hat{k} }  \Big|   \lesssim t^{-1/3+2\varsigma},    \ \hat{k} \in  \hat{\Sigma}_{R}\setminus [\hat{k}_2,\hat{k}_1],
\label{e2}
\end{align}
where  $r_0= \widetilde{r}(1)$.
\end{proposition}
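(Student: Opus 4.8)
The plan is to bound both quantities by the triangle inequality, separating the error of the phase approximation from the error of freezing the amplitude at $z=1$. Write $\Phi(z)=2it\theta(z)$ and $\Phi_0(\hat k)=2i(\tfrac43\hat k^3+s\hat k)=8i\hat k^3/3+2is\hat k$, and recall $\widetilde r(z)=r(z)T^2(z)$ with $r_0=\widetilde r(1)$. For (\ref{e1}), on the real segment $\hat k\in(\hat k_2,\hat k_1)$ (i.e.\ $z\in(z_2,z_1)$), I would estimate
\begin{align*}
\bigl|\widetilde r(z)e^{\Phi(z)}-r_0e^{\Phi_0(\hat k)}\bigr|\le|\widetilde r(z)|\,\bigl|e^{\Phi(z)}-e^{\Phi_0(\hat k)}\bigr|+\bigl|\widetilde r(z)-\widetilde r(1)\bigr|\,\bigl|e^{\Phi_0(\hat k)}\bigr|,
\end{align*}
while for (\ref{e2}), on the rays $\hat\Sigma_R\setminus[\hat k_2,\hat k_1]$ where the amplitude is already frozen at $\widetilde r(z_j)$, the same splitting produces $|\widetilde r(z_j)|\,|e^{\Phi}-e^{\Phi_0}|+|\widetilde r(z_j)-\widetilde r(1)|\,|e^{\Phi_0}|$. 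In both cases it suffices to control a \emph{phase term} and an \emph{amplitude term}.

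For the phase term the main input is the local expansion (\ref{z1}), $t\theta(z)=\tfrac43\hat k^3+s\hat k+\mathcal O(t^{-1/3}\hat k^2)$, whose remainder is genuinely $\mathcal O(t^{-1/3}\hat k^2)$ because in $\mathcal D$ one has $\xi+\tfrac32=\mathcal O(t^{-2/3})$, which suppresses the quadratic coefficient. Since $|\hat k|<\varepsilon\tau^{\varsigma}$ on $\hat{\mathcal U}_R$, this gives $|\Phi-\Phi_0|=2|t\theta-(\tfrac43\hat k^3+s\hat k)|\lesssim t^{-1/3}\tau^{2\varsigma}\lesssim t^{-1/3+2\varsigma}$. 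To pass to $|e^{\Phi}-e^{\Phi_0}|$ I would use $|e^{a}-e^{b}|\le|a-b|\sup_{s\in[0,1]}e^{\re((1-s)b+sa)}$ together with the sign estimates of Proposition \ref{rezpm} for $\re(2it\theta)$ and of Lemma \ref{rehatk} for $\re(2i(\tfrac43\hat k^3+s\hat k))$: both real parts are $\le0$ on $(z_2,z_1)$ and on $\Sigma_1,\Sigma_2$, so the supremum is $\le1$ and $|e^{\Phi}-e^{\Phi_0}|\le|\Phi-\Phi_0|\lesssim t^{-1/3+2\varsigma}$. As $|\widetilde r|$ is bounded near $z=1$, the phase term is $\lesssim t^{-1/3+2\varsigma}$.

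For the amplitude term I would use that every evaluation point of the amplitude lies within $\mathcal O(t^{-1/3})$ of $z=1$: on $(z_2,z_1)$, $|z-1|\le\max(|z_1-1|,|z_2-1|)\lesssim t^{-1/3}$ by Proposition \ref{opow}, and on the rays the frozen value is taken at $z_j$ with $|z_j-1|\lesssim t^{-1/3}$. Hence, once $\widetilde r=rT^2$ is known to be Lipschitz in a fixed neighbourhood of $z=1$, $|\widetilde r(z)-\widetilde r(1)|\lesssim|z-1|\lesssim t^{-1/3}\le t^{-1/3+2\varsigma}$, and likewise $|\widetilde r(z_j)-\widetilde r(1)|\lesssim t^{-1/3}$. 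Combining with the phase term closes both (\ref{e1}) and (\ref{e2}); note that it is the phase term that produces the exponent $2\varsigma$, the amplitude term being of the smaller order $t^{-1/3}$.

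The hard part is exactly this regularity of $\widetilde r$ at $z=1$. Although $r$ is bounded and $|r(1)|=1$, the factor $T$ built from $v(\zeta)=\log(1-|r(\zeta)|^2)$ degenerates as $z\to1$, since $1-|r|^2\to0$ while the endpoints $z_1,z_2$ of $\Gamma$ pinch toward $1$; neither $T$ nor $T^{-1}$ has a good modulus of continuity there. The point is that the product $\widetilde r=rT^2$ is designed so that these singularities cancel, leaving $\widetilde r$ continuous up to $z=1$ with a controlled (Lipschitz, or at least H\"older of order $>1-6\varsigma$) modulus of continuity; this is precisely the content of the Cuccagna--Jenkins regularization underlying Lemma \ref{prop3}, whose bound $|\bar\partial R_j|\lesssim|z-1|$ near $z=1$ reflects the same cancellation. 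I would invoke that analysis to supply the modulus-of-continuity bound and the boundedness of $\widetilde r$ near $z=1$, after which the triangle-inequality estimates above yield the proposition.
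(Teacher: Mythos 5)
Your proof has the same skeleton as the paper's: the identical triangle-inequality splitting into a phase term and an amplitude term, the identical phase estimate via the expansion (\ref{z1}) (remainder $\mathcal{O}(t^{-1/3}\hat{k}^2)$, $|\hat{k}|\lesssim \tau^{\varsigma}$, hence $\lesssim t^{-1/3+2\varsigma}$) combined with the sign conditions of Proposition \ref{rezpm} and Lemma \ref{rehatk} to keep the exponentials bounded, and the same reduction of (\ref{e2}) to the frozen-amplitude case at the real points $z_j$. Up to that point your write-up is, if anything, more careful than the paper's.

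The genuine gap is in your treatment of the amplitude term. The paper never uses, and does not need, a Lipschitz bound for $\widetilde{r}$: on the segment it writes $|\widetilde{r}(z)-\widetilde{r}(1)|=\bigl|\int_1^z\widetilde{r}'(s)\,\mathrm{d}s\bigr|$ and applies Cauchy--Schwarz on the interval $[z_2,z_1]$, whose length is $\mathcal{O}(t^{-1/3})$ by Proposition \ref{opow}, asserting that $\widetilde{r}$ inherits the $H^1$ character of $r$ there because $T$ is analytic (and unimodular) across $(z_2,z_1)$; this produces a H\"{o}lder-$\tfrac12$ type bound $\|\widetilde{r}'\|_{L^2([z_2,z_1])}|z-1|^{1/2}\lesssim t^{-1/3+\varsigma/2}$, which is within the claimed $t^{-1/3+2\varsigma}$. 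You instead demand a $t$-uniform Lipschitz (or H\"{o}lder of order $>1/2$) modulus of continuity for $\widetilde{r}$ near $z=1$ and source it to ``the Cuccagna--Jenkins regularization underlying Lemma \ref{prop3}.'' That lemma cannot supply this: it bounds $\bar{\partial}R_j$ of the interpolating extensions inside the open sectors $\Omega_j$, and its bound $|\bar{\partial}R_j|\le c|z-1|$ near $z=1$ is a design feature of the extension (the cutoff $\omega$), not a statement about the modulus of continuity of $\widetilde{r}=rT^2$ along the real axis. Worse, the Lipschitz claim itself is dubious: $T$ depends on $(x,t)$ through the endpoints $z_1,z_2$ of $\Gamma$, and in the generic case $|r(1)|=1$ the density $v=\log(1-|r|^2)$ is logarithmically singular at the pinching endpoints, so the logarithmic derivative of $T$ on $(z_2,z_1)$ grows like $t^{1/3}\log t$; since $r$ is $t$-independent, there is no mechanism by which the product $rT^2$ cancels this growth, so the ``cancellation'' you invoke has no support in the paper. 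Concretely, then, the one analytic input your argument leans on is both different from the paper's and unestablished; replacing it with the paper's Cauchy--Schwarz/$H^1$ estimate closes the proof, and also corrects your final bookkeeping, where the amplitude term comes out as $\mathcal{O}(t^{-1/3+\varsigma/2})$ rather than the $\mathcal{O}(t^{-1/3})$ you claimed.
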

\begin{proof}
For $\hat{k} \in \left(\hat{k}_2, \hat{k}_1\right)$, $\hat{k}$ is real and $z\in(z_2,z_1)$, we have
\begin{equation}
\left| \widetilde{r} (z)    e^{2i t   \theta \left(z\right) } - r_0  e^{  8i\hat{k}^3/3+ 2is \hat{k}  }    \right| \leq
\left|  \widetilde{r} (z) - \widetilde{r}(1) \right|+\left|\widetilde{r}(1)\right|\left|e^{2i t   \theta \left(z\right)}-e^{  8i\hat{k}^3/3+ 2is \hat{k}  }\right|.\label{poe1}
 \end{equation}
 For the reason that $T(z)$ is analytic in $\partial\mathcal{U}_{R}$, we have
 \begin{equation*}
 \parallel\widetilde{r}(z)\parallel_{H^1([z_2,z_1])}=\parallel r(z)\parallel_{H^1([z_2,z_1])},
 \end{equation*}
and using the H\"{o}lder's  inequality, we derive the following estimate
\begin{align*}
\left|  \widetilde{r} (z) - \widetilde{r}(1) \right|=\left|\int_{1}^z\widetilde{r}'(s)\mathrm{d}s\right|&\leq\parallel\widetilde{r}'\parallel_{L^2([z_2,z_1])}\left|z-1\right|^{1/2}t^{-1/6}\\
&\leq\parallel r\parallel_{H^1([z_2,z_1])}\left|\hat{k}\right|^{1/2}t^{-1/3+\varsigma/2}\\
&\lesssim t^{-1/3+\varsigma/2}.
\end{align*}
Moreover,
\begin{align*}
\left|e^{2i t   \theta \left(z\right)}-e^{  8i\hat{k}^3/3+ 2is \hat{k}  }\right|=\left|e^{i\mathcal{O}(t^{-1/3}\hat{k}^2)}-1\right|\leq e^{|\mathcal{O}(t^{-1/3}\hat{k}^2)|}-1\lesssim t^{-1/3+2\varsigma}.
\end{align*}
From Lemma \ref{rehatk}, it follows that
$ \left| e^{2i \left( \frac{4}{3} \hat{k}^3 +  s \hat{k} \right)}\right|$ is bounded.
As in the above proof, we can obtain the estimate (\ref{e2}). The estimate for the other jump line can be given in a similar way.
\end{proof}

Now, we find a new RH problem associated with $\hat{M}_{R}(\hat{k})$ in the disk $\hat{\mathcal{U}}_R$:
\begin{prob}
    Find  $ \hat{M}_{R}(\hat{k})=\hat{M}_{R}(\hat{k},x,t)$ with properties
	  \begin{itemize}
	  	\item  $\hat{M}_{R}(\hat{k})$ is analytical in $\hat{\mathcal{U}}_{ R} \backslash  \hat{\Sigma}_{R}$, where   $\hat{\Sigma}_{R}=\hat{\Sigma}_j\cap\hat{\mathcal{U}}_R$, see Figure \ref{sign3er}.
	  	\item  $\hat{M}_{R}(\hat{k})$  satisfies the jump condition
\begin{equation*}
	  		\hat{M}_{R+}(\hat{k})=\hat{M}_{R-}(\hat{k}) \hat{V}_{R}(\hat{k}), \ z\in  \hat{\Sigma}_{R},
\end{equation*}
\end{itemize}
where
\begin{align}\label{vp}
        \hat{V}_{R}(\hat{k})= \begin{cases}
       \left( \begin{array}{cc}
       		1& 0\\
       		 \tilde{r}(1) e^{2i\left(\frac{4}{3}\hat{k}^3+s\hat{k}\right) }     & 1
       	\end{array}\right):=B_+,\  \hat{k}\in \hat{\Sigma}_{j}, \ j=1,2; \\
    \left(	\begin{array}{cc}
       			1& -\overline{\tilde{r}(1)}e^{-2i\left(\frac{4}{3}\hat{k}^3+s\hat{k}\right) }\\
       			0 & 1
       		\end{array}\right):=B_-^{-1},\ \hat{k} \in \overline{\hat{\Sigma}}_{j}, \ j=1,2;\\
 B_-^{-1}B_+,  \   \hat{k}\in (\hat{k}_2,  \hat{k}_1).
       	\end{cases}
       \end{align}

	  \begin{itemize}
         \item   $\hat{M}_R(\hat{k}) \to I$, as $\hat{k}\to +\infty.$
	  \end{itemize}
\end{prob}
Define $\Lambda(\hat{k}):=M_R(\hat{k})\hat{M}_R^{-1}(\hat{k})$. The estimate in Proposition \ref{ppe} implies that $\Lambda(\hat{k})$ has the following asymptotic behavior.
\begin{lemma}
In the disk $\hat{\mathcal{U}}_R$, $\Lambda(\hat{k})$ admits the asymptotic behavior for a large $t$
\begin{align}
\Lambda(\hat{k})=I+\mathcal{O}(t^{-1/3+2\varsigma}), \ \ t\to+\infty.\label{lambda}
\end{align}
\end{lemma}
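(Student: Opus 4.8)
The plan is to realize $\Lambda(\hat k)=M_R(\hat k)\hat M_R^{-1}(\hat k)$ as the solution of a small-norm Riemann--Hilbert problem on the rescaled contour $\hat\Sigma_R$, whose jump is uniformly close to the identity by Proposition \ref{ppe}. First I would record that under the scaling $\hat k=\tau^{1/3}(z-1)$ the contour $\Sigma_R$ is carried onto $\hat\Sigma_R$, so that both $M_R$ and $\hat M_R$ may be regarded as functions of $\hat k$ analytic in $\hat{\mathcal U}_R\setminus\hat\Sigma_R$. Since neither factor has poles inside the disk, $\Lambda$ is likewise analytic off $\hat\Sigma_R$, and it inherits the normalization $\Lambda\to I$ at the boundary $\partial\hat{\mathcal U}_R$ from the matching conditions of $M_R$ and $\hat M_R$.

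Next I would compute the jump of $\Lambda$. From $M_{R+}=M_{R-}V_R$ and $\hat M_{R+}=\hat M_{R-}\hat V_R$ one gets
\begin{equation*}
\Lambda_+ = \Lambda_-\,\hat M_{R-}\bigl(V_R\hat V_R^{-1}\bigr)\hat M_{R-}^{-1},
\end{equation*}
so $\Lambda$ solves a RH problem on $\hat\Sigma_R$ with jump $V_\Lambda:=\hat M_{R-}(V_R\hat V_R^{-1})\hat M_{R-}^{-1}$. Writing $V_R\hat V_R^{-1}-I=(V_R-\hat V_R)\hat V_R^{-1}$, I would invoke Proposition \ref{ppe}, which bounds the relevant entries of $V_R-\hat V_R$ by $t^{-1/3+2\varsigma}$, together with the boundedness of $\hat V_R^{-1}$ (its off-diagonal entry carries the factor $e^{\pm 2i(\frac43\hat k^3+s\hat k)}$, controlled on $\hat\Sigma_R$ by Lemma \ref{rehatk}). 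This yields $\|V_R\hat V_R^{-1}-I\|_{L^\infty\cap L^2(\hat\Sigma_R)}\lesssim t^{-1/3+2\varsigma}$, and after conjugation by the bounded factors $\hat M_{R-}^{\pm1}$, $\|V_\Lambda-I\|\lesssim t^{-1/3+2\varsigma}$.

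The conclusion then follows from the standard small-norm theory. The Beals--Coifman representation gives
\begin{equation*}
\Lambda(\hat k)=I+\frac{1}{2\pi i}\int_{\hat\Sigma_R}\frac{\bigl(\Lambda_-(s)\bigr)\bigl(V_\Lambda(s)-I\bigr)}{s-\hat k}\,\mathrm d s,
\end{equation*}
and the resolvent bound $\|(1-C_{V_\Lambda})^{-1}\|\lesssim 1$, valid once $V_\Lambda-I$ is small, converts the $L^2$ estimate into $\Lambda=I+\mathcal O(t^{-1/3+2\varsigma})$, which is precisely \eqref{lambda}.

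The main obstacle is the uniform-in-$t$ boundedness of the Painlev\'{e}-II model solution $\hat M_R$ and its inverse on $\hat\Sigma_R$: without it the conjugation defining $V_\Lambda$ could inflate the order of the estimate. I would secure this from the solvability of the Painlev\'{e}-II model RH problem established in the appendix, using that the phase points $\hat k_j$ remain confined to a fixed compact set by Proposition \ref{opow} even though the disk radius $\varepsilon\tau^{\varsigma}$ grows; one must also verify that this growth of the contour length like $\tau^{\varsigma}$ is absorbed by the decay of the off-diagonal exponentials away from the stationary points, so that the $L^2(\hat\Sigma_R)$ norm of $V_\Lambda-I$ stays of the claimed order $t^{-1/3+2\varsigma}$.
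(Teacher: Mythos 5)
Your proposal is correct and follows essentially the same route the paper intends: the paper leaves the proof implicit, asserting that Proposition \ref{ppe} yields the estimate via the RH problem for $\Lambda(\hat{k})$ with jump $V_{\Lambda}=\hat{M}_{R,-}V_{R}\hat{V}_{R}^{-1}\hat{M}_{R,-}^{-1}$ (equation (\ref{vlambda})) and small-norm theory. You simply make explicit the steps the paper omits — the conjugation by the bounded Painlev\'{e}-II model solution (cf.\ (\ref{mPbounded})), the Beals--Coifman representation, and the control of the growing contour length by the exponential decay from Lemma \ref{rehatk} — so no further changes are needed.
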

Moreover, $\Lambda(\hat{k})$ satisfies a RH problem:
\begin{prob}
Find $\Lambda(\hat{k}):=\Lambda(\hat{k},x,t)$ with properties:
  \begin{itemize}
	  	\item $\Lambda(\hat{k})$ is analytical in $\mathbb{C}\backslash\hat{\Sigma}_R$.
\item $\Lambda(\hat{k})$ satisfies the following jump condition
\begin{align*}
\Lambda_+(\hat{k})=\Lambda_-(\hat{k})V_{\Lambda}(\hat{k}),
\end{align*}
where
\begin{align}
V_{\Lambda}(\hat{k})=\hat{M}_{R,-}(\hat{k})V_{R}(\hat{k})\hat{V}^{-1}_{R}(\hat{k})\hat{M}^{-1}_{R,-}(\hat{k}).\label{vlambda}
\end{align}
\end{itemize}
\end{prob}
We derive the following  corollary \ref{c37} from (\ref{lambda}) and (\ref{vlambda}).
\begin{corollary}\label{c37}
In the transition region $\mathcal{D}$ (\ref{D}), as $t\to\infty$, we have
\begin{align}
&V_R \left(z\right) =  \hat{V}_R(\hat{k})+\mathcal{O}(t^{-1/3+2\varsigma}), \ \ \hat{k}\in \hat{\mathcal{U}}_{R},\nonumber\\
&M_R \left(z\right) =  \hat{M}_R(\hat{k})+\mathcal{O}(t^{-1/3+2\varsigma}), \ \ \hat{k}\in \hat{\mathcal{U}}_{R},\nonumber
\end{align}
\end{corollary}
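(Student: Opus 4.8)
The plan is to read off both estimates directly from the definition $\Lambda(\hat k)=M_R(\hat k)\hat M_R^{-1}(\hat k)$, together with the small-norm bound (\ref{lambda}) and the explicit conjugation formula (\ref{vlambda}), so that no new problem needs to be solved. In the transition region $\mathcal{D}$ we have $\xi\to(-3/2)^-$, and by Proposition \ref{opow} the scaled saddle points $\hat k_1,\hat k_2$ remain inside a fixed ball even though the disk $\hat{\mathcal U}_R=\{|\hat k|<\varepsilon\tau^{\varsigma}\}$ grows; this is the setting in which all the error terms are to be understood uniformly.

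First I would treat the jump estimate. Since $\Lambda$ solves the RH problem with boundary values $\Lambda_\pm=I+\mathcal O(t^{-1/3+2\varsigma})$ by (\ref{lambda}), the jump relation $\Lambda_+=\Lambda_-V_\Lambda$ forces $V_\Lambda=\Lambda_-^{-1}\Lambda_+=I+\mathcal O(t^{-1/3+2\varsigma})$. Substituting this into (\ref{vlambda}) and conjugating back gives $V_R(z)\hat V_R^{-1}(\hat k)=\hat M_{R,-}^{-1}(\hat k)\,V_\Lambda(\hat k)\,\hat M_{R,-}(\hat k)$. Using that $\hat M_{R,-}$ and its inverse are uniformly bounded on $\hat\Sigma_R$ (they are the boundary values of the solution of a fixed, $t$-independent Painlev\'e-II model), and that $\hat V_R$ is bounded on its contour thanks to the decay estimates of Lemma \ref{rehatk}, I obtain $V_R(z)=\hat V_R(\hat k)+\mathcal O(t^{-1/3+2\varsigma})$. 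This is consistent with, and may alternatively be quoted directly from, Proposition \ref{ppe}: comparing the $(2,1)$ entries of $V_R$ and $\hat V_R$ under the scaling map uses precisely $t\theta(z)=\tfrac43\hat k^3+s\hat k+\mathcal O(t^{-1/3}\hat k^2)$ from (\ref{z1}) and $\widetilde r(z_j)\to\widetilde r(1)=r_0$.

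Second, for the matrix estimate I would simply rearrange the definition of $\Lambda$, writing $M_R(z)=\Lambda(\hat k)\hat M_R(\hat k)$. By (\ref{lambda}) we have $\Lambda(\hat k)=I+\mathcal O(t^{-1/3+2\varsigma})$ uniformly for $\hat k\in\hat{\mathcal U}_R$, hence $M_R(z)=\hat M_R(\hat k)+\mathcal O(t^{-1/3+2\varsigma})\hat M_R(\hat k)$; since $\hat M_R$ is uniformly bounded in the disk, the remainder is $\mathcal O(t^{-1/3+2\varsigma})$ and the claimed expansion follows.

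The only genuinely delicate point is the uniform boundedness of $\hat M_{R,-}$ and $\hat M_R$ across the expanding disk $\hat{\mathcal U}_R$, in particular near the coalescing saddle points $\hat k_1,\hat k_2$. This is what makes both the conjugation in (\ref{vlambda}) and the passage from $\Lambda\approx I$ to $M_R\approx\hat M_R$ error-preserving, and it is exactly where the constraint $0<\varsigma<1/6$ together with Proposition \ref{opow} enters: it keeps the saddle points confined to a compact region of the $\hat k$-plane so that the Painlev\'e-II model solution stays bounded independently of $t$, while the growth $\varepsilon\tau^{\varsigma}$ of the disk is slow enough to preserve the $\mathcal O(t^{-1/3+2\varsigma})$ order established in (\ref{lambda}).
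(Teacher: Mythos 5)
Your proposal is correct and follows essentially the same route as the paper: the paper likewise obtains the corollary directly from the lemma $\Lambda(\hat{k})=I+\mathcal{O}(t^{-1/3+2\varsigma})$ in (\ref{lambda}) together with the conjugation formula (\ref{vlambda}), writing $M_R=\Lambda\hat{M}_R$ and $V_R\hat{V}_R^{-1}=\hat{M}_{R,-}^{-1}V_\Lambda\hat{M}_{R,-}$, with Proposition \ref{ppe} as the ultimate source of the error order and the uniform boundedness of the Painlev\'e-II model solution (guaranteed by (\ref{mPbounded}) since $s$ stays bounded below in $\mathcal{D}$) making the conjugations error-preserving. Your closing remark that the jump estimate can alternatively be read off directly from Proposition \ref{ppe} is also the logically cleaner order, since (\ref{lambda}) is itself derived from that proposition.
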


In order to establish a connection  with Painlev\'{e}-II RH model,
we add four auxiliary contours $\mathcal{L}_j,~j=1,2,3,4$ which pass through the point $\hat{k}=0$. And the angle of $\mathcal{L}_j$ with the real axis is $\pi/3$. $\mathcal{L}_j$ together with the original contours $\hat{\Sigma}_j$ divide the complex plane $\mathbb{C}$ into eight regions $\Omega^\flat_j,~j=1,\cdots,8$, see Figure \ref{conlj}.

\begin{figure}[H]
\begin{center}
\begin{tikzpicture}\label{Fig3}
	\node    at (- 1.2, 0.3)  {\scriptsize $\varphi$};
	\draw [dotted ](-6.5,0)--(-0.5,0);
\draw [](-5,0)--(-2,0);
 \draw [  ](-5,0)--(-6.5,1.5);
 \draw [   -latex](-5,0)--(-4,0);
\draw [  -latex](-4,0)--(-2.7,0);
  \draw [ -latex ] (-6.5,1.5)--(-5.75,1.5/2);
 \draw [  ](-5,0)--(-6.5,-1.5);
   \draw [ -latex ] (-6.5,-1.5)--(-5.75,-1.5/2);
  \draw [ ](-2,0)--(-0.5,1.5 );
     \draw [ -latex ] (-2,0)--(-1.25, 1.5/2);
 \draw[ ](-2,0)--(-0.5,-1.5);
      \draw [ -latex ] (-2,0)--(-1.25,- 1.5/2);

 \draw[dotted ](-4.5,-1)--(-2.5, 1 );
 \draw[dotted,-latex ](-4.5,-1)--(-4, -0.5 );
 \draw[dotted,-latex ](-3.5,0)--(-3,  0.5 );
  \draw[dotted ](-4.5, 1)--(-2.5,-1 );
   \draw[dotted,-latex ](-4.5, 1)--(-4,  0.5 );
   \draw[dotted,
   -latex ](-3.5,0)--(-3, -0.5 );

  \node[shape=circle,fill=black, scale=0.15]  at (-2.0,0){0} ;
  \node[shape=circle,fill=black, scale=0.15]  at (-5,0){0} ;
 \node    at (-2.0, -0.2)  {\scriptsize $\hat k_1$};
  \node    at (-4.8, -0.2)  {\scriptsize $\hat k_2$};
  \node    at (-1.4, -0.2)  {\scriptsize $\Omega^\flat_1$};
    \node    at (-2.3, 0.5)  {\scriptsize $\Omega^\flat_2$};
 \node    at (-3.5,0.6)  {\scriptsize $\Omega^\flat_3$};
 \node    at (-4.7, 0.5)  {\scriptsize $\Omega^\flat_4$};
 \node    at (-5.6, -0.2)  {\scriptsize $\Omega^\flat_5$};
  \node    at (-4.7,-0.5)  {\scriptsize $\Omega^\flat_6$};
  \node    at (-3.5, -0.6)  {\scriptsize $\Omega^\flat_7$};
    \node    at (-2.3, -0.5)  {\scriptsize $\Omega^\flat_8$};
     \node  [below]  at (-1,1.8) {\scriptsize $\hat{\Sigma}_1$};
    \node  [below]  at (-2.3,1.5) {\scriptsize $\mathcal{L}_1$};
        \node  [below]  at (-4.6,1.5) {\scriptsize $\mathcal{L}_2$};
           \node  [below]  at (-6,1.8) {\scriptsize $\hat\Sigma_2$};
        \node  [below]  at (-6,-1.4) {\scriptsize $\overline{\hat\Sigma}_2$};
 \node  [below]  at (-4.8,-0.9) {\scriptsize $\mathcal{L}_3$};
  \node  [below]  at (-2.2,-0.9) {\scriptsize $\mathcal{L}_4$};
      \node  [below]  at (-1,-1.4) {\scriptsize $\overline{\hat\Sigma}_1$};

	\draw [dotted ](1.5,0)--(4.5,0);
\draw[    ](1.5,-1.5)--(4.5,1.5);
\draw[-latex ](1.5,-1.5)--(2.25,-0.75);
\draw[-latex ](3, 0)--(3.75, 0.75);
\draw[  ](1.5,1.5)--(4.5,-1.5);
\draw[-latex ](1.5, 1.5)--(2.25, 0.75);
\draw[ -latex ](3, 0)--(3.75,-0.75);
\node  [below]  at (1.2,1.8) {\scriptsize $\mathcal{L}_2$};
\node  [below]  at (4.9,1.8) {\scriptsize $\mathcal{L}_1$};
\node  [below]  at (4.9,-1.1) {\scriptsize $\mathcal{L}_4$};
\node  [below]  at (1.2,-1.1) {\scriptsize $\mathcal{L}_3$};
\node    at (3,-0.3)  {$0$};
 \draw [red, thick,-latex  ](-0.3,0)--(1,0);

		\node    at (3.8, 0.3)  {\scriptsize $\pi/4$};
\end{tikzpicture}
\end{center}
\caption {\footnotesize   In the left figure,  we add four  auxiliary contours on  the jump contour  for  $\hat{M}_{R}(\hat{k})$, which  can be transformed
into the  Painlev\'e-II model $  M_{P}(\hat{k})$  with  the jump contour  of   four rays  (the right one).}
\label{conlj}
\end{figure}
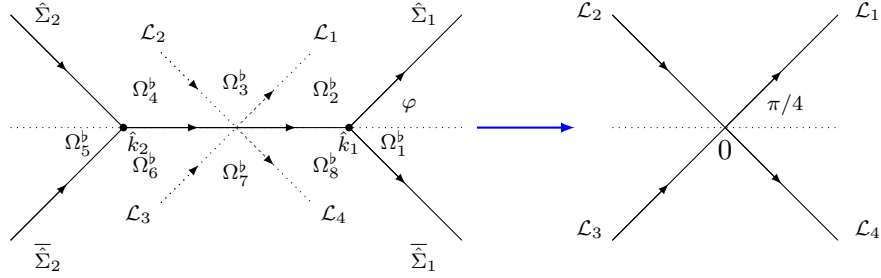

We define a  matrix-valued function
\begin{equation*}
P(\hat{k})=\begin{cases}
B_+^{-1},\quad \hat{k}\in\Omega^\flat_2\cup\Omega^\flat_4,\\
B_-^{-1},\quad \hat{k}\in\Omega^\flat_6\cup\Omega^\flat_8,\\
I,\quad else,
\end{cases}
\end{equation*}
and $\hat{M}_R(\hat{k})$ performs the following transformation
\begin{equation*}
\mathcal{M}_R(\hat{k})=\hat{M}_R(\hat{k})P(\hat{k}),
\end{equation*}
where $\mathcal{M}_R(\hat{k})$ admits the following RH problem.
\begin{prob}
    Find  $ \mathcal{M}_R(\hat{k})=\mathcal{M}_R(\hat{k},s)$ with properties
	  \begin{itemize}
	  	\item  $\mathcal{M}_R(\hat{k})$ is analytical in $\mathbb{C}\backslash\Sigma_P$, where   $\Sigma_P=\bigcup\limits_{j=1}^{4}\mathcal{L}_j$.
     	  	\item  $\mathcal{M}_R(\hat{k})$  satisfies the jump condition
\begin{equation*}
	  		\mathcal{M}_{R,+}(\hat{k})=\mathcal{M}_{R,-}(\hat{k})\mathcal{ V}_{R}(\hat{k}), \ \hat{k}\in  \Sigma_{P},
\end{equation*}
where
\begin{equation*}
\mathcal{V}_P(\hat{k})=\begin{cases}
B_+,\quad \hat{k}\in \mathcal{L}_j,~j=1,2,\\
B_-^{-1},\quad \hat{k}\in \mathcal{L}_j,~j=3,4.
\end{cases}
\end{equation*}
\item $\mathcal{M}_R(\hat{k})\to I,\ \ \hat{k}\to\infty.$
\end{itemize}
\end{prob}

Let  $\tilde{r}(1)=|\tilde{r}(1)|e^{i\varphi_0}$, where $\varphi_0=\arg{\tilde{r}(1)}$. We note that the solution $\mathcal{M}_R(\hat{k})$ can be expressed by the Painlev\'{e}-II model
\begin{equation}
\mathcal{M}_R(\hat{k})=\sigma_1e^{-\frac{i(\varphi_0+\pi/2)}{2}\widehat{\sigma}_3}M_P(\hat{k})\sigma_1,\label{379}
\end{equation}
where $M_P(\hat{k})$ satisfies a standard Painlev\'{e}-II model given by Appendix \ref{appx} as $p=|\tilde{r}(1)|$. Substituting (\ref{379}) into Corollary \ref{c37}, we obtain
\begin{equation}
M_{R,1}(s)=\mathcal{M}_{R,1}(s)+\mathcal{O}(t^{-1/3+2\varsigma})=\sigma_1e^{-\frac{i(\varphi_0+\pi/2)}{2}\widehat{\sigma}_3}M_{P,1}(s)\sigma_1+\mathcal{O}(t^{-1/3+2\varsigma}),\label{mr}
\end{equation}
where the subscript ``1"  of $M_{R,1}(s)$ denotes the coefficient of the Taylor expansion of the term $\hat{k}^{-1}$, and $M_{P,1}(s)$ is given by Appendix \ref{appx}. So the solution  of $M_R(z)$ as $t\to\infty$ is given by
\begin{equation*}
M_R(z)=I+\frac{M_{R,1}(s)}{\tau^{1/3}(z-1)}+\mathcal{O}(t^{-2/3+2\varsigma}).
\end{equation*}
For $z=-1$,  $M_L(z)$ is similarly given by
\begin{equation}
M_L(z)=I+\frac{M_{L,1}(s)}{\tau^{1/3}(z+1)}+\mathcal{O}(t^{-2/3+2\varsigma}),\ \  t\to\infty,\label{ml}
\end{equation}
where
\begin{equation*}
M_{L,1}(s)=\sigma_1e^{-\frac{i(\varphi_0+\pi/2)}{2}\widehat{\sigma}_3}M_{P,1}(s)\sigma_1+\mathcal{O}(t^{-1/3+2\varsigma}),
\end{equation*}
with $\varphi_0=\arg \tilde{r}(-1)$.


\subsubsection{A small norm RH problem for error function}\label{snrh}
\hspace*{\parindent}
In this subsection, we consider the error function  $ M^{err}(z)$ defined by  (\ref{trans6}) and
  $ M^{err}(z)$ admits the following  RH problem.

\begin{prob}\label{iew}
  Find      $ M^{err}(z)$ with the properties
          \begin{itemize}
          	\item   $ M^{err}(z)$ is analytical in $\mathbb{C}\backslash  \Sigma^{err}$,
          where $\Sigma^{err} = \{\partial \mathcal{U}_{R}\cup\partial \mathcal{U}_{L} \}\cup \{\Sigma^{(4)} \backslash \{\mathcal{U}_{R}\cup\mathcal{U}_{L}\}\}$, see Figure \ref{jumpe}.
          	\item  $ M^{err}(z)$  satisfies the jump condition
          	\begin{align*}
          		M^{err}_{+}(z)=M^{err}_-(z)V^{err}(z), \quad z\in \Sigma^{err},
          	\end{align*}
          	where $V^{err}(z)$ is given by
   \begin{align}
                V^{err}(z)= \begin{cases}
                   V^{(4)}(z),\quad z \in \Sigma^{(4)} \backslash  \{\mathcal{U}_{R}\cup\mathcal{U}_{L}\},\\
                  M_{R}^{-1}(z) , \quad z \in \partial \mathcal{U}_{R}, \\
                  M_{L}^{-1}(z) , \quad z \in \partial \mathcal{U}_{L}. \label{ioei}
                                  \end{cases}
            \end{align}

          	\item       $M^{err}(z)=I+\mathcal{O}(z^{-1}),	\quad  z\to  \infty.$	

          \end{itemize}
\end{prob}

 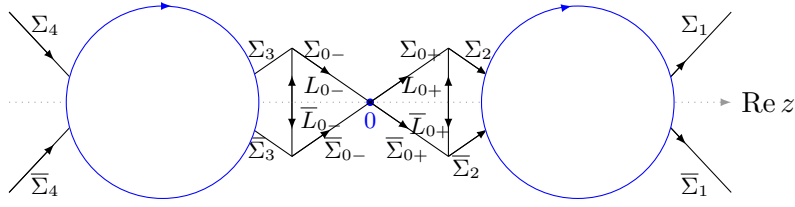
\begin{figure}
        \begin{center}
  \begin{tikzpicture}[scale=0.8]


\draw[dotted,black!40,-latex](-6,0)--(6,0)node[black,right]{$\re z$};
                 \draw [ ] (-4.6, 0)--(-2.6,0);
                                    \draw [ ] (4.6, 0)--(2.6,0);
                                 \draw [-latex ] (-4.6, 0)--(-3.5,0);
                                    \draw [-latex ] (2.6, 0)--(4,0);

     \node[shape=circle,fill=blue, scale=0.15]  at (-3.45,0){0} ;
                      \node[shape=circle,fill=blue, scale=0.15]  at (3.45,0){0} ;
            \node[shape=circle,fill=black,scale=0.15] at (-2.6,0) {0};
              \node[shape=circle,fill=black,scale=0.15] at (2.6,0) {0};
             \node[shape=circle,fill=black,scale=0.15] at (-4.6,0) {0};
             \node[shape=circle,fill=black,scale=0.15] at (4.6,0) {0};
                \node[shape=circle,fill=blue,scale=0.15] at (0.0,0) {0};
                \coordinate (C) at (-3.45,  0);
                \coordinate (D) at (3.45,  0);

                \node[below,blue] at (0.0,0) {\footnotesize $0$};
                 \node[below,blue] at (-3.45,0) {\footnotesize $-1$};
                 \node[below,blue] at (3.45,0) {\footnotesize $1$};

                  \node[below] at (-4.3,0) {\footnotesize $z_4$};
                 \node[below] at (4.4,0) {\footnotesize $z_1$};
                   \node[below] at (-2.6,0) {\footnotesize $z_3$};
                 \node[below] at (2.6,0) {\footnotesize $z_2$};


                     \node[below] at (-0.75,1.2) {\footnotesize $\Sigma_{0-}$};
                 \node[below] at (0.85,1.2) {\footnotesize $\Sigma_{0+}$};
                 \node[below] at (-0.4,-0.4) {\footnotesize $\overline{\Sigma}_{0-}$};
                 \node[below] at (0.62,-0.4) {\footnotesize $\overline{\Sigma}_{0+}$};

                 \node[below] at (-0.75,0.6) {\footnotesize $L_{0-}$};
                 \node[below] at (0.87,0.6) {\footnotesize $L_{0+}$};
                 \node[below] at (-0.8,0.1) {\footnotesize $\overline{L}_{0-}$};
                 \node[below] at (1,-0.) {\footnotesize $\overline{L}_{0+}$};


                   \node[below] at (-1.8,1.2) {\footnotesize $\Sigma_{3}$};
                 \node[below] at (1.8,1.2) {\footnotesize $\Sigma_{2}$};
                 \node[below] at (-1.8,-0.4) {\footnotesize $\overline{\Sigma}_{3}$};
                 \node[below] at (1.6,-0.7) {\footnotesize $\overline{\Sigma}_{2}$};


                    \node[below] at (-5.4,1.6) {\footnotesize $\Sigma_{4}$};
                 \node[below] at (5.4,1.6) {\footnotesize $\Sigma_{1}$};
                 \node[below] at (-5.4,-1.0) {\footnotesize $\overline{\Sigma}_{4}$};
                 \node[below] at (5.4,-1.0) {\footnotesize $\overline{\Sigma}_{1}$};

                                 \draw [ ] (0.0, 0)--(-1.3,0.9);
                                    \draw [ ] (0.0, 0)--(1.3,0.9);
                                 \draw [ ] (0.0, 0)--(-1.3,-0.9);
                                  \draw [ ] (0.0, 0)--(1.3,-0.9);
                                 \draw[](-2.6,0)--(-1.3,0.9);
                                    \draw[](2.6,0)--(1.3,0.9);
                                  \draw[](-2.6,0)--(-1.3,-0.9);
                                  \draw[](2.6,0)--(1.3,-0.9);
                               \draw [ ] (4.6,0 )--(6,-1.5);
                                 \draw [ ] (4.6,0 )--(6,1.5);
                                 \draw[-latex](4.6,0 )--(5.25,-0.7);
                                       \draw[-latex](4.6,0 )--(5.25,0.7);
                                 \draw [] (-4.6,0)--(-6,-1.5);
                                 \draw [ ] (-4.6,0 )--(-6,1.5);
                                 \draw[-latex](-6.0,-1.5 )--(-5.25,-0.7);
                                       \draw[-latex](-6.0,1.5 )--(-5.25,0.7);

                                       \draw[](-1.3,0)--(-1.3,0.9);
                                             \draw[](1.3,0)--(1.3,0.9);
                                              \draw[-latex](-1.3,0)--(-1.3,0.45);
                                             \draw[-latex](1.3,0)--(1.3,0.45);

                                    \draw[](-1.3,0)--(-1.3,-0.9);
                                             \draw[](1.3,0)--(1.3,-0.9);
                                              \draw[-latex](-1.3,0)--(-1.3,-0.45);
                                             \draw[-latex](1.3,0)--(1.3,-0.45);

                            \draw [-latex] (-2.6,0)--(-1.95,0.45);
                              \draw [-latex] (-1.3,0.9)--(-0.65,0.45);
                                  \draw [-latex] (0,0)--(0.65,0.45);
                              \draw [-latex] (1.3,0.9)--(1.95,0.45);
                             \draw [-latex] (-2.6,0)--(-1.95,-0.45);
                             \draw [-latex] (-1.3,-0.9)--(-0.65,-0.45);
                                \draw [-latex] (0,0)--(0.65,-0.45);
                              \draw [-latex] (1.3,-0.9)--(1.95,-0.45);

\filldraw[white](-5.05,0)--(-1.85,0) arc (0:360:1.6);
\filldraw[white](1.85,0)--(5.05,0) arc (0:-360:1.6);
\draw[blue] (C) circle [radius=1.6];
\draw[blue] (D) circle [radius=1.6];
\draw[blue,-latex](3.3,1.59)--(3.35,1.6);
\draw[blue,-latex](-3.35,1.6)--(-3.3,1.6);
  \end{tikzpicture}
            \caption{ \footnotesize{The jump contour $\Sigma^{err}$. }}
      \label{jumpe}
        \end{center}
    \end{figure}

\begin{proposition}\label{ve-i}
The  estimate of $V^{err}(z)-I$ results from the following
            \begin{equation*}
                \|V^{err}(z)-I\|_{L^p(\Sigma^{err})}=\begin{cases}
                    \mathcal{O}(e^{-c t}),\quad z\in \Sigma^{err} \backslash\{\mathcal{U}_{R}\cup\mathcal{U}_{L}\}, \\
                    \mathcal{O}(t^{-1/(3p)+(1-p)\varsigma/p}), \quad z\in \{\partial \mathcal{U}_{R}\cup\partial \mathcal{U}_{L}\}.
                \end{cases}
            \end{equation*}
            where $c:=c(\xi,\tilde{\varphi})>0$ is a constant and $2\leq p\leq\infty$.
\end{proposition}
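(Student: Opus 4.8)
The plan is to read $V^{err}-I$ off its definition (\ref{ioei}) and to bound its $L^p$ norm separately on the two constituent pieces of $\Sigma^{err}$: the outer contour $\Sigma^{(4)}\setminus\{\mathcal{U}_R\cup\mathcal{U}_L\}$, where $V^{err}=V^{(4)}$ and decay is a matter of the signature table, and the two circles $\partial\mathcal{U}_R\cup\partial\mathcal{U}_L$, where $V^{err}=M_R^{-1}$ (resp.\ $M_L^{-1}$) and decay is controlled by the local Painlev\'e--II parametrix. On the outer contour $V^{(4)}-I$ is nilpotent with a single nonzero entry of the form $\pm\,r(z_j)T^{\pm2}(z_j)\,e^{\pm2it\theta(z)}$ (and the analogous constant entries on $L_{0\pm},\overline{L}_{0\pm}$). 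Its modulus is a coefficient times an oscillatory factor; the coefficient is uniformly bounded, since the coalescence $z_j\to\pm1$ forces $\widetilde{r}(z_j)\to\widetilde{r}(\pm1)$ while $|T(\infty)|=1$ and $T$ stays bounded along these rays. Everything therefore reduces to estimating $|e^{\pm2it\theta(z)}|=e^{t\,\re(\pm2i\theta(z))}$.

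Here I would invoke Proposition \ref{rezpm} directly. Because the stationary points $z_1,\dots,z_4$ all lie inside $\mathcal{U}_R\cup\mathcal{U}_L$, the quadratic/linear lower bounds of Proposition \ref{rezpm} yield $\re(2i\theta)\le -c<0$ on the lower-triangular rays lying in $\mathbb{C}^+$ and $\re(2i\theta)\ge c>0$ on their conjugates in $\mathbb{C}^-$ (where the exponential is $e^{-2it\theta}$). This gives the pointwise bound $\|V^{err}-I\|_{L^\infty}\lesssim e^{-ct}$; on the finite segments the $L^p$ bound is then immediate, and on the unbounded rays $\Sigma_1,\Sigma_4$ and their conjugates the bound $\re(2i\theta)\lesssim -|\im z|$ makes $e^{t\re(2i\theta)}$ integrable along the ray, the resulting prefactor being absorbed into $c$. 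This establishes the first line of the claimed estimate.

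For the circles I would use the reduction already in place: on $\partial\mathcal{U}_R$ the expansion (\ref{mr}) gives $M_R(z)=I+M_{R,1}(s)/(\tau^{1/3}(z-1))+\mathcal{O}(t^{-2/3+2\varsigma})$, and likewise (\ref{ml}) on $\partial\mathcal{U}_L$. Since $s=\tfrac{8}{9}(\xi+\tfrac32)\tau^{2/3}$ stays bounded on $\mathcal{D}$ by (\ref{D}) and Proposition \ref{opow}, the Painlev\'e--II coefficient $M_{R,1}(s)$ is bounded, while $|z-1|=\varepsilon\tau^{-1/3+\varsigma}$ on $\partial\mathcal{U}_R$, so $\|M_R^{-1}(z)-I\|_{L^\infty(\partial\mathcal{U}_R)}\lesssim 1/(\tau^{1/3}|z-1|)\lesssim\tau^{-\varsigma}$, which is the $p=\infty$ endpoint. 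For finite $p$ I would multiply this uniform height by the $(1/p)$-th power of the arclength $|\partial\mathcal{U}_R|=2\pi\varepsilon\tau^{-1/3+\varsigma}$, obtaining $\|V^{err}-I\|_{L^p(\partial\mathcal{U}_R)}\lesssim \tau^{-\varsigma}(\tau^{-1/3+\varsigma})^{1/p}=t^{-1/(3p)+(1-p)\varsigma/p}$, with the identical computation on $\partial\mathcal{U}_L$; one checks the remainder $\mathcal{O}(t^{-2/3+2\varsigma})$ is subdominant to $\tau^{-\varsigma}$ exactly because $0<\varsigma<1/6$.

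The two norm computations are routine; the genuine points to watch are the uniform boundedness of the off-diagonal coefficients as the saddles merge at $z=\pm1$, which rests on the $\widetilde{r}$ and $T(z)$ analysis near the singularity $|r(\pm1)|=1$, and—chiefly—the scale bookkeeping on the circles, where the shrinking radius $\tau^{-1/3+\varsigma}$ must be balanced against the $\tau^{1/3}$ normalization of the Painlev\'e expansion and the arclength factor from the $L^p$ integral. Producing the exponent $-1/(3p)+(1-p)\varsigma/p$ exactly, confirming that the model error stays below the leading $\tau^{-\varsigma}$ for all admissible $\varsigma$, and verifying that the exponential decay on the outer contour is uniform down to the inner ends of the rays and the bases of the $L_{0\pm}$ segments (where they approach the disks and hence the coalescing stationary points) is where the care is needed, rather than in any single estimate.
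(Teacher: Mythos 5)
Your proposal is correct and takes essentially the same approach as the paper: the same splitting of $\Sigma^{err}$ into the outer contour $\Sigma^{(4)}\setminus\{\mathcal{U}_{R}\cup\mathcal{U}_{L}\}$ (exponential decay of $V^{(4)}-I$ via Proposition \ref{rezpm}) and the circles $\partial\mathcal{U}_{R}\cup\partial\mathcal{U}_{L}$ (pointwise bound $t^{-\varsigma}$ from the local Painlev\'e expansions of $M_{R}$ and $M_{L}$, cf.\ (\ref{mr}) and (\ref{ml}), with $|z\mp 1|=\varepsilon\tau^{-1/3+\varsigma}$, followed by the arclength factor $\bigl(\tau^{-1/3+\varsigma}\bigr)^{1/p}$ that produces exactly $t^{-1/(3p)+(1-p)\varsigma/p}$). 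The paper's own proof is a two-line compression of this argument---it asserts the two pointwise bounds and states that the $L^p$ case follows---so your write-up only supplies details the paper leaves implicit (the exponent bookkeeping and the subdominance of the $\mathcal{O}(t^{-2/3+2\varsigma})$ remainder for $\varsigma<1/6$), and the uniformity issue you flag near the inner ends of the rays and the bases of $L_{0\pm}$ is a soft spot of the paper's stated $e^{-ct}$ bound itself, not a defect peculiar to your argument.
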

\begin{proof}
For $z\in\Sigma^{err} \backslash\{\mathcal{U}_{R}\cup\mathcal{U}_{L}\}$, according to (\ref{ioei}) and Proposition \ref{rezpm}, we have
\begin{equation*}
 |V^{err}(z)-I|= |V^{(4)}(z)-I|\lesssim e^{-ct}.
\end{equation*}
For $z\in \{\partial \mathcal{U}_{R}\}$, according to (\ref{ioei}), we have
\begin{equation*}
 |V^{err}(z)-I|= |M_{R}^{-1}(z)-I|\lesssim t^{-\varsigma}.
\end{equation*}
The estimate for $z\in \{\partial \mathcal{U}_{L}\}$ can be derived in a similar way.
Then we have the estimate as $2\leq p\leq\infty$.
\end{proof}

Define a Cauchy integral operator
        \begin{align*}
           \mathcal{C}_{w^{err}}[f]=\mathcal{C}_-\left( f \left( V^{err}(z)-I \right) \right),
        \end{align*}
         where $w^{err} =V^{err}(z)-I$ and  $\mathcal{C}_-$ is the Cauchy projection operator on $\Sigma^{err}$.
       From Proposition \ref{ve-i}, we have
       $$\|\mathcal{C}_{w^{err}}\|_{L^2(\Sigma^{err})}\lesssim\|\mathcal{C}_-\|_{L^2(\Sigma^{err})}\|V^{err}-I\|_{L^{\infty}(\Sigma^{err})}=\mathcal{O}(t^{-\varsigma}).$$
 According to the  Beals-Coifman theorem,   the solution of the  RH problem \ref{iew} can be expressed as
         \begin{equation}
            M^{err}(z)=I +\frac{1}{2\pi i} \int_{\Sigma^{err}} \frac{\mu^{err}(\zeta) \left( V^{err}(\zeta)-I\right)}{\zeta-z}\, \mathrm{d}\zeta, \nonumber
         \end{equation}
        where $\mu^{err} \in L^2\left( \Sigma^{err}\right)$   satisfies $\left(I-\mathcal{C}_{w^{err}}\right)\mu^{err}=I$.
   From Proposition \ref{ve-i}, we have estimates
       \begin{equation}
          \|\mu^{err}-I\|_{L^2(\Sigma^{err})}=\mathcal{O}(t^{-1/6-\varsigma/2}), \quad \|V^{err}(z)-I\|_{L^2(\Sigma^{err})}=\mathcal{O}(t^{-1/6-\varsigma/2}), \label{owej}
       \end{equation}
which imply that the  RH problem \ref{iew} exists an unique solution.
 Moreover, we perform the expansion of  $M^{err}(z)$ at $z=\infty$
        \begin{equation}
            M^{err}(z)=I +\frac{M^{err}_1}{z} +\mathcal{O}\left(z^{-2}\right), \label{trans8}
        \end{equation}
      where
      $$ M^{err}_1=-\frac{1}{2\pi i} \int_{\Sigma^{err}} \mu^{err}(\zeta)\left( V^{err}(\zeta)-I \right)\, \mathrm{d}\zeta.$$
The coefficient $M^{err}_1$ in (\ref{trans8}) and $M^{err}(z)|_{z=0}$ admit the following estimates.
   \begin{proposition} For $t\to\infty$, $M^{err}_1$ in (\ref{trans8}) and $M^{err}(z)$ as $z=0$ can be estimated as follows
	\begin{align}
		&  M^{err}_1=    \tau^{-1/3} \hat{M}_{R}(\hat{k},s)+  \mathcal{O}(t^{-1/3-\varsigma}), \label{e001}\\[6pt]
		& M^{err}(0) = I+\tau^{-1/3} \hat{M}_{R}(\hat{k},s)+ \mathcal{O}(t^{-1/3-\varsigma}).\label{e00}
	\end{align}
\end{proposition}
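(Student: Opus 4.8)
The plan is to start from the Beals--Coifman representation of $M^{err}(z)$ and its large-$z$ coefficient $M^{err}_1$ recorded in (\ref{trans8}), and to extract the leading term by a residue computation localized on the two small circles $\partial\mathcal{U}_R$ and $\partial\mathcal{U}_L$. First I would write $\mu^{err}=I+(\mu^{err}-I)$ in
\[
M^{err}_1=-\frac{1}{2\pi i}\int_{\Sigma^{err}}\mu^{err}(\zeta)\bigl(V^{err}(\zeta)-I\bigr)\,\mathrm{d}\zeta,
\]
splitting it into a principal piece $-\tfrac{1}{2\pi i}\int_{\Sigma^{err}}(V^{err}-I)\,\mathrm{d}\zeta$ and a correction $-\tfrac{1}{2\pi i}\int_{\Sigma^{err}}(\mu^{err}-I)(V^{err}-I)\,\mathrm{d}\zeta$. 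The Cauchy--Schwarz inequality together with the two $L^2$-bounds in (\ref{owej}) controls the correction by $\|\mu^{err}-I\|_{L^2}\|V^{err}-I\|_{L^2}=\mathcal{O}(t^{-1/6-\varsigma/2})\cdot\mathcal{O}(t^{-1/6-\varsigma/2})=\mathcal{O}(t^{-1/3-\varsigma})$, which is exactly the claimed error order, so the correction is harmless.

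Next I would discard the part of $\Sigma^{err}$ lying outside the disks: by Proposition \ref{ve-i}, $V^{err}-I$ is $\mathcal{O}(e^{-ct})$ there, so that contribution is exponentially small. Using $V^{err}=M_R^{-1}$ on $\partial\mathcal{U}_R$ and $V^{err}=M_L^{-1}$ on $\partial\mathcal{U}_L$ (from (\ref{ioei})), this reduces $M^{err}_1$ to
\[
-\frac{1}{2\pi i}\int_{\partial\mathcal{U}_R}\bigl(M_R^{-1}(\zeta)-I\bigr)\,\mathrm{d}\zeta-\frac{1}{2\pi i}\int_{\partial\mathcal{U}_L}\bigl(M_L^{-1}(\zeta)-I\bigr)\,\mathrm{d}\zeta+\mathcal{O}(t^{-1/3-\varsigma}).
\]
Inserting the local parametrix expansions $M_R(z)=I+M_{R,1}(s)/(\tau^{1/3}(z-1))+\mathcal{O}(t^{-2/3+2\varsigma})$ from (\ref{mr}) and its analogue (\ref{ml}) for $M_L(z)$, inverting, and applying the residue theorem on each circle (with orientations fixed by (\ref{trans6})) produces the main term of order $\tau^{-1/3}$, whose coefficient $\hat{M}_R(\hat{k},s)$ collects the residue contributions $M_{R,1}(s)$ and $M_{L,1}(s)$. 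The $\mathcal{O}(t^{-2/3+2\varsigma})$ remainder, integrated over a contour of length $\mathcal{O}(\tau^{-1/3+\varsigma})$, yields $\mathcal{O}(t^{-1+3\varsigma})$, which is absorbed into $\mathcal{O}(t^{-1/3-\varsigma})$ precisely because $\varsigma<1/6$; this establishes (\ref{e001}).

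For $M^{err}(0)$ I would repeat the argument with the Cauchy kernel evaluated at $z=0$, i.e. with weight $1/\zeta$, starting from $M^{err}(0)=I+\tfrac{1}{2\pi i}\int_{\Sigma^{err}}\zeta^{-1}\mu^{err}(\zeta)(V^{err}(\zeta)-I)\,\mathrm{d}\zeta$. The same splitting of $\mu^{err}$, the same off-disk exponential decay, and the same Cauchy--Schwarz bound reduce the problem to residue integrals of $\zeta^{-1}\bigl(M_R^{-1}-I\bigr)$ and $\zeta^{-1}\bigl(M_L^{-1}-I\bigr)$ over $\partial\mathcal{U}_R$ and $\partial\mathcal{U}_L$. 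Since $\zeta=0$ lies outside both disks, the only enclosed pole is $\zeta=1$ (respectively $\zeta=-1$), so the extra factor $\zeta^{-1}$ is simply evaluated at the pole and again produces a contribution of order $\tau^{-1/3}$, giving (\ref{e00}).

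The main obstacle is the error bookkeeping that pins the remainder at exactly $\mathcal{O}(t^{-1/3-\varsigma})$: one must verify simultaneously that the Beals--Coifman correction term, the off-disk exponential tail, and the error in approximating $V^{err}$ by the leading part of the local parametrices are all no larger than $t^{-1/3-\varsigma}$. The first of these is sharp, and the third is what forces the constraint $\varsigma<1/6$ already imposed in the construction of the disks $\mathcal{U}_R,\mathcal{U}_L$; the only other delicate point is keeping careful track of the contour orientations so that the two residues add with the correct signs to reproduce the stated $\tau^{-1/3}$ coefficient.
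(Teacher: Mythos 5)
Your proposal follows essentially the same route as the paper's own proof: the same Beals--Coifman splitting of $M^{err}_1$ into a principal circle integral, an exponentially small off-disk piece, and a $(\mu^{err}-I)(V^{err}-I)$ correction bounded via Cauchy--Schwarz and (\ref{owej}), followed by the residue computation on $\partial\mathcal{U}_{R}\cup\partial\mathcal{U}_{L}$ using the parametrix expansions (\ref{mr})--(\ref{ml}), and the analogous weighted integral with kernel $\zeta^{-1}$ for $M^{err}(0)$. Your explicit bookkeeping of the $\mathcal{O}(t^{-2/3+2\varsigma})$ parametrix remainder against the contour length is in fact more careful than the paper's, which states the conclusion directly; the argument is correct as proposed.
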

\begin{proof}
By   (\ref{ioei}) and  (\ref{owej}), we obtain that
	\begin{align}
		M^{err}_1  ={}&-\frac{1}{2\pi i} \oint_{\partial \mathcal{U}_{R} \cup\partial \mathcal{U}_{L}}\left( V^{err}(\zeta)-I \right)\, \mathrm{d}\zeta -\frac{1}{2\pi i} \int_{ \Sigma^{err}\backslash\{\partial \mathcal{U}_{R}\cup\partial \mathcal{U}_{L}\}} \left( V^{err}(\zeta)-I \right) \mathrm{d}\zeta\notag\\
&-\frac{1}{2\pi i} \int_{\Sigma^{err}}\left(\mu^{err}(\zeta)-I\right)\left( V^{err}(\zeta)-I \right)\, \mathrm{d}\zeta\notag\\
				 = {}&\tau^{-1/3}\left(M_{R,1}(s)+M_{L,1}(s)\right)+ \mathcal{O}(t^{-1/3-\varsigma}).\label{55-1}
	\end{align}
Recalling (\ref{mr}) and (\ref{ml}), (\ref{e001}) can be derived from (\ref{55-1}).
	In a similar way, we have
	\begin{align}
		M^{err}(0) &= I + \frac{1}{2\pi i} \oint_{\partial \mathcal{U}_{R}\cup\partial \mathcal{U}_{L}} \frac{ V^{err}(\zeta)-I}{\zeta} \, \mathrm{d}\zeta + \mathcal{O}(t^{-1/3-\varsigma})\nonumber\\
		&= I -\tau^{-1/3}\left(M_{R,1}(s)-M_{L,1}(s)\right) + \mathcal{O}(t^{-1/3-\varsigma}).\label{55-2}
	\end{align}
Then (\ref{e00}) can be derived from  (\ref{55-2}).
\end{proof}

\subsection{Pure $\bar{\partial}$-problem}\label{modi4}
\hspace*{\parindent}
In this subsection, we   consider the contribution of  a pure $\bar{\partial}$-problem $M^{(5)}(z)$.
   From (\ref{rhdbar}), we have
            \begin{equation}\label{transd}
                M^{(5)}(z)=M^{(4)}(z)\left(M^{rhp}(z)\right)^{-1},
            \end{equation}
            which  satisfies the following pure $\bar{\partial}$-problem.

\begin{prob3}\label{trad}
 Find  $M^{(5)}(z):=M^{(5)}(z,x,t)$  which satisfies
            \begin{itemize}
                \item   $M^{(5)}(z)$ is continuous in $\mathbb{C}$ and has sectionally continuous first partial derivatives in $\mathbb{C}  \backslash \left(\mathbb{R} \cup \Sigma^{(4)}\right)$.

                \item For $z\in \mathbb{C}$,  $M^{(5)}(z)$ satisfies the $ \bar{\partial}$-equation
                \begin{equation*}
                    \bar{\partial} M^{(5)}(z) = M^{(5)}(z) W^{(5)}(z),
                \end{equation*}
                where
                \begin{equation*}
                W^{(5)}(z):=M^{rhp}(z)  \bar{\partial} R^{(3)}(z) \left(M^{rhp}(z)\right)^{-1},
                \end{equation*}
                 and  $ \bar{\partial}  R^{(3)}(z)$ has been given in (\ref{R3}).
                  \item    $M^{(5)}(z)=I+\mathcal{O}(z^{-1}) ~as\ z\to  \infty$.	
            \end{itemize}

\end{prob3}
The solution of the $ \bar{\partial}$-RH problem   \ref{trad} can be given by
\begin{equation} \label{Im3}
	M^{(5)}(z)=I-\frac{1}{\pi}  \iint_\mathbb{C} \frac{ M^{(5)}(\zeta) W^{(5)}(\zeta)}{\zeta-z} \, \mathrm{d}\zeta\wedge\mathrm{d}\bar{\zeta},
\end{equation}
where $\mathrm{d}\zeta\wedge\mathrm{d}\bar{\zeta}$ is the Lebesgue measure and
$(\ref{Im3})$ can be  written as an operator equation
\begin{equation}
	(I-\mathcal{S}) M^{(5)}(z)=I, \label{Sm3}
\end{equation}
where the Cauchy operator $\mathcal{S}$ in (\ref{Sm3}) is given by
\begin{equation}
	\mathcal{S}[f](z):=\frac{1}{\pi} \iint \frac{f(\zeta)W^{(5)}(\zeta)}{\zeta-z} \, \mathrm{d}\zeta\wedge\mathrm{d}\bar{\zeta}.\label{opes}
\end{equation}

\begin{proposition}\label{pss}
	The  Cauchy operator $\mathcal{S}[f](z)$ in (\ref{opes}) is a small norm and admits the estimate
	$$\|\mathcal{S}\|_{L^\infty\to L^\infty}\lesssim  t^{-1/6}, \ \  t\to \infty,$$
	which implies the existence of $(I-\mathcal{S})^{-1}$ for a large $t$.
\end{proposition}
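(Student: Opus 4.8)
The plan is to prove the bound by a direct estimate of the integral kernel of $\mathcal{S}$ and then to deduce invertibility of $I-\mathcal{S}$ from a Neumann series. First I would reduce the operator bound to a scalar area integral: for every $f\in L^{\infty}$,
\begin{equation*}
\|\mathcal{S}[f]\|_{L^{\infty}}\le \|f\|_{L^{\infty}}\,\sup_{z\in\mathbb{C}}\frac{1}{\pi}\iint_{\mathbb{C}}\frac{\bigl|W^{(5)}(\zeta)\bigr|}{|\zeta-z|}\,\mathrm{d}A(\zeta),
\end{equation*}
where $\mathrm{d}A$ is the Lebesgue area measure (the constant from $\mathrm{d}\zeta\wedge\mathrm{d}\bar\zeta$ is irrelevant). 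Since $M^{rhp}(z)$ is built from the bounded Painlev\'e--II local models $M_{R},M_{L}$ and the small-norm error function $M^{err}$ of Subsection~\ref{snrh}, both $M^{rhp}$ and $\bigl(M^{rhp}\bigr)^{-1}$ are uniformly bounded on $\mathbb{C}\setminus\Sigma^{(4)}$; hence $\bigl|W^{(5)}(\zeta)\bigr|\lesssim\bigl|\bar\partial R^{(3)}(\zeta)\bigr|$ pointwise, and the integrand is supported on $\Omega=\bigcup_j(\Omega_j\cup\overline{\Omega}_j)$.

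By the conjugation symmetry $R(z)=\overline{R(\bar z^{-1})}$ of Lemma~\ref{prop3} it suffices to bound the contributions of the upper sectors $\Omega_j$. On each $\Omega_j$ the weight $\bar\partial R^{(3)}$ carries the oscillatory factor $e^{2it\theta(\zeta)}$, and $\bigl|e^{2it\theta(\zeta)}\bigr|=e^{t\,\re(2i\theta(\zeta))}$ is controlled by the sign estimates of Proposition~\ref{rezpm}; for instance on $\Omega_1\cap\{|\zeta|\le\sqrt2\}$ one has $\re(2i\theta)\le-2|\re\zeta-z_1|^{2}|\im\zeta|$, i.e.\ Gaussian-type decay $e^{-2tu^{2}v}$ in the local coordinates $\zeta=z_1+u+iv$. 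I would then insert the pointwise bound of Lemma~\ref{prop3}, splitting $\bigl|\bar\partial R_j\bigr|\lesssim|r'(|\zeta|)|+|\zeta-z_j|^{-1/2}+(\text{a smooth compactly supported cutoff term})$, and estimate the three pieces separately. Each piece is treated by H\"older's inequality in the variable along the ray, combining the uniform Cauchy-kernel bound $\bigl\|(\zeta-z)^{-1}\bigr\|_{L^{p}(\text{ray})}\lesssim|\im z-\im\zeta|^{1/p-1}$ with, respectively, the $L^{2}$-boundedness of $r'$, the local integrability of $|\zeta-z_j|^{-1/2}$ for $p<4$, and the smoothness of the cutoff; integrating the surviving exponential in the transverse variable yields the generic away-from-transition rate $t^{-1/4}$ on the portions of $\Omega$ bounded away from $z=\pm1$.

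The main obstacle, and the source of the stated exponent, is the neighborhood of the coalescing saddle points $z=\pm1$, where by Proposition~\ref{opow} the four phase points $z_1,z_2,z_3,z_4$ are separated from one another and from $\pm1$ only by $O(t^{-1/3})$, so the phase is effectively cubic and the decay of $\re(2i\theta)$ weakens at the $\tau^{1/3}=(\tfrac94 t)^{1/3}$ scale. The cleanest way to see the rate is to rescale near $z=1$ by $\hat k=\tau^{1/3}(\zeta-1)$ (and $\hat z=\tau^{1/3}(z-1)$ in the worst case where $z$ also lies near $1$): the area element contributes $\tau^{-2/3}$, the singularity $|\zeta-z_j|^{-1/2}=\tau^{1/6}|\hat k-\hat k_j|^{-1/2}$ contributes $\tau^{1/6}$, the Cauchy kernel $|\zeta-z|^{-1}=\tau^{1/3}|\hat k-\hat z|^{-1}$ contributes $\tau^{1/3}$, while $e^{t\,\re(2i\theta)}\simeq e^{\re(2i(\frac43\hat k^{3}+s\hat k))}$ becomes $t$-independent by \eqref{z1}. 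Multiplying the scalings,
\begin{equation*}
\tau^{-2/3}\cdot\tau^{1/6}\cdot\tau^{1/3}=\tau^{-1/6}\sim t^{-1/6},
\end{equation*}
and the remaining rescaled integral is $O(1)$ uniformly in $\hat z$, because $|\hat k-\hat k_j|^{-1/2}$ is locally $L^{p}$ for $p<4$, the Cauchy kernel is locally integrable, and the cubic phase provides decay at the boundary of the rescaled sector. The contribution near $z=-1$ is identical by symmetry, and all remaining pieces are subdominant.

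Combining the generic $t^{-1/4}$ estimate on the outer regions with the dominant $t^{-1/6}$ contribution from the two transition disks gives $\|\mathcal{S}\|_{L^{\infty}\to L^{\infty}}\lesssim t^{-1/6}$. Choosing $t$ large then forces $\|\mathcal{S}\|_{L^{\infty}\to L^{\infty}}<1$, so $(I-\mathcal{S})^{-1}=\sum_{n\ge0}\mathcal{S}^{n}$ converges in $L^{\infty}$, which establishes the existence and uniqueness of the solution $M^{(5)}$ of the $\bar\partial$-problem~\ref{trad}.
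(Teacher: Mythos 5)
Your proposal is correct and rests on the same pillars as the paper's proof: reduce the operator norm to the scalar integral $\sup_{z}\pi^{-1}\iint|\bar\partial R^{(3)}(\zeta)|\,e^{t\re(2i\theta(\zeta))}|\zeta-z|^{-1}\mathrm{d}A(\zeta)$ via boundedness of $M^{rhp}$ and $(M^{rhp})^{-1}$, insert the pointwise bounds of Lemma \ref{prop3} and the sign estimates of Proposition \ref{rezpm}, close with Cauchy--Schwarz/H\"older against the Cauchy kernel, and invert $I-\mathcal{S}$ by a Neumann series. The difference is organizational: the paper stays in the original variable, splits $\Omega_1$ at $|z|=\sqrt2$ into the cubic-decay regime ($e^{-2tv^3}$, yielding $I_1,I_3\lesssim t^{-1/6}$) and the genuinely exponential regime ($e^{-2(\sqrt{2}-1)tv}$, yielding $I_2,I_4\lesssim t^{-1/2}$), and reads the rate off one-dimensional integrals, whereas you split at the points $\pm1$ and extract the exponent by the rescaling $\hat k=\tau^{1/3}(\zeta-1)$, which makes the origin of the exponent $1/6$ transparent. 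Two details in your version deserve repair, though neither affects the conclusion. First, the cubic approximation (\ref{z1}) that you invoke to make the phase $t$-independent is uniformly valid only for $|\zeta-1|\lesssim t^{-1/3+\varsigma}$, not on a fixed neighborhood of $1$; the clean fix is that no approximation is needed, because the bound $\re(2i\theta)\le-2|\re\zeta-z_1|^2|\im\zeta|$ of Proposition \ref{rezpm} is exactly scale-invariant under your rescaling (with $\tau=\tfrac94t$ one has $2tu^2v=\tfrac89\hat u^2\hat v$), so the rescaled integrand carries the same $t$-independent Gaussian-type weight on the whole sector. Second, the portions of $\Omega$ bounded away from $\pm1$ actually contribute $O(t^{-1/2})$ because the phase there decays linearly-exponentially, so your quoted generic rate $t^{-1/4}$ is a true but non-sharp bound; in either case it is dominated by the $t^{-1/6}$ coming from the transition disks, which is what the proposition asserts.
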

\begin{proof}
	For  $\forall f \in L^{\infty}(\Omega_1)$, from Proposition \ref{rezpm} and \ref{prop3}, we have
\begin{align*}
		\|\mathcal{S}[f]\|_{L^\infty(\mathbb{C})} &\le \|f\|_{L^\infty(\mathbb{C})} \frac{1}{\pi } \iint_{\Omega_1} \frac{|\bar{\partial}R_1e^{2it\theta}(\zeta)|}{|\zeta-z|} \mathrm{d}\zeta\wedge\mathrm{d}\bar{\zeta}\\
&\leq c\left(I_1+I_2+I_3+I_4\right),
\end{align*}	
		where
\begin{subequations}
	\begin{align}
		&I_1  =   \iint_{\Omega_{1}\cup\{|z|\leq\sqrt2\}} \frac{\left|r'(|\zeta|) \right|e^{\re(2it\theta)}}{|\zeta-z|} \mathrm{d}\zeta\wedge\mathrm{d}\bar{\zeta},
\ \
I_2  =   \iint_{\Omega_{1}\cup\{|z|>\sqrt2\}} \frac{\left|r'(|\zeta|) \right|e^{\re(2it\theta)}}{|\zeta-z|} \mathrm{d}\zeta\wedge\mathrm{d}\bar{\zeta},\nonumber\\
		&I_3  =   \iint_{\Omega_{1}\cup\{|z|\leq\sqrt2\}} \frac{\left| \zeta-z_1 \right|^{-1/2}e^{\re(2it\theta)}}{|\zeta-z|} \mathrm{d}\zeta\wedge\mathrm{d}\bar{\zeta},\ \
I_4  =   \iint_{\Omega_{1}\cup\{|z|>\sqrt2\}} \frac{\left| \zeta-z_1 \right|^{-1/2}e^{\re(2it\theta)}}{|\zeta-z|} \mathrm{d}\zeta\wedge\mathrm{d}\bar{\zeta}.\nonumber
	\end{align}\label{i1234}
\end{subequations}
Setting $z=x+iy$, $\zeta =z_1 + u+iv$,
   using the Cauchy-Schwartz's inequality, we have
	\begin{align*}
		|I_1| = \int_{0}^{\sqrt{2}\sin{\omega}} \int_{v}^{\sqrt{2}\cos{\omega}-z_1}  \frac{\left|r'(|\zeta|) \right|e^{\re(2it\theta)}}{|\zeta-z|} \mathrm{d}u \mathrm{d}v 	
&\lesssim t^{-1/6}.
	\end{align*}
\begin{align*}
		|I_2| = \int_{\sqrt{2}\sin{\omega}}^\infty \int_{\sqrt{2}\cos{\omega}-z_1}^{\infty}  \frac{\left|r'(|\zeta|) \right|e^{\re(2it\theta)}}{|\zeta-z|} \mathrm{d}u \mathrm{d}v 	
&\lesssim t^{-1/2}.
	\end{align*}
	Using the H\"{o}lder's inequality with $p>2$ and $1/p+1/q=1$, we have
	\begin{align*}
		&|I_3| \lesssim  \int_{0}^{\sqrt2\sin{\omega}} v^{1/p-1/2}\|v-y\|^{1/q-1}e^{-2tv^3} \mathrm{d}v \lesssim t^{-1/6},\\
		&|I_4| \lesssim  \int_{\sqrt2\sin{\omega}}^{\infty} v^{1/p-1/2}\|v-y\|^{1/q-1}e^{-2(\sqrt{2}-1)tv} \mathrm{d}v \lesssim t^{-1/2}.
	\end{align*}
\end{proof}

Proposition \ref{pss} implies that   the operator  equation (\ref{Sm3})  exists an unique solution. For $z\to\infty$, (\ref{Im3})
can be expanded as the form
\begin{equation}
	M^{(5)}(z)=I + \frac{M^{(5)}_1(x,t) }{z} +\mathcal{O}(z^{-2}), \quad z\to \infty, \label{trans9}
\end{equation}
where
\begin{align}\label{expanm51}
M^{(5)}_1(x,t)=\frac{1}{\pi} \iint_{\mathbb{C}} M^{(5)}(\zeta)W^{(5)}(\zeta)\, \mathrm{d}\zeta\wedge\mathrm{d}\bar{\zeta}.
\end{align}
For $z=0$, (\ref{Im3}) with the form
\begin{equation} \label{m50}
	M^{(5)}(0)=I-\frac{1}{\pi}  \iint_\mathbb{C} \frac{ M^{(5)}(\zeta) W^{(5)}(\zeta)}{\zeta} \, \mathrm{d}\zeta\wedge\mathrm{d}\bar{\zeta}.
\end{equation}

\begin{proposition} In the transition region $\mathcal{D}$, the coefficient $M^{(5)}_1(x,t)$ in (\ref{trans9}) and $M^{(5)}(z)|_{z=0}-I$ admit the  following estimates
	\begin{equation}\label{m51infty}
		|M^{(5)}_1(x,t)| \lesssim t^{-1/2}, \ \ |M^{(5)}(0)-I| \lesssim t^{-1/2}, \ t\to\infty.
	\end{equation}
\end{proposition}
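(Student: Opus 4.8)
The plan is to estimate the two quantities directly from their integral representations (\ref{expanm51}) and (\ref{m50}), using that the Cauchy operator $\mathcal{S}$ is a contraction. First I would record two uniform bounds that feed the estimates. By Proposition \ref{pss} the resolvent $(I-\mathcal{S})^{-1}$ exists and is bounded on $L^\infty(\mathbb{C})$ for large $t$, so $M^{(5)}=(I-\mathcal{S})^{-1}I$ obeys $\|M^{(5)}\|_{L^\infty(\mathbb{C})}\lesssim 1$; and since $M^{err}=I+\mathcal{O}(t^{-\varsigma})$ together with the bounded local models $M_R,M_L$ show that $M^{rhp}$ and $(M^{rhp})^{-1}$ are uniformly bounded, we obtain the pointwise bound $|W^{(5)}(\zeta)|=|M^{rhp}\bar{\partial}R^{(3)}(M^{rhp})^{-1}|\lesssim|\bar{\partial}R^{(3)}(\zeta)|$. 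Consequently, writing $\mathrm{d}A$ for the Lebesgue measure (proportional to $\mathrm{d}\zeta\wedge\mathrm{d}\bar{\zeta}$) and recalling that $\bar{\partial}R^{(3)}$ is supported in $\Omega=\bigcup_j(\Omega_j\cup\overline{\Omega}_j)$,
\begin{align*}
|M^{(5)}_1(x,t)|&\lesssim\iint_{\Omega}|\bar{\partial}R^{(3)}(\zeta)|\,e^{\re(2it\theta(\zeta))}\,\mathrm{d}A(\zeta),\\
|M^{(5)}(0)-I|&\lesssim\iint_{\Omega}\frac{|\bar{\partial}R^{(3)}(\zeta)|}{|\zeta|}\,e^{\re(2it\theta(\zeta))}\,\mathrm{d}A(\zeta).
\end{align*}

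Next I would reduce to a few representative sectors. By the symmetry $R(z)=\overline{R(\bar{z}^{-1})}$ and conjugation across $\mathbb{R}$, it suffices to bound the integrals over $\Omega_{0+}$ (the sector adjacent to $z=0$), over $\Omega_1$ (a far-field sector anchored at $z_1$), and over $\Omega_2$ (a sector anchored at $z_2$ abutting the critical point $z=1$); the remaining regions are handled identically. In each sector I insert the $\bar\partial$-bound from Lemma \ref{prop3}, namely $|\bar{\partial}R_j|\lesssim|r'(|\zeta|)|+|\zeta-z_j|^{-1/2}$ plus a smooth cut-off term away from $z=\pm1$, and $|\bar{\partial}R_{0\pm}|\lesssim|r'(\pm|\zeta|)|+|\zeta|^{-1/2}$ near $z=0$, together with the decay estimates of Proposition \ref{rezpm}, which give $\re(2i\theta)\le-c|\re\zeta-z_j|^2|\im\zeta|$ on $\{|\zeta|\le\sqrt2\}$ and the stronger linear bound $\re(2i\theta)\le-c|\im\zeta|$ on $\{|\zeta|>\sqrt2\}$, as well as $\re(2i\theta)\le-c|\im\zeta||\sin\tilde{\varphi}|$ on $\Omega_{0\pm}$ from (\ref{0pm}).

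The computations then mirror the $I_1$–$I_4$ estimates in the proof of Proposition \ref{pss}, but with a crucial gain: here there is no singular Cauchy kernel $|\zeta-z|^{-1}$ and no supremum over $z$ to control. Writing $\zeta=z_j+u+iv$ with $0<v<u\tan\varphi$ and integrating in $u$ first, the half-integer singularity $|\zeta-z_j|^{-1/2}\le u^{-1/2}$ against the Gaussian factor $e^{-2tu^2v}$ produces, after splitting at the cubic scale $u\sim t^{-1/3}$, a contribution of order $t^{-1/2}$; the $|r'(|\zeta|)|$ term is controlled by the Cauchy--Schwarz inequality in $u$ (using $r'\in L^2$) to the same order. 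In the far field $\{|\zeta|>\sqrt2\}$ the linear decay $e^{-c|\im\zeta|}$ yields $\mathcal{O}(t^{-1/2})$, in fact faster. For $M^{(5)}(0)$ the extra factor $|\zeta|^{-1}$ is harmless on $\Omega_{1},\Omega_{2}$ (where $|\zeta|\gtrsim1$), while on $\Omega_{0+}$ it combines with $|\zeta|^{-1/2}$ into $|\zeta|^{-3/2}$, which remains area-integrable near the origin and again gives $t^{-1/2}$.

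The main obstacle will be the sectors abutting the merging critical points $z=\pm1$, namely $\Omega_2,\Omega_3$ and the inner parts of $\Omega_{0\pm},\Omega_1,\Omega_4$, where the phase degenerates to the cubic form $\tfrac{4}{3}\hat{k}^3+s\hat{k}$ on the scale $t^{-1/3}$ and $\bar\partial R$ carries the non-Lipschitz singularity $|\zeta-z_j|^{-1/2}$. The delicate point is that the operator norm in Proposition \ref{pss} only decays like $t^{-1/6}$ precisely because of the singular kernel; recovering the sharper rate $t^{-1/2}$ for the coefficients requires exploiting the absence of that kernel, choosing the order of integration so that the $u$-integral absorbs the weight, and balancing the half-integer singularity against the quadratic/cubic decay via H\"{o}lder's inequality with $p>2$. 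Once these sector estimates are assembled, summing over $j$ and over the conjugate regions yields both bounds in (\ref{m51infty}).
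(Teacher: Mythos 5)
Your proposal is correct and takes essentially the same route as the paper's proof: uniform boundedness of $M^{(5)}$, $M^{rhp}$ and $(M^{rhp})^{-1}$ reduces both quantities to area integrals of $|\bar\partial R^{(3)}(\zeta)|e^{\re(2it\theta(\zeta))}$ (with the harmless extra factor $|\zeta|^{-1}$ for $M^{(5)}(0)$), which are then estimated sector by sector via Lemma \ref{prop3} and Proposition \ref{rezpm}, using Cauchy--Schwarz for the $|r'|$ terms and a H\"older-type splitting for the $|\zeta-z_j|^{-1/2}$ terms. In fact your write-up is slightly more careful than the paper's at the two points it glosses over: the splitting at the cubic scale $u\sim t^{-1/3}$, which is genuinely needed to reach the rate $t^{-1/2}$ (the paper's displayed intermediate bound, which keeps only $e^{-2tv^3}$ and integrates $u$ over a fixed interval, would literally yield only $t^{-1/3}$), and the integrability of the combined singularity $|\zeta|^{-3/2}$ near $z=0$ in the estimate of $M^{(5)}(0)-I$.
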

\begin{proof}
	Similar to the proof of Proposition \ref{pss}, we take $z \in \Omega_{1}$ as an example and divide the integration (\ref{expanm51}) on $\Omega_{1}$ into four parts.
First, we consider the estimation of  $M^{(5)}_1(x,t)$.
By (\ref{transd}) and the boundedness of $M^{(4)}(z)$ and $M^{rhp}(z)$ on $\Omega_{1}$, we have
	\begin{align}\label{m51esti}
	|M^{(5)}_1(x,t)| \lesssim I_1 + I_2+I_3 + I_4,
	\end{align}
	where
\begin{subequations}
	\begin{align}
		&I_1  =   \iint_{\Omega_{1}\cup\{|z|\leq\sqrt2\}} \left|r'(|\zeta|) \right|e^{\re(2it\theta)} \mathrm{d}\zeta\wedge\mathrm{d}\bar{\zeta},
\ \
I_2  =   \iint_{\Omega_{1}\cup\{|z|>\sqrt2\}} \left|r'(|\zeta|) \right|e^{\re(2it\theta)} \mathrm{d}\zeta\wedge\mathrm{d}\bar{\zeta},\nonumber\\
		&I_3  =   \iint_{\Omega_{1}\cup\{|z|\leq\sqrt2\}} \left| \zeta-z_1 \right|^{-1/2}e^{\re(2it\theta)} \mathrm{d}\zeta\wedge\mathrm{d}\bar{\zeta},\ \
I_4  =   \iint_{\Omega_{1}\cup\{|z|>\sqrt2\}} \left| \zeta-z_1 \right|^{-1/2}e^{\re(2it\theta)} \mathrm{d}\zeta\wedge\mathrm{d}\bar{\zeta}.\nonumber
	\end{align}
\end{subequations}
 Let  $\zeta =z_1 + u+iv$,  draw support from Cauchy-Schwartz's inequality, we have
		\begin{align*}
		&|I_1| = \int_{0}^{\sqrt{2}\sin{\omega}} \int_{v}^{\sqrt{2}\cos{\omega}-z_1}  \left|r'(|\zeta|) \right|e^{\re(2it\theta)} \mathrm{d}u \mathrm{d}v
\lesssim t^{-1/2},\\
		&|I_2| = \int_{\sqrt{2}\sin{\omega}}^\infty \int_{\sqrt{2}\cos{\omega}-z_1}^{\infty}  \left|r'(|\zeta|) \right|e^{\re(2it\theta)} \mathrm{d}u \mathrm{d}v
\lesssim t^{-3/2}.
	\end{align*}
	By H\"{o}lder's inequality with $p>2$ and $1/p+1/q=1$, we have
	\begin{align*}
		&|I_3|  \lesssim\int_0^{\sqrt{2}}\int_0^{\sqrt{2}\cos{\omega}-z_1}|u+iv|^{-\frac{1}{2}}e^{-2tv^3}\mathrm{d}u\mathrm{d}v\lesssim t^{-1/2},\\
&|I_4|  \lesssim\int^\infty_{\sqrt{2}}\int^\infty_{\sqrt{2}\cos{\omega}-z_1}|u+iv|^{-\frac{1}{2}}e^{-2(\sqrt{2}-1)tv}\mathrm{d}u\mathrm{d}v\lesssim t^{-3/2}.\\
	\end{align*}
For $z \in \Omega_{1}$, we have $|z|^{-1} \le |z_1|^{-1}$. According to (\ref{m50}), we have
\begin{equation*}
 |M^{(5)}(0)-I| \lesssim \iint_{\Omega_{1}} \left|\bar{\partial}R_{1}(\zeta)e^{2it\theta}\right| \mathrm{d}\zeta\wedge\mathrm{d}\bar{\zeta}.
 \end{equation*}
 By the similar estimate with (\ref{m51esti}),
we have $|M^{(5)}(0)-I| \lesssim t^{-1/2}$.
\end{proof}

 To derive the potential function $q(x,t)$ from the reconstruction formula (\ref{sol}), it is necessary to recover $M^{(3)}(0)$ by the following proposition.
\begin{proposition}For $t\to\infty$, $M^{(3)}(z)|_{z=0}$ admits a  estimate
	\begin{align}\label{m30}
		M^{(3)}(0)= M^{err}(0) + \mathcal{O}(t^{-1/2}),
	\end{align}
where $M^{err}(0)$ is given by (\ref{e00}).
\end{proposition}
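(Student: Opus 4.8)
The plan is to reverse the chain of transformations that produced $M^{(3)}$ and to evaluate everything at the single point $z=0$, exploiting that $0$ sits far from all the active features of the deformed problem. Combining the transformation (\ref{trans4}), namely $M^{(4)}(z)=M^{(3)}(z)R^{(3)}(z)$, with the decomposition (\ref{transd}), namely $M^{(5)}(z)=M^{(4)}(z)\left(M^{rhp}(z)\right)^{-1}$, I would first write
\begin{equation*}
M^{(3)}(z)=M^{(5)}(z)\,M^{rhp}(z)\,\left(R^{(3)}(z)\right)^{-1},
\end{equation*}
and then specialize to $z=0$ to obtain $M^{(3)}(0)=M^{(5)}(0)\,M^{rhp}(0)\,\left(R^{(3)}(0)\right)^{-1}$. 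The entire proof then reduces to identifying each of the three factors at $z=0$.

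Next I would dispose of the two "easy" factors. For $M^{rhp}(0)$: the disks $\mathcal{U}_{L}$ and $\mathcal{U}_{R}$ are centered at $z=\pm1$ with radius $\varepsilon\tau^{-1/3+\varsigma}\to0$, so for $t$ large enough $0\notin\mathcal{U}_{L}\cup\mathcal{U}_{R}$; by the reconstruction (\ref{trans6}) this forces $M^{rhp}(0)=M^{err}(0)$. For $R^{(3)}(0)$: the point $z=0$ is a common vertex of the lens regions $\Omega_{0\pm}$ and $\overline{\Omega}_{0\pm}$, and on $\Gamma$ the extension data are $R_{0\pm}(z)=r(z)T^{2}(z)$, $\overline{R}_{0\pm}(z)=\overline{r}(z)T^{-2}(z)$. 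Since $r(0)=0$ by the vanishing asymptotics (\ref{rasy}), the relevant off-diagonal entries of $R^{(3)}$ vanish at $z=0$, so $R^{(3)}(0)=I$ and $\left(R^{(3)}(0)\right)^{-1}=I$. Hence $M^{(3)}(0)=M^{(5)}(0)\,M^{err}(0)$.

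Finally I would insert the already-established quantitative estimates. The previous proposition gives $|M^{(5)}(0)-I|\lesssim t^{-1/2}$, i.e. $M^{(5)}(0)=I+\mathcal{O}(t^{-1/2})$, while (\ref{e00}) shows $M^{err}(0)=I+\tau^{-1/3}\hat{M}_{R}(\hat{k},s)+\mathcal{O}(t^{-1/3-\varsigma})$ is uniformly bounded as $t\to\infty$. Writing
\begin{equation*}
M^{(3)}(0)=\left(I+\mathcal{O}(t^{-1/2})\right)M^{err}(0)=M^{err}(0)+\mathcal{O}(t^{-1/2})\,M^{err}(0),
\end{equation*}
the boundedness of $M^{err}(0)$ absorbs the trailing factor and yields the claimed estimate $M^{(3)}(0)=M^{err}(0)+\mathcal{O}(t^{-1/2})$.

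The routine bookkeeping is straightforward; the one point that genuinely needs care is the claim $R^{(3)}(0)=I$. Because $0$ is the corner where four lens regions meet and lies on the contour $\Gamma$, one must verify that the $\bar\partial$-extension of $R$ really is continuous there and that its value is governed by the boundary data $r(z)T^{2}(z)$ rather than by an arbitrary extension; the decisive fact is $r(0)=0$ from (\ref{rasy}) together with the analyticity and finiteness of $T(z)$ near $z=0$ (Proposition \ref{prop1}), which together guarantee that both triangular entries degenerate to the identity at $z=0$. This is where I would spend the most attention, since the asymptotic accuracy $\mathcal{O}(t^{-1/2})$ hinges on $R^{(3)}(0)$ being exactly $I$ rather than $I$ up to an error that could dominate.
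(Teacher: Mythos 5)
Your proposal is correct and follows essentially the same route as the paper: invert the chain of transformations to get $M^{(3)}(0)=M^{(5)}(0)\,M^{rhp}(0)\,\bigl(R^{(3)}(0)\bigr)^{-1}$, note that near $z=0$ one has $R^{(3)}=I$ and $M^{rhp}=M^{err}$ (the paper states this tersely, even phrasing it as a large-$z$ statement before evaluating at $z=0$), and then combine the estimate $|M^{(5)}(0)-I|\lesssim t^{-1/2}$ from (\ref{m51infty}) with the boundedness of $M^{err}(0)$ from (\ref{e00}). Your extra care at $z=0$ (that $r(0)=0$ by (\ref{rasy}) and the finiteness of $T$ force $R^{(3)}(0)=I$, and that $0\notin\mathcal{U}_L\cup\mathcal{U}_R$ gives $M^{rhp}(0)=M^{err}(0)$) is exactly the justification the paper leaves implicit, not a different argument.
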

\begin{proof}
	Reviewing the series of transformations  (\ref{trans4}), (\ref{trans6}) and (\ref{transd}),  for a large $z$  and $R^{(3)}(z)=I$,  the solution of $M^{(3)}(z)$ is given by
	\begin{align*}
		M^{(3)}(z) = M^{(5)}(z)M^{err}(z) .
	\end{align*}
From  (\ref{trans9}) and (\ref{m51infty}), we derive
\begin{align*}
	M^{(3)}(z)= M^{err}(z)  + \mathcal{O} (t^{-1/2}),
\end{align*}
which leads to (\ref{m30})  if we take $z=0$.
\end{proof}

\subsection{Painlev\'{e} Asymptotic}\label{recover}
\hspace*{\parindent}
In this subsection, we give the  proof of Theorem \ref{th}.
\begin{proof}
	
	Inverting  the  transformations (\ref{trans1}),  (\ref{trans3}), (\ref{trans4}),  (\ref{trans6}) and (\ref{transd}) in Section \ref{sec3}. As $z\to \infty$, the matrixes  $R^{(3)}(z)= G(z)=I$. The  solution of RH problem \ref{RHP0} is shown as
	\begin{align}
M(z) 		= T^{\sigma_3}(\infty)J(z)  M^{(5)}(z) M^{err}(z)T^{-\sigma_3}(z) + \mathcal{O}(e^{-ct}). \label{rem}
	\end{align}
Recalling the expansion of $T(z)$ shown in (\ref{Texpan}), we substitute (\ref{trans8}) and   (\ref{trans9}) into (\ref{rem}) and obtain
\begin{align*}
M(z)=T^{\widehat{\sigma}_3}(\infty)\left[ I+ \frac{1}{z}\left( \sigma_2 M^{(3)}(0)^{-1} +M_1^{(5)}+M_1^{err}-iT_1^{-\sigma_3}\right) \right ] + \mathcal{O}(e^{-ct}).
	\end{align*}
According to the reconstruction formula (\ref{sol})
	\begin{align*}
		q(x,t)
 =T^2(\infty)\left(-1-i\tau^{-1/3}\left(\beta_1+\beta_2\right)\right) + \mathcal{O}\left( t^{-\frac{1}{3}-\varsigma}\right),
	\end{align*}
which leads to the result in Theorem \ref{th}.
\end{proof}

\appendix

\section{Painlev\'{e}-II RH model} \label{appx}
\hspace*{\parindent}
The well-known Painlev\'{e}-II equation is given by
\begin{align}
	u_{ss}(s) = 2u^3(s) +su(s), \quad s \in \mathbb{R},\label{p2}
\end{align}
which can be solved using the RH problem \ref{painrh} as follows \cite{fas1983,its1986}.
\begin{prob}\label{painrh}
	Find   $M_{P}(k)=M_{P}(k,s,t)$ with properties
	\begin{itemize}
		\item  $M_{P}(k)$ is analytical in $\mathbb{C}\setminus  \Sigma_{P}$.
		\item $M_{P}(k)$ satisfies the jump condition
		\begin{equation*}
			M_{P,+}( k)=M_{P,-}(k)V_{P}(k),
		\end{equation*}
		where
\begin{equation*}		
V_{P}(k)=\begin{cases}
\left(\begin{array}{cc}
1&0\\
pe^{2i(\frac{4}{3}k^3+sk)}&1
\end{array}
\right),\ \ k\in\Sigma^{1}_{P},\\
\left(\begin{array}{cc}
1&re^{-2i(\frac{4}{3}k^3+sk)}\\
0&1
\end{array}
\right),\ \ k\in\Sigma^{2}_{P},\\
\left(\begin{array}{cc}
1&0\\
qe^{2i(\frac{4}{3}k^3+sk)}&1
\end{array}
\right),\ \ k\in\Sigma^{3}_{P},\\
\left(\begin{array}{cc}
1&-pe^{-2i(\frac{4}{3}k^3+sk)}\\
0&1
\end{array}
\right),\ \ k\in\Sigma^{4}_{P},\\
\left(\begin{array}{cc}
1&0\\
-re^{2i(\frac{4}{3}k^3+sk)}&1
\end{array}
\right),\ \ k\in\Sigma^{5}_{P},\\
\left(\begin{array}{cc}
1&-qe^{-2i(\frac{4}{3}k^3+sk)}\\
0&1
\end{array}
\right),\ \ k\in\Sigma^{6}_{P},\\
\end{cases}
\end{equation*}
where $\Sigma^j_P =  \mathbb{R}^+e^{i\left[\frac{\pi}{6}+\frac{(j-1)\pi}{3}\right]}.$ Let $\Sigma_P$ denote the oriented contour consisting of six rays, $$\Sigma_P = \bigcup_{j=1}^6  \Sigma^j_P,$$  see Figure \ref{Sixrays}.
The parameters $p, q$ and $r$ in $V_{P}(k)$  are complex values satisfying the relation
		\begin{align*}
		r=	p+q+pqr.
		\end{align*}
		\item $M_{P}(k)$  satisfies the following asymptotic behaviors
		\begin{align*}
			&M_{P}( k)=I+\mathcal{O}(k ^{-1}),	\quad k \to  \infty,\\
			& M_{P}( k) = \mathcal{O}(1),\quad k \to 0.
		\end{align*}
		
	\end{itemize}
\end{prob}

\begin{figure}[H]
	\begin{center}
		\begin{tikzpicture}[scale=0.9]

			\node[shape=circle,fill=black,scale=0.15] at (0,0) {0};
			\node[below] at (0.3,0.25) {\footnotesize $0$};
			\draw [] (0,-2.5 )--(0,2.5);
			\draw [-latex] (0,0)--(0,1.25);
			\draw [-latex] (0,0 )--(0,-1.25);
			\draw [ ] (0,0 )--(2.5,2);
			\draw [-latex] (0,0)--(1.25,1);
			\draw [] (0,0 )--(2.5,-2);
			\draw [-latex] (0,0)--(1.25,-1);
			\draw [] (0,0 )--(-2.5,2);
			\draw [-latex] (0,0)--(-1.25,1);
			\draw [] (0,0 )--(-2.5,-2);
			\draw [-latex] (0,0)--(-1.25,-1);

			\node at (0.9,1.1) {\footnotesize$\Sigma_P^1$};
			\node at (1.5,-0.5) {\footnotesize$\Sigma_P^6 $};
			\node at (-1.5,0.5) {\footnotesize$\Sigma_P^3$};
			\node at (-0.8,-1) {\footnotesize$\Sigma_P^4$};
			\node at (-0.3,1.2) {\footnotesize$\Sigma_P^2$};
			\node at ( 0.4,-1.2) {\footnotesize$\Sigma_P^5$};
			
			
		\end{tikzpicture}
		\caption{ \footnotesize { The jump contour $\Sigma_P$.}}
		\label{Sixrays}
	\end{center}
\end{figure}
Then
\begin{align}
	u(s) = 2\left(M_{P,1}(s)\right)_{12} = 2 \left(M_{P,1}(s)\right)_{21},\label{up2}
\end{align}
solves the Painlev\'{e}-II equation (\ref{p2}), where
\begin{align*}
	M_{P}(k) = I + \frac{M_{P,1}(s)}{k} + \mathcal{O} \left(k^{-2}\right), \quad k \to \infty.
\end{align*}
For  $\forall q\in i\mathbb{R}$, $|q|<1, p=-q, r=0$, the formula (\ref{up2}) has a global, real solution $u(s)$ of Painlev\'{e}-II equation with the asymptotic behavior
\begin{align*}
	u(s) = a \mathrm{Ai}(s) +\mathcal{O}\left( e^{-(4/3)s^{3/2}}s^{-1/4}\right),\quad s\to +\infty,
\end{align*}
where $a =- \im q$ and $\mathrm{Ai}(s)$ denotes the  Airy function.
Moreover, the subleading coefficient $M_{P,1}(s)$ is given by
\begin{align}\label{posee}
M_{P,1}(s) = \frac{1}{2} \begin{pmatrix} -i\int_s^\infty u(\zeta)^2\mathrm{d}\zeta & u(s) \\ u(s) & i\int_s^\infty u(\zeta)^2\mathrm{d}\zeta \end{pmatrix},
\end{align}
and for each $c > 0$,
\begin{align}
	\sup_{k \in \mathbb{C}\setminus \Sigma_P} \sup_{s \geq -c} |M_P(k,s)|  < \infty.\label{mPbounded}
\end{align}
\vspace{6mm}

 \noindent\textbf{Acknowledgements}

 This work is supported by  the National Natural Science Foundation of China (Grant No. 11671095, 51879045).\vspace{2mm}

    \noindent\textbf{Data Availability Statements}

    The data that supports the findings of this study are available within the article.\vspace{2mm}

    \noindent{\bf Conflict of Interest}

    The authors have no conflicts to disclose.

\end{document}